\theoremstyle{plain}
\newtheorem{lemma}{Lemma}
\newtheorem{theorem}{Theorem}
\newtheorem{corollary}{ Corollary }
\newtheorem{definition}{Definition}
\newtheorem{proposition}{Proposition}
\theoremstyle{definition}
\DeclareMathOperator{\Tr}{Tr}
\begin{document}

\newcommand{\A}{A}
\newcommand{\M}{M}

\title{Analyticity results for the cumulants in a random matrix model}

\author{R. Gurau\footnote{rgurau@cpht.polytechnique.fr; CPHT - UMR 7644, CNRS,
 Ecole Polytechnique, 91128 Palaiseau cedex, France and Perimeter Institute for Theoretical 
Physics, 31 Caroline St. N, ON, N2L 2Y5, Waterloo, Canada.}  \,and
 T. Krajewski\footnote{thomas.krajewski@cpt.univ-mrs.fr; CPT - UMR 7332, CNRS, 
Aix-Marseille Universit\'e and Universit\'e de Toulon, Campus de Luminy, 13228 Marseille Cedex 9, France}}

\maketitle

\abstract{The generating function of the cumulants in random matrix models, as well as the cumulants themselves, can be expanded as asymptotic 
(divergent) series indexed by maps. While at fixed genus the sums over maps converge, the sums over genera do not. 
In this paper we obtain alternative  expansions both for the generating function and for the cumulants that cure this problem. 
We provide explicit and convergent expansions for the cumulants, for the remainders of their
perturbative expansion (in the size of the maps) and for the remainders of their topological expansion (in the genus of the maps). 
We show that any cumulant is an analytic function inside a cardioid domain in the complex plane and we prove that 
any cumulant is Borel summable at the origin.}

\tableofcontents

\section{Introduction}

Random matrix theory \cite{Mehta,Akemann} studies probability laws for matrices. They have been 
introduced more than half a century ago to model the energy spectra of large nuclei 
and have later proven to be ubiquitous in physics and mathematics. Applications to mathematics range from combinatorics of maps 
to free probability while in physics, beyond energy spectra of heavy nuclei, random matrices can be used to describe disordered 
systems and discretized models of random surfaces.  

The application of random matrices to random surfaces and 2d quantum 
gravity \cite{matrix} relies on the combinatorics of maps. The matrix integrals arising in random matrix
theory depend on (at least) two parameters: a coupling constant $\lambda$ and the size of the matrix, $N$.
A \emph{formal} expansion in the parameter $\lambda$ of such matrix integrals yields generating 
functions for maps of arbitrary genus. The coupling constant $\lambda$
measures the size of the map (the number of its edges), while the parameter $1/N$
turns out to measure the genus of the map. 
While this formal expansion is extremely successful 
in enumerating both maps of fixed size and arbitrary genus and maps of fixed genus and arbitrary size,
it does \emph{not} provide an estimation of the matrix integral because it \emph{does not converge}.

This phenomenon is well understood. From a combinatorial standpoint, the divergence of this formal series is due to the proliferation of maps:
while the maps of fixed genus are an exponentially bounded family, maps of arbitrary genus are not.
At the analytical level, this reflects the fact that $\lambda=0$ lies on the boundary of 
the analyticity domain of the generating function. 

One can analyze in some depth these formal power series. Restricting to  a fixed order in $1/N$
one obtains convergent series enumerating maps of fixed genus. This yields the celebrated $1/N$ expansion for random matrices \cite{'tHooft:1973jz}
(see also \cite{Alice,Albeverio,rigorous} for rigorous mathematical results on this expansion). 
The series at fixed genus  exhibit a critical behavior at 
some critical value  $\lambda_c$ of the coupling constant and a \emph{formal} sum over random surfaces of arbitrary genus can be obtained 
by taking the so called double scaling limit $\lambda \to \lambda_c, N\to \infty$ while keeping $(\lambda_c-\lambda)N^{5/4}$ fixed.
The precise status of the $1/N$ series is somewhat involved. Usually the $1/N$ series is taken as an \emph{asymptotic series}: while 
each order in $1/N$ is well understood, the rest term is usually difficult to control. 

Analytical control over the rest term has been 
achieved in the region of strictly convex potential (see for instance \cite{Alice2} and references therein). This region corresponds to a stable perturbation $\Re\lambda \ge 0$ or to 
an unstable but small perturbation (such that the perturbation potential is always dominated by the quadratic part and 
absolute converge of the matrix integral is ensured). However, one would like to exert some same kind of analytic control over the rest term 
of the $1/N$ series also \emph{outside} the region of strictly convex potential.
This is due to the following two facts:
\begin{itemize}
 \item when considering the 
interpretation of a matrix integral as a generating function of maps, $ \Re\lambda \ge 0$ corresponds to 
an \emph{alternating} sum over maps. A genuine sum over maps is obtained only for $ \Re\lambda < 0$. Moreover,
the critical point $\lambda_c$ (at which the fixed genus series become critical) lies far on the negative real 
axis (for bipartite quadrangulations for instance $\lambda_c=-\frac{1}{12}$). In order to study the behavior of the 
matrix integral in the critical regime one needs to control the rest term close to this critical point.
 \item in the region $ \Re\lambda < 0$, instanton effects $e^{-\frac{1}{|\lambda|}}$ are expected to play a very important role.
 As they correspond to non trivial solutions of the classical equations of motion, they correspond precisely to the region where the perturbation
  potential equals the quadratic part, hence outside the strictly convex potential region.
\end{itemize}

In this paper we focus on the analyticity of the cumulants in a specific random matrix model.
Building on results in random tensor theory \cite{Razvannonperturbative,tensornonperturbative}
and on the Loop Vertex Expansion  (LVE) introduced in \cite{LVE} we establish, for any cumulant,
an explicit expansion  which is convergent for $\lambda$ in the cardioid domain:
\[
  \lambda\in \mathbb{C} \;, \qquad  4|\lambda|< \cos^{2}\Big({\frac{\arg\lambda}{2}}\Big) \; .
\]
We further provide explicit convergent expressions for the remainder in the expansion in $\lambda$
and as a by product we prove that any cumulant is Borel summable in $\lambda$ uniformly in $N$.

More importantly, we provide explicit expressions for the remainder in the $1/N$ expansion of any cumulant
which is absolutely convergent in the cardioid domain:
\[
  \lambda\in \mathbb{C} \;, \qquad  12|\lambda|< \cos^{2}\Big({\frac{\arg\lambda}{2}}\Big) \; .
\]

We emphasize that this domain goes well outside the strictly convex potential region. Our paper is 
thus a first step towards the rigorous study of the instanton effects and of the critical regime in matrix models. However,
work still remains to be done: in order to access these effects, one needs to find analytic continuations of our
explicit formulae which hold all the way up to the negative real axis (and up to $-\frac{1}{12}$). 

This paper is divided into four parts. In section \ref{sec:statement} we introduce some notation and 
state our main results. 
In section \ref{sec:intfield} we introduce the intermediate field representation which we subsequently 
use for the proofs of our results which are performed in sections \ref{sec:proof1} and \ref{sec:proof2}.
Some technical details are collected in the appendices.

\section{Statement of the main results}\label{sec:statement}

\label{mainresultsec}
\paragraph{Matrix integral and normalization.}
In this paper, we consider the Gau\ss ian matrix model with a quartic perturbation. The generating function of its cumulants 
is defined by the integral over complex $N\times N$ matrices $M$:
\begin{multline}
{\cal Z}[J,J^{\dagger};\lambda,N]=\\
\frac{\int dM\exp\Big\{
-\Tr(MM^{\dagger})-\frac{\lambda}{2N}\Tr(MM^{\dagger}MM^{\dagger})
+\sqrt{N}\Tr(JM^{\dagger})+\sqrt{N}\Tr(MJ^{\dagger})\Big\}}{\int dM\exp\Big\{ -
\Tr(MM^{\dagger})-\frac{\lambda}{2N}\Tr(MM^{\dagger}MM^{\dagger})
\Big\}} \; .
\label{Mintegral}
\end{multline}

The source $J$ is itself a $N\times N$ complex matrix and $J^{\dagger}$  is its adjoint. The cumulants
of the quartic model are obtained by taking derivatives of $\log {\cal Z}$ with 
respect to $J$ and $J^{\dagger}$.

A Taylor expansion in $\lambda$, $J$ and $J^{\dagger}$, followed by the evaluation of the Gau\ss ian integral, expresses
${\cal Z}$ as a sum over ribbon Feynman graphs (or combinatorial maps). The normalization in $N$ has been chosen in such a way that the 
amplitude of a ribbon graph is $N^{\chi(G)}$, with $\chi(G)$ 
the Euler characteristic of the graph (i.e. the Euler characteristic of a surface of minimal number of handles in which $G$ can be embedded). 
Since we are only interested in the cumulants, we divide the integral by its value at $J=J^{\dagger}=0$.

The measure $dM$ is the standard Lebesgue measure on matrices suitably normalized
in such a way that ${\cal Z}[J,J^{\dagger};\lambda,N]=1$ for $\lambda=0$ and $J=J^{\dagger}=0$,
\begin{equation}
dM=\pi^{N}\prod_{1\leq i,j\leq N}d\text{Re}(M_{ij})d\text{Im}(M_{ij}) \; .
\end{equation}

 The analyticity of ${\cal Z}[J,J^{\dagger};N,\lambda]$ is fairly easy to establish using
 conventional techniques. However, in order to study the analyticity of $\log {\cal Z}[J,J^{\dagger};N,\lambda]$,
 these techniques have to be supplemented by a detailed study of the zeros of  ${\cal Z}[J,J^{\dagger}]$ 
 in the complex domain, which is a harder problem. 

\paragraph{Loop Vertex Expansion (LVE) graphs and their amplitudes}

The LVE is based on combinatorial maps with cilia. A \emph{cilium} is a half edge hooked to a vertex. 
A combinatorial map is a graph with a distinguished cyclic ordering of the half edges incident at each vertex.
Combinatorial maps are conveniently represented as \emph{ribbon graphs} whose vertices are disks and whose edges 
are ribbons (allowing one to encode graphically the ordering of the half edges incident at a vertex).

\begin{definition}[LVE graphs and corners] 
 A \emph{LVE graph} $(G,T)$ is a connected ribbon graph $G$ with labels on its vertices having furthermore:
 \begin{itemize}
  \item a distinguished spanning tree $T\subset G$.
  \item a labeling of the edges of $G$ not in $T$ (loop edges in physics parlance).
  \item at most one cilium per vertex. 
 \end{itemize}

A \emph{LVE tree} is a LVE graph without cycles.
 
A \emph{corner} of a LVE graph $(G,T)$ is a pair of consecutive half edges attached to the same vertex.  
\end{definition}

We denote $K(G)$, $V(G)$, $E(G)$ and $F(G)$ the sets of cilia, vertices, edges  and respectively faces of $G$. 
The edges of $G$ not in $T$ are called \emph{loop edges} and we denote $L(G,T)=E(G)-E(T)$ the set of 
loop edges. The faces of $G$ are partitioned between the faces which do not contain any cilium (which we sometimes call 
internal faces) and the ones which contain at least a cilium which we call \emph{broken faces}.  We denote $B(G)$ the set of broken faces of $G$.
Each broken face corresponds to a puncture in the Riemann surface in which $G$ is embedded, and 
the Euler characteristic of the graph $G$ is:
\begin{equation}
\chi(G)=|V(G)|-|E(G)|+|F(G)|-|B(G)|=2-2g(G)-|B(G)|
\label{Euler:eq}
\end{equation}
 where $|X|$ denotes the cardinality of $X$ and $g(G)$ is the genus of the graph $G$. 

Let us consider a LVE graph $(G,T)$ with vertices labeled $1,\dots |V(G)|$. We associate to every edge 
$e$ of the tree $T$ a \emph{weakening parameter} $t_{e}\in[0,1]$. For any two vertices $i$ and $j$ of the graph 
$G$ we define:
\[
 (C_{T})_{ij}={\inf}\big\{t_{e}\,\big|\,e \text{ in the unique path ${\cal P}^
{T}_{i\leftrightarrow j}$ in $T$ joining $i$ and $j$}\big\}\;,
\]
and the infimum is $1$ if $i=j$. 
We arrange $  (C_{T})_{ij} $ in a (symmetric) $V(G)\times V(G)$ matrix $C_T$. The matrix $C_T$ is a \emph{positive matrix}.
This statement is non trivial and its proof can be found in \cite{BKAR} or \cite{advanced}.

To any real, positive, symmetric $n\times n$ matrix $(C_{ij})_{1\leq i,j\leq n}$ we associate a unitary invariant 
normalized Gau\ss ian 
measure $d\mu_{C}(A)$ on $n$ random $N\times N$  Hermitian matrices $A=(A_{1},\dots,A_{n})$ defined by its covariance: 
\begin{equation}
\int d\mu_{C}(A) \;  A_{i|ab}A_{j|cd}=C_{ij}\,\delta_{ad}\delta_{bc} \;, \qquad \int d\mu_{C}(A)=1 \;,
\label{Gaussian:eq}
\end{equation}
where  $A_{i|ab}$ and $A_{j|cd}$ are the matrix elements of the matrices $A_{i}$ and $A_{j}$. This Gau\ss ian measure can be represented
as a differential operator. Indeed, denoting:
\[ \Tr \left[ \frac{\partial}{\partial A_i} \frac{\partial}{\partial A_j} \right]  
 = \sum_{a,b} \frac{\partial}{\partial A_{i|ab}} \frac{\partial}{\partial A_{j|ba}} \;,
\qquad 
\frac{\partial}{\partial A_{i|ab}}=\frac{1}{2}\bigg(\frac{\partial}{\partial \text{Re}A_{i|ab}}-\mathrm{i}\frac{\partial }{\partial \text{Im}A_{i|ab}}\bigg) \; , 
\]
the Gau\ss ian expectation of any function $F(A_1,\dots A_n)$ is:
\[
\int d\mu_{C}(A) \; F(A_1,\dots A_n) =
\left[ e^{\frac{1}{2} \sum_{ij} C_{ij} 
\Tr \left[ \frac{\partial}{\partial A_i} \frac{\partial}{\partial A_j} \right] } F(A_1,\dots A_n)\right]_{A_i=0} \; .
\]

To every loop edge $e\in L(G,T)$ we associate a parameter $s_{e}\in[0,1]$.
As the loop edges are labeled $1,\dots |L(G,T)|$, we will denote $s_1,\dots s_{|L(G,T)|}$ the parameter
associated to the edge $1, \dots |L(G,T)|$. Note that for the loop edge $e=(i,j)$ the parameter $s_e$ 
and the weakening factor $(C_T)_{ij}$ are completely unrelated.
 
We associate to every LVE graph $(G,T)$ the \emph{amplitude} ${\cal A}_{(G,T)}[J,J^{\dagger},\lambda,N]$
defined as a Gau\ss ian integral over $|V(G)|$ Hermitian matrices $A=(A_{i})_{1\leq i\leq |V(G)|}$ (each one of size $N\times N$):
\begin{align}\label{LVEamplitude}
{\cal A}_{(G,T)}[J,J^{\dagger};\lambda,N]=&\frac{(-\lambda)^{|E(G)|}N^{|V(G)|-|E(G)|}}{|V(G)|!}
\mathop{\int}\limits_{1\geq s_{1}\geq\cdots\geq s_{|L(G,T)|}\geq 0}\,\prod_{e\in L(G,T)}ds_{e} \crcr
& \times \mathop{\int}\limits_{[0,1]} \prod_{e\in E(T)} dt_{e}
  \left( \prod_{e=(i,j)\in L(G,T)}\mathop{\text{inf}}\limits_{e'\in P_{i\leftrightarrow j}^{T}} t_{e'}  \right) \crcr
& \times \int d\mu_{s_{|L(G,T)|}C_{T}}(A)\prod_{f\in F(G)}\Tr\bigg\{\mathop{\prod}\limits_{c\in\partial f}^{\longrightarrow}
\bigg(1-\text{i}\sqrt{\frac{\lambda}{N}}\,A_{i_{c}}\bigg)^{-1}(JJ^{\dagger})^{\eta_{c}}\bigg\} \; ,
\end{align}
where:
\begin{itemize}
\item ${\displaystyle\mathop{\prod}\limits_{c\in\partial f}^{\longrightarrow}}$ is the oriented product around the corners $c$ 
on the boundary $\partial f$ of the face $f$.
\item $i_{c}$ is the label of the vertex the corner  $c$ belongs to.
\item $\eta_{c}=1,0$ depending on whether $c$ is followed by a cilium (1) or not (0).
\end{itemize}

We refer to appendix \ref{LVEexamples:app} for some example of LVE graphs and their amplitudes.

The Gaussian measure $d\mu_{s_{|L(G,T)|}C_{T}}(A) $ can also be written as the differential operator:
\[
\int d\mu_{s_{|L(G,T)|}C_{T}}(A) F(A) = \left[  e^{\frac{ s_{|L(G,T)| } }{2} \sum_{ij}[ \inf_{(k,l)\in P^T_{i\leftrightarrow j} } t_{kl} ]
\Tr \; \left[ \frac{\partial}{\partial A_i} \frac{\partial}{\partial A_j} \right] } \; F(A) \right]_{ A_i=0 } \;.
\]

In the case of a LVE graph which is just a tree, $G=T$, we will use the shorthand notation
${\cal A}_{T}[J,J^{\dagger};\lambda,N] \equiv {\cal A}_{(T,T)}[J,J^{\dagger};\lambda,N] $.
The amplitude simplifies drastically in this case: there are no integrals over the $s$ parameters (and $s_{|L(G,T)|}$ is set to $1$),
the product over infima in the second line is empty (hence set to $1$), and only one trace is obtained (as trees have only one face).

\paragraph{Constructive expansions of the generating function}

Let ${\cal C}$  be the cardioid domain  in the complex plane (see figure \ref{cardioidpic}):
 \begin{equation}{\cal C}=\bigg\{\lambda\in{ \mathbb{C} }
 \quad\text{with}\quad 
 4|\lambda|< \cos^{2}\Big({\frac{\arg\lambda}{2}}\Big)\bigg\} \; ,
 \label{cardioid} 
\end{equation}
where we choose the determination $-\pi<\arg\lambda<\pi$ of the argument (hence the argument has a cut on the negative real axis). 

\begin{figure}[htb]
\begin{center}
\includegraphics[width=5cm]{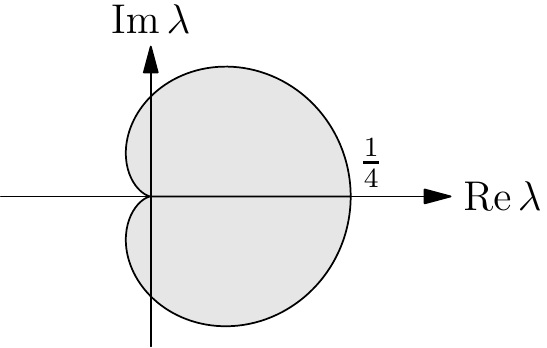}
\caption{analyticity domain in the complex $\lambda$ plane}
\label{cardioidpic}
\end{center}
\end{figure}

Our first result is a convergent expansion of $\log{\cal Z}[J,J^{\dagger};\lambda,N]$ as a sum over LVE trees.
\begin{theorem}[Tree expansion]
\label{treeexpansion}
For any $\lambda\in {\cal C}$, there exists $\epsilon_{\lambda}>0$ depending on $\lambda$ such that for $\|JJ^{\dagger}\|<\epsilon_{\lambda}$ the 
logarithm of ${\cal Z}[J,J^{\dagger};\lambda,N]$ is given by the following absolutely convergent expansion:
\begin{equation}
\log{\cal Z}[J,J^{\dagger};\lambda,N]=\sum_{T
\,\text{LVE tree}}
{\cal A}_{T}[J,J^{\dagger};\lambda,N] \;. 
\label{treeexpansion:eq}
\end{equation}
\end{theorem}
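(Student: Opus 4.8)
The plan is to derive the expansion via the Loop Vertex Expansion machinery: first rewrite $\mathcal Z$ in the intermediate field representation, then apply the BKAR (Brydges--Kennedy--Abdesselam--Rivasseau) forest formula to extract $\log\mathcal Z$ as a sum over spanning trees, and finally control convergence uniformly in $N$ for $\lambda\in\mathcal C$. The starting point is the Hubbard--Stratonovich transformation: the quartic term $\exp\{-\frac{\lambda}{2N}\Tr(MM^\dagger MM^\dagger)\}$ is decoupled by introducing an auxiliary Hermitian matrix $A$, so that after performing the now-Gaussian $M$ integral one obtains $\mathcal Z$ as a Gaussian integral over $A$ of a product of ``loop vertex'' factors of the form $\Tr\log\bigl(1-\mathrm i\sqrt{\lambda/N}\,A\bigr)$ together with the source insertions $\sqrt N\,J,\sqrt N\,J^\dagger$ resummed into resolvents $(1-\mathrm i\sqrt{\lambda/N}A)^{-1}$. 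This is precisely the representation to be set up in section \ref{sec:intfield}; I would quote it and take it as the input.

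Next I would expand $\mathcal Z=\int d\mu(A)\,e^{W(A)}$ where $W$ is a single connected vertex functional, write $e^{W}$ as a formal sum $\sum_n \frac{1}{n!}W^n$ over $n$ copies $A_1,\dots,A_n$ of the intermediate field (replicas), with the Gaussian measure having the degenerate covariance $C_{ij}\equiv 1$ coupling all replicas. Then the BKAR forest formula is applied to the replica covariance: one interpolates $C_{ij}\to C_{ij}(t)$ with weakening parameters $t_e\in[0,1]$ on the edges $e=(i,j)$ of the complete graph on $\{1,\dots,n\}$, and the formula expresses the result as a sum over forests, each edge carrying a derivative $\Tr[\partial_{A_i}\partial_{A_j}]$ and a $t$-integral, with the residual covariance $C_T(t)$ being exactly the infimum matrix $(C_T)_{ij}$ defined in the excerpt; positivity of $C_T(t)$ (cited from \cite{BKAR,advanced}) guarantees the remaining Gaussian integral is well defined. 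Taking the logarithm restricts the sum from forests to \emph{trees}, which connect all $n$ vertices; the combinatorial factor $1/n!=1/|V(G)|!$ survives, and the derivatives acting on the loop-vertex factors generate exactly the resolvent structure, the faces, the cilia (from $J,J^\dagger$ insertions) and the power $(-\lambda)^{|E|}N^{|V|-|E|}$ appearing in \eqref{LVEamplitude} with $G=T$. Matching these Feynman rules term by term identifies the right-hand side of \eqref{treeexpansion:eq}.

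The main obstacle is convergence, i.e. showing the tree sum is absolutely convergent uniformly in $N$ for $\lambda\in\mathcal C$. Here the key analytic input is a bound on the loop-vertex resolvent: for $\lambda\in\mathcal C$ and $A$ Hermitian one has $\|(1-\mathrm i\sqrt{\lambda/N}\,A)^{-1}\|\le (\cos(\tfrac{\arg\lambda}{2}))^{-1}$, which is exactly where the cardioid condition $4|\lambda|<\cos^2(\tfrac{\arg\lambda}{2})$ enters (the factor $4$ accommodates the two resolvents per edge derivative together with the Gaussian contraction). Each tree derivative produces a bounded number of resolvent insertions per vertex, the $t$-integrals are over $[0,1]$ and contribute at most $1$, and the residual Gaussian integral of a product of traces of bounded operators is estimated by a Cauchy--Schwarz / Gaussian-moment argument; the trace over $N\times N$ blocks together with the $N^{|V|-|E|}=N^{-(|V|-1)}$ prefactor for a tree on $|V|$ vertices exactly compensates the $N$-factors from the face traces, yielding an $N$-independent bound. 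One is then left with a sum over trees on $n$ vertices whose number is $n^{n-2}$ (Cayley), times $1/n!$, times a geometric factor per edge proportional to $|\lambda|/\cos^2(\tfrac{\arg\lambda}{2})<\tfrac14$; by Stirling $n^{n-2}/n!\sim e^n/n^{5/2}$, so absolute convergence holds whenever this geometric factor is $<1/e$, which the cardioid guarantees with room to spare. The smallness condition $\|JJ^\dagger\|<\epsilon_\lambda$ is used only to ensure the source insertions do not spoil the per-vertex bound; $\epsilon_\lambda$ can be taken proportional to $\cos^2(\tfrac{\arg\lambda}{2})$. Finally, since both sides are analytic in $\lambda$ on $\mathcal C$ and the expansion reproduces the perturbative series order by order, the identity \eqref{treeexpansion:eq} follows on all of $\mathcal C$.
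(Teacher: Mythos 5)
Your derivation follows exactly the paper's route (intermediate field representation, replica trick with the degenerate covariance $C_{ij}=1$, BKAR forest formula, forest-to-tree lemma for the logarithm, and the resolvent bound of Lemma \ref{boundresolvent}), so the structural part is fine. The genuine gap is in the convergence estimate. You count the trees by Cayley's formula $n^{n-2}$ and assert that ``each tree derivative produces a bounded number of resolvent insertions per vertex.'' But the objects summed in \eqref{treeexpansion:eq} are \emph{LVE trees}, i.e.\ plane trees with a corner structure and cilia: when $d$ edge-derivatives $\Tr[\partial_{A_i}\partial_{A_j}]$ hit the same loop vertex $\Tr\log\bigl(1-\mathrm i\sqrt{\lambda/N}A_i\bigr)$, they generate of order $(d-1)!$ distinct terms (one for each cyclic arrangement of the $d$ resolvents around that vertex), and the cilia can sit on any corner. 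Summed over the BKAR trees, these unbounded degree multiplicities are exactly what produces the LVE tree count of Lemma \ref{coutingtrees:lem}, ${\cal N}(n,k)=\frac{(2n+k-1)!\,(n+1)!}{(n+k)!\,(n+1-k)!\,k!}$, which grows like $4^{n}\,n!$ (times polynomial and cilia factors), whereas $n^{n-2}\sim e^{n}n!/n^{5/2}$. After dividing by the symmetry factor $1/|V(T)|!$, the geometric ratio of the tree series is $\frac{4|\lambda|}{\cos^{2}(\arg\lambda/2)}\bigl(1+\frac{2\|JJ^{\dagger}\|}{\cos(\arg\lambda/2)}\bigr)$, so convergence holds \emph{precisely} because of the cardioid condition $4|\lambda|<\cos^{2}(\arg\lambda/2)$ (with $\|JJ^{\dagger}\|<\epsilon_{\lambda}$); there is no ``room to spare,'' and the factor $4$ is the exponential growth rate of the corner (plane-tree) structures, not a bookkeeping of ``two resolvents per edge plus the Gaussian contraction'' (each corner contributes one factor $1/\cos(\arg\lambda/2)$, and a tree with $n$ edges and $k$ cilia has $2n+k$ corners). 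As written, your Cayley-based bound undercounts by a factor growing like $(4/e)^{n}$ together with the unbounded per-vertex factorials, so it does not establish absolute convergence on all of ${\cal C}$.

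Two smaller points. First, the $N$ bookkeeping: for a tree $N^{|V|-|E|}=N$ and the single face trace yields another factor $N$, so the bound on the tree sum scales as $N^{2}$ (as in the paper); what is uniform in $N$ is the convergence domain, not the size of the bound, and your statement that the prefactor is $N^{-(|V|-1)}$ and that one gets an ``$N$-independent bound'' is incorrect. Second, the closing argument ``both sides are analytic on ${\cal C}$ and agree order by order in perturbation theory, hence they coincide'' is not valid on its own: $\lambda=0$ lies on the boundary of ${\cal C}$, and agreement of asymptotic expansions at a boundary point does not determine an analytic function (this non-uniqueness is the very reason the paper proves Borel summability). This is harmless here, since the BKAR rearrangement of the absolutely convergent replica expansion, followed by the forest-to-tree lemma, already yields the identity directly once absolute convergence of the tree sum is established.
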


In order to compare the tree expansion of Theorem \ref{treeexpansion} with the conventional perturbative expansion, it is 
necessary to further expand some of the loop edges. The following theorem is obtained by recursively adding loop edges to the 
LVE trees.

\begin{theorem}[Perturbative expansion with remainder]
\label{perturbativegenerating:thm}
For any $\lambda \in {\cal C}$, there exists $\epsilon_{\lambda}>0$ depending on $\lambda$ such that for $\|JJ^{\dagger}\|<\epsilon_{\lambda}$:
\begin{align}
\log{\cal Z}[J,J^{\dagger};\lambda,N]   = &
\sum_{G\text{ ciliated ribbon graph}\atop |E(G)|\leq n}
\frac{(-\lambda)^{|E(G)|}N^{\chi(G)}}{|\text{Aut}(G)|}\prod_{f\in B(G)} \Tr \Big[\big(JJ^{\dagger}\big)^{c(f)}\Big] \crcr
& +  {\cal R}_{n}[J,J^{\dagger};\lambda,N] \; ,
\label{perturbativeexpansion:eq}
\end{align}
where $c(f)$is the number of cilia in the broken face $f$ and the perturbative remainder at order $n$
is a convergent sum over LVE  graphs with at least $n+1$ edges and at most $n+1$ loop edges
\begin{equation}
{\cal R}_{n}[J,J^{\dagger};\lambda,N]=
\sum_{(G,T)\text{ LVE graph}\atop |E(G)|= n+1}
{\cal A}_{(G,T)}[J,J^{\dagger};\lambda,N]
+
\sum_{T\text{ LVE tree}\atop |E(T)|\geq n+2}
{\cal A}_{T}[J,J^{\dagger};\lambda,N] \; .
\end{equation}
\end{theorem}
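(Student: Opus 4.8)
The plan is to derive Theorem~\ref{perturbativegenerating:thm} from the tree expansion of Theorem~\ref{treeexpansion} by further expanding each tree amplitude in its (future) loop edges and truncating once a total of $n+1$ edges has been reached. Since the series $\sum_{T}{\cal A}_{T}$ of Theorem~\ref{treeexpansion} converges absolutely, it may be split and rearranged freely, and I would first split it according to $|E(T)|$. Trees with $|E(T)|\ge n+2$ are left untouched and furnish the second sum in ${\cal R}_{n}$. A tree with $|E(T)|=n+1$ is already an LVE graph $(T,T)$ with $n+1$ edges and no loop edges, so ${\cal A}_{T}={\cal A}_{(T,T)}$ is one of the terms of the first sum in ${\cal R}_{n}$ and is again kept as is. The real content is then to show that, for every LVE tree $T$ with $\ell:=|E(T)|\le n$,
\[
{\cal A}_{T}\;=\;\sum_{m=0}^{\,n-\ell}\ \sum_{\substack{G\supset T\\ |L(G,T)|=m}} a_{(G,T)}\;+\;\sum_{\substack{G\supset T\\ |L(G,T)|=n+1-\ell}} {\cal A}_{(G,T)}\;,
\]
where the inner sums run over graphs $G$ obtained from $T$ by adding the indicated number of labelled loop edges and $a_{(G,T)}$ is a ``fully contracted'' contribution to be resummed over spanning trees below.

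To carry out this expansion I would apply the loop vertex construction a second time. In ${\cal A}_{T}$ one meets the Gaussian integral $\int d\mu_{C_{T}}(A)\,\prod_{f\in F(T)}\Tr\{\cdots\}$; writing it as the heat-kernel operator $\exp\big(\tfrac12\sum_{ij}(C_{T})_{ij}\Tr[\partial_{A_i}\partial_{A_j}]\big)$ acting at $A=0$, inserting a scaling parameter in front of the covariance, and iterating Taylor's formula to first order $p:=n+1-\ell$ times --- the $k$-th step expanding the residual Gaussian integral in a fresh parameter $s_k$ with upper endpoint $s_{k-1}$ --- one produces main terms indexed by the number $m=0,\dots,p-1$ of operators $\Tr[\partial_{A_i}\partial_{A_j}]$ that have already acted (the remaining Gaussian integral then being performed at $A=0$), plus a remainder in which exactly $p$ such operators have acted while a residual Gaussian integral with covariance $s_{p}C_{T}$ and the ordered simplex integral $1\ge s_{1}\ge\cdots\ge s_{p}\ge 0$ survive. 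The key local fact is that each $\Tr[\partial_{A_i}\partial_{A_j}]$ acting on a product of traces of resolvents $(1-\mathrm i\sqrt{\lambda/N}\,A_{i})^{-1}$ splits one resolvent carrying $A_{i}$ and one carrying $A_{j}$ and recombines the traces precisely as the insertion of a ribbon loop edge between vertices $i$ and $j$ does --- reorganising the resolvents around the faces of $G=T\cup\{e\}$ rather than those of $T$ --- while supplying a factor $(\mathrm i\sqrt{\lambda/N})^{2}=-\lambda/N$. Comparing the surviving $s$- and $t$-integrals, the loop-edge weights (which by the very definition of $C_{T}$ equal $\inf_{e'\in P^{T}_{i\leftrightarrow j}}t_{e'}$) and the combined powers of $\lambda$ and $N$ with \eqref{LVEamplitude}, one identifies the remainder with $\sum_{G}{\cal A}_{(G,T)}$ over the $p$-loop-edge extensions of $T$, and the $m$-th main term with $\sum_{G}a_{(G,T)}$ over the $m$-loop-edge extensions, where $a_{(G,T)}$ is the same expression with all remaining resolvents set to $1$.

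Reassembling and summing over all LVE trees $T$: the untouched trees give $\sum_{T:|E(T)|\ge n+2}{\cal A}_{T}$; all LVE graphs $(G,T)$ with $|E(G)|=n+1$, whether they arise as remainders of trees with $\ell\le n$ or as trees with $\ell=n+1$, give $\sum_{(G,T):|E(G)|=n+1}{\cal A}_{(G,T)}$; together these two sums are exactly ${\cal R}_{n}$. Absolute convergence of ${\cal R}_{n}$ on ${\cal C}$ for $\|JJ^{\dagger}\|<\epsilon_{\lambda}$ is then immediate: the first sum is finite --- there are finitely many ciliated ribbon graphs with $n+1$ edges, and each ${\cal A}_{(G,T)}$ is a convergent integral over compact parameter domains of an integrand made Gaussian-integrable by $\lambda\in{\cal C}$ (by the per-amplitude bounds already established in the proof of Theorem~\ref{treeexpansion}) --- while the second sum is a tail of the absolutely convergent tree series. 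The rearrangement of $\sum_T{\cal A}_T$ performed above is legitimate for the same reason, and the per-tree manipulations are exact (finite Taylor expansions in finitely many parameters).

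It remains to identify the collected main terms with the perturbative part, that is, to prove that for each fixed ciliated ribbon graph $G$ with $|E(G)|\le n$,
\[
\sum_{T\subset G\ \text{spanning}} a_{(G,T)}\;=\;\frac{(-\lambda)^{|E(G)|}N^{\chi(G)}}{|\mathrm{Aut}(G)|}\prod_{f\in B(G)}\Tr\big[(JJ^{\dagger})^{c(f)}\big]\;.
\]
This is the statement that, once only finitely many loop edges are extracted and the residual Gaussian integral is carried out, the loop vertex expansion collapses back onto ordinary Feynman perturbation theory for the quartic matrix model: the $t$- and $s$-integrals evaluate to rationals, the weighted sum over spanning trees of $G$ telescopes to $1/|\mathrm{Aut}(G)|$ --- after absorbing the $1/|V(G)|!$ of \eqref{LVEamplitude}, the labellings of vertices and of loop edges, and the volume $1/|L(G,T)|!$ of the ordered simplex --- by a matrix-tree type identity, and the Wick count produces one factor $N$ per internal face, so that together with the explicit $N^{|V(G)|-|E(G)|}$ one recovers $N^{\chi(G)}$, each broken face $f$ contributing $\Tr[(JJ^{\dagger})^{c(f)}]$. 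I expect this last combinatorial identification to be the main obstacle; it is, however, in essence the familiar fact that the LVE resums perturbation theory for this model --- with the cilia and the $J$-insertions carried along inertly --- and can be proved by the same bijective/matrix-tree arguments.
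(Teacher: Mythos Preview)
Your proposal is correct and follows essentially the same route as the paper: iterated first-order Taylor expansion of the Gaussian measure in parameters $s_1\ge\cdots\ge s_p$ generates labelled loop edges on each tree, the $p$-loop remainders together with the trees $|E(T)|=n+1$ and the tail $|E(T)|\ge n+2$ assemble into ${\cal R}_n$, and the fully contracted terms resum to the Feynman expansion. The paper packages this as an induction on $n$ (expanding one further loop edge from the remainder at each step) rather than fixing the depth per tree at the outset, but the content is identical.

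One correction on the step you rightly flag as the obstacle: the identity that collapses the sum over spanning trees is not a matrix-tree type statement. What is actually used is the Hepp-sector lemma
\[
\sum_{T\subset G\ \text{spanning}}\ \int_{[0,1]^{|E(T)|}}\!\prod_{e\in E(T)}dt_e\ \prod_{e=(i,j)\in L(G,T)}\inf_{e'\in P^T_{i\leftrightarrow j}}t_{e'}\;=\;1\,,
\]
whose proof is that the $t$-integral computes the fraction of total orderings of $E(G)$ for which $T$ is the Kruskal-leading tree. The $1/|L(G,T)|!$ from the ordered $s$-simplex kills the loop-edge labels, and the passage from vertex-labelled to unlabelled graphs via $1/|V(G)|!$ produces the $1/|\mathrm{Aut}(G)|$. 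Also, the factor $N$ per internal face is not a Wick count: once $A_i=0$ all resolvents are the identity and each unbroken face contributes $\Tr[\mathbb{1}]=N$.
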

Note that the first term in \eqref{perturbativeexpansion:eq} involves a sum over ribbon graphs (not LVE graphs) and reproduces the perturbative expansion over maps. 
The remainder is made of the more involved LVE graphs and its amplitude involves further non trivial Gau\ss ian integrations.
In particular, the ribbon graphs in the perturbative expansion do not carry labels on their vertices: this is the origin of the
factor $\frac{1}{|\text{Aut}(G)|}$ (where $\text{Aut}(G)$ is the cardinal of the group of permutations of the labels on the vertices that 
preserves the adjacency relations of the graph). Alternatively, one could also work with labeled ribbon graphs and divide by $|V(G)|!$. 

Since we are dealing with random matrices of size  $N$, it is also possible to organize the expansion in powers of $\frac{1}{N}$. Such 
an expansion is governed by the genus of the ribbon graphs, as a graph with Euler characteristic $\chi(G)$ scales like $N^{\chi(G)}$. 
Contrary to the standard perturbative expansion, the expansion over graphs of fixed genus $g$ has a finite ($\frac{1}{12}$) radius of convergence,
as can be easily seen from their asymptotic behavior \cite{SchaefferChapuyMarcus}. In particular, $\lambda_{c}=-\frac{1}{12}$ is the critical point, 
instrumental in constructing the double scaling limit.
This motivates the introduction of the following cardioid  (see figure \ref{cardioidtopological:pic}):
 \begin{equation}\widetilde{{\cal C}}=\bigg\{\lambda\in{ \mathbb{ C} }
 \quad\text{with}\quad 
 12|\lambda|< \cos^{2}\Big({\frac{\arg\lambda}{2}}\Big)\bigg\} \; .
 \label{cardioidtopological}
\end{equation}
The shift from the factor $4$ for ${\cal C}$ to $12$ for $\widetilde{{\cal C}} $ reflects the fact that the radius of convergence of 
the sum over ribbon graphs of fixed genus is $\frac{1}{12}$ while the radius of convergence of 
the sum over trees is $\frac{1}{4}$. 

\begin{figure}[htb]
\centerline{\includegraphics[width=7cm]{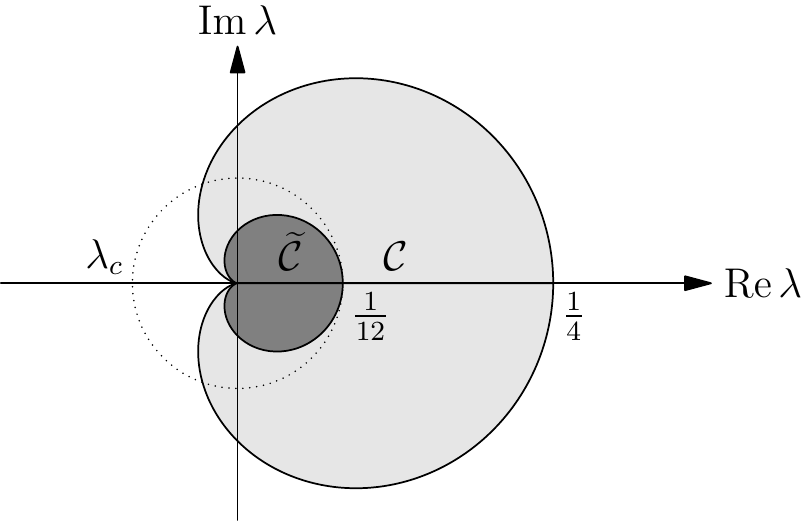}}
 \label{cardioidtopological:pic}
\caption{Analyticity domain of the topological expansion}
\end{figure}

\begin{theorem}[Topological expansion with remainder]
\label{topologicalexpansion:thm}
For any $\lambda \in \widetilde{{\cal C}}$, there exists $\epsilon_{\lambda}>0$ depending on $\lambda$ such that for $\|JJ^{\dagger}\|<\epsilon_{\lambda}$
the logarithm of ${\cal Z}[J,J^{\dagger};\lambda,N]$ is:
\begin{align}
\log{\cal Z}[J,J^{\dagger};\lambda,N]= &
\Bigg(
\sum_{G\text{ ciliated ribbon graph}\atop g(G)\leq g}
\frac{(-\lambda)^{|E(G)|}N^{2-2g(G)-|B(G)|}}{|\text{Aut}(G)|}\prod_{f\in B(G)}\Tr\Big[\big(JJ^{\dagger}\big)^{c(f)}\Big]\Bigg) \crcr
 & + \widetilde{{\cal R}}_{g}[J,J^{\dagger};\lambda,N] \; ,
\label{topologicalexpansion:eq}
\end{align}
where the order $g$ topological remainder is given by the following absolutely convergent expansion:
\begin{equation}
\widetilde{{\cal R}}_{g}[J,J^{\dagger};\lambda,N]=
\sum_{(G,T)\,\text{LVE graph with}\atop\text{$g(G)=g+1$ and $g(G-e_{L(G,T)})=g$}}
{\cal A}_{(G,T)}[J,J^{\dagger},\lambda,N] \; ,
\end{equation}
where $G-e_{L(G,T)}$ is the graph obtained by removing the loop edge with the highest label.
\end{theorem}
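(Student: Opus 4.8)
The plan is to start from the tree expansion of Theorem \ref{treeexpansion} and perform a controlled re-summation: instead of expanding \emph{all} loop edges (as in Theorem \ref{perturbativegenerating:thm}), one expands only those loop edges that are needed to reach genus $g+1$. Concretely, I would set up an interpolation/Taylor-with-integral-remainder argument in a single loop parameter: given an LVE tree $T$, repeatedly apply the identity that trades a factor in the integrand for a sum of a new loop edge plus a remainder with one more $s$-parameter, exactly as in the proof of Theorem \ref{perturbativegenerating:thm}, but stop the recursion along each branch as soon as the genus of the underlying ribbon graph $G$ (obtained by adding the loop edges created so far) jumps from $g$ to $g+1$. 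Since adding a single edge changes the Euler characteristic by at most $2$ and the genus by at most $1$ (edges are either nonseparating, raising $g$ by one, or they merge faces/create broken faces), the recursion terminates: every term produced is either (i) a ribbon graph of genus $\le g$ with no leftover $s$-integration — these assemble into the explicit first sum of \eqref{topologicalexpansion:eq}, after the standard passage from labeled LVE graphs to unlabeled ribbon graphs weighted by $\tfrac{1}{|\mathrm{Aut}(G)|}$ and using $\chi(G)=2-2g(G)-|B(G)|$ from \eqref{Euler:eq}; or (ii) an LVE graph $(G,T)$ whose last (highest-labeled) loop edge $e_{L(G,T)}$ is precisely the one that pushed the genus to $g+1$, i.e. $g(G)=g+1$ and $g(G-e_{L(G,T)})=g$ — these are exactly the terms in $\widetilde{{\cal R}}_g$.

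The second, and more substantial, half is the convergence of $\widetilde{{\cal R}}_g$. Here I would reuse the bounds already developed for the constructive expansions: in the intermediate field / LVE representation each resolvent $(1-\mathrm{i}\sqrt{\lambda/N}\,A_{i_c})^{-1}$ is bounded in norm by $|\cos(\tfrac{\arg\lambda}{2})|^{-1}$ on the cardioid (this is the standard LVE resolvent bound that produces the $\cos^2$ in \eqref{cardioid}), the Gau\ss ian integral over the $A_i$ is performed via the differential-operator form of $d\mu_{s C_T}$, each contraction brings a factor $\|C_T\|\le 1$ together with a $\tfrac1N$, and the $\tfrac{1}{|V(G)|!}$ together with the tree-counting (Cayley / BKAR forest bound) tames the sum over trees. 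The new input compared to Theorem \ref{treeexpansion} is the constraint $g(G)=g+1$: one must show that restricting to ribbon graphs of \emph{fixed} genus $g+1$ (with the last loop edge marked) improves the radius of convergence from $\tfrac14$ to $\tfrac1{12}$. This is where I would invoke the asymptotic enumeration of maps of fixed genus \cite{SchaefferChapuyMarcus}: the number of (rooted) ribbon graphs of genus $g+1$ with $n$ edges grows like $C_g\, n^{\frac{5(g+1)-?}{2}} 12^{n}$ — polynomial in $n$ times $12^n$ — so the geometric factor $(12|\lambda|/\cos^2(\tfrac{\arg\lambda}{2}))^n$ from the amplitude bound is summable precisely on $\widetilde{{\cal C}}$, the leftover $s$-simplex contributing $\tfrac{1}{(\#\text{loop edges})!}$ and the residual tree attachments being controlled as before.

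The main obstacle I anticipate is the bookkeeping that makes the recursion \emph{clean}: one must check that stopping the loop-edge expansion "the first time the genus increases" is well-defined independently of the order in which edges are added, so that the output terms are in bijection with $\{(G,T): g(G)=g+1,\ g(G-e_{L(G,T)})=g\}$ with the correct multiplicity, and that no double counting occurs between branches — this is the combinatorial heart, and it is handled by the same labeling/ordering trick (loop edges carry labels, the last one is distinguished) that appears in the LVE literature \cite{LVE,Razvannonperturbative,tensornonperturbative}. A secondary technical point is uniformity in $N$: since every added edge carries a compensating $N^{-1}$ relative to the $N^{\chi}$ of the fixed-genus graphs, and the resolvent bound is $N$-independent, all estimates are uniform in $N$, but one should verify that the $J,J^{\dagger}$ insertions (which come with $\sqrt N$ but are reorganized into $\Tr[(JJ^\dagger)^{c(f)}]$ per broken face) do not spoil this; choosing $\epsilon_\lambda$ small enough absorbs them exactly as in Theorems \ref{treeexpansion} and \ref{perturbativegenerating:thm}.
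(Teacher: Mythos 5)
Your overall strategy is the paper's: expand loop edges on the LVE trees exactly as in the proof of Theorem \ref{perturbativegenerating:thm}, but with a stop rule triggered when the genus reaches $g+1$, then control the genus-$(g+1)$ remainder with the resolvent bound of Lemma \ref{boundresolvent}, the Hepp-sector identity that removes the choice of spanning tree, and the Chapuy--Marcus--Schaeffer asymptotics $\widetilde{\cal N}(g,n)\sim C_g 12^n n^{\frac52(g-1)}$, which is precisely where the cardioid $\widetilde{\cal C}$ with the factor $12$ enters. So the architecture is right.

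There is, however, a genuine gap in the first half: your claim that ``the recursion terminates'' is false, and the argument built on it skips the part of the proof that actually requires work. The stop rule only halts a branch when the genus jumps to $g+1$; loop edges that do \emph{not} raise the genus (e.g.\ planar additions) can be added indefinitely, so the explicit terms of genus $\leq g$ form an \emph{infinite} family (the first sum in \eqref{topologicalexpansion:eq} has no bound on $|E(G)|$) and no finite recursion produces them all. The paper therefore truncates the expansion at order $n$, which generates, besides the wanted remainder $\widetilde{\cal R}_{g,n}$ (genus $g+1$, last loop edge responsible for the jump), two further leftover families: LVE graphs with exactly $n+1$ edges and genus $\leq g$ (on which the algorithm has not yet finished), and unexpanded LVE trees with at least $n+2$ edges. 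One must then prove, for $\lambda\in\widetilde{\cal C}$ and $\|JJ^{\dagger}\|<\epsilon_\lambda$, that these two families vanish as $n\to\infty$ at fixed $N$ (the paper bounds them by $N^2\big(\sum_{h\le g}C'_h N^{-2h}(n+1)^{\frac52(h-1)}\big)(12\xi)^{n+1}$ and by the tail of the convergent tree sum, respectively), and that both the explicit sum over genus-$\leq g$ graphs and $\widetilde{\cal R}_g$ converge absolutely, again via the $12^n$ growth of fixed-genus maps combined with the cilia-counting factor $\binom{2n}{k}$. Your proposal supplies the bound only for the genus-$(g+1)$ remainder; without the cutoff-and-limit step and the control of the two intermediate remainder classes, the identity \eqref{topologicalexpansion:eq} is not established. (The combinatorial bookkeeping you flag as the main obstacle is, by contrast, unproblematic: it is handled by the ordered $s$-parameters and loop-edge labels exactly as in the order-by-order expansion.)
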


This expansion is also obtained by a recursive addition of loop edges to a tree, but with a stop rule which takes into account the topology:
one iteratively adds loop edges as long as the genus of the LVE graph $(G,T)$ does not exceed $g$. 

\paragraph{Cumulants.}
The main objects of interest in this paper are the cumulants (connected correlation functions). 

\begin{definition}[Cumulants]
\label{cumulantsdef}
The \emph{cumulant of order $2k$} is the derivative:
\begin{equation}
{{K}}_{a_{1}b_{1}c_{1}d_{1},\dots,a_{k}b_{k}c_{k}d_{k}}(\lambda,N)=
\frac{\partial^{2}}{\partial J^{\ast}_{a_{1}b_{1}}\partial J^{}_{c_{1}d_{1}}}\cdots
\frac{\partial^{2}}{\partial J^{\ast}_{a_{k}b_{k}}\partial J^{}_{c_{k}d_{k}}}\log{\cal Z}[J,J^{\dagger};\lambda,N]\bigg|_{J=J^{\dagger}=0} \;. 
\end{equation}
\end{definition}
Here $J_{ab}^{\ast}$ is the complex conjugate of $J_{ab}$, so that $(J^{\dagger})_{ab}=J^{\ast}_{ba}$. Note that all the derivatives 
of $\log{\cal Z}$ which are not of this form vanish. For example, the order 2 cumulant is:
\begin{equation}
K_{abcd}(\lambda,N)=\frac{\partial^{2}}{\partial J^{\ast}_{ab}\partial J^{}_{cd}}\log{\cal Z}[J,J^{\dagger}]
=N\Big(
\langle M_{ab}M^{\ast}_{cd}\rangle-\langle M_{ab}\rangle\langle M^{\ast}_{cd}\rangle\Big) \; .
\end{equation}

The normalization is chosen in such a way that the contribution of a genus $g$ graph with $b$ broken faces scales as $N^{2-2g-b}$, corresponding 
to the Euler characteristic of a surface with punctures.

Due to the unitary invariance of the matrix model, the cumulants have a specific form. For any permutation of $k$ 
elements $\sigma\in{\mathfrak S}_{k}$, let us write $C(\sigma)$ the integer partition of $k$ associated to the cycle decomposition 
of $\sigma$ and $|C(\sigma)|$ the number of cycles it contains. Let us also denote by $\Pi_{k}$ the set of integer partitions of $k$ 
(recall that a partition $\pi\in\Pi_{k}$  is an increasing sequence of $|\pi|$ integers $0<k_{1}\leq \cdots\leq k_{|\pi|}$ such 
that $k_{1}+ \cdots + k_{|\pi|}=k$).
To any integer partition of $k$ we associate a \emph{trace invariant}:
\begin{equation}
\Tr_{\pi}(X)=\Tr(X^{k_{1}})\cdots\Tr(X^{k_{p}}) \; .
\end{equation}

As we will see below, the cumulants write in terms of the Weingarten functions $\text{Wg}(\tau\sigma^{-1},N)$ \cite{Collins,ColSni}. These functions
arise when integrating over unitary matrices $\text{U}(N)$ with the invariant normalized Haar measure. 
Denoting $U_{ab}^*$ the complex conjugate of $U_{ab}$ we have \cite{Collins}:
\begin{multline}
\int dU \; U_{a_{1}b_{1}}\dots U_{a_{k}b_{k}}
U^{*}_{c_{1}d_{1}}\dots U^{*}_{c_{l}d_{l}}
=\\
 \delta_{kl}\sum_{\sigma,\tau\in \mathfrak{S}_{k}}
\delta_{a_{\tau(1)c_{1}}}\dots \delta_{a_{\tau(k)}c_{k}}
\delta_{b_{\sigma(1)}d_{1}}\dots \delta_{b_{\sigma(k)}d_{k}}
\text{Wg}(\tau\sigma^{-1},N) \; .
\label{Weingartenrelation}
\end{multline}
The functions $\text{Wg}(\sigma,N)$ only depends on the cycle structure of $\sigma$. 
For low values of $n$, the Weingarten functions read:
\begin{align*}
\text{Wg}\big((1),N\big)&=\frac{1}{N}& \text{Wg}\big((1,1,1),N\big)&=\frac{N^{2}-2}{N(N^{2}-1)(N^{2}-4)}\\
\text{Wg}\big((1,1),N\big)&=\frac{-1}{N^{2}-1}& \text{Wg}\big((1,2),N\big)&=\frac{-1}{(N^{2}-1)(N^{2}-4)}\\
\text{Wg}\big((2),N\big)&=\frac{-1}{N(N^{2}-1)}& \text{Wg}\big((3),N\big)&=
\frac{2}{N(N^{2}-1)(N^{2}-4)} \; .
\end{align*}

Let us chose a permutation $\zeta\in{\mathfrak S}_{k}$ whose 
cycle decomposition reproduces the contribution 
of the broken faces to the amplitude of a LVE graph. Specifically, if there are $b=|B(G)|$ broken faces with $k_{1},\dots,k_{b}$ cilia,
we choose $\zeta$ to have a cycle decomposition of the form:
\begin{equation}
\zeta=(i_{1}^{1}\dots i_{k_{1}}^{1}) \cdots (i_{1}^{b}\dots i_{k_{b}}^{b}) \; .
\end{equation}
This permutation defines a labeling of the cilia in such a way that the product of traces over the broken faces can be expressed as:
\begin{equation}
\prod_{1\leq m\leq b}\Tr\Big[JJ^{\dagger}
\mathop{\prod}\limits_{1\leq r\leq k_{m}}^{\longrightarrow}X^{i^{m}_{r}}\Big]=
\sum_{1\leq p_{1}, q_1 \dots \leq N} \prod_{1\leq l\leq k} (JJ^{\dagger})_{p_{l}q_{l}}  X^{l}_{q_{l}p_{\zeta(l)}} \;, 
\end{equation}
where $X^{l}$ is the product of the resolvents located on the corners separating the cilia labeled $l$ and $\zeta(l)$. Similarly, 
for the $F(G)-B(G)$ unbroken faces we denote by $Y^{m}$ the product of the resolvents around the unbroken face labeled $m$. 

\begin{proposition}\label{prop:ampliWeingarten}
 The amplitude of a LVE graph in eq. \eqref{LVEamplitude} expands in trace invariants as:
 \begin{equation} \label{amplitudetraceinvariant}
 {\cal A}_{(G,T)}[J,J^{\dagger},\lambda,N]  =\sum_{\pi\in\Pi_{k}}{A}_{(G,T)}^{\pi}(\lambda,N) \; \Tr_{\pi}(JJ^{\dagger}) \; ,
  \end{equation}
with
\begin{multline}
{A}_{(G,T)}^{\pi}(\lambda,N)=\frac{(-\lambda)^{|E(G)|}N^{|V(G)|-|E(G)|}}{|V(G)|!}
\mathop{\int}\limits_{1\geq s_{1}\geq\cdots\geq s_{|L(G,T)|}\geq 0}\,\prod_{e\in L(G,T)}ds_{e}\\
\int \prod_{e\in E(T)} dt_{e}
 \left( \prod_{e=(i,j) \in L(G,T)}\mathop{\text{inf}}\limits_{e'\in P_{i\leftrightarrow j}^{T}} t_{e'} \right) 
\int d\mu_{s_{|L(G,T)|}C_{T}}(A) \\
\times \sum_{\tau,\sigma\in{\mathfrak S}_{k}\atop C(\sigma)=\pi}
\sum_{1\leq p_{1},\dots,p_{k}\leq N}\text{Wg}(\tau\sigma^{-1},N)
\prod_{1\leq m\leq F(G)-B(G)}\Tr\Big[Y^{m}\Big]\prod_{1\leq l\leq k}
X^{l}_{p_{\tau(l) }p_{\zeta(l)}}\label{amplitudeWg} \; .
\end{multline}
\end{proposition}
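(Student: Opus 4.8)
The plan is to start from the definition of the LVE amplitude in \eqref{LVEamplitude} and to process the source-dependent part, namely the product over faces $\prod_{f\in F(G)}\Tr\{\prod^{\longrightarrow}_{c\in\partial f}(1-\mathrm{i}\sqrt{\lambda/N}A_{i_c})^{-1}(JJ^{\dagger})^{\eta_c}\}$. First I would split this product according to whether a face is broken or not: the unbroken faces carry no $J$ and contribute the factors $\Tr[Y^m]$, $1\le m\le F(G)-B(G)$, where $Y^m$ is the ordered product of resolvents $(1-\mathrm{i}\sqrt{\lambda/N}A_{i_c})^{-1}$ around that face; the broken faces carry the $J$-dependence. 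The key combinatorial input has already been isolated in the excerpt: choosing the permutation $\zeta\in\mathfrak S_k$ encoding the cyclic order of cilia inside the broken faces, one has the identity $\prod_{1\le m\le b}\Tr[JJ^{\dagger}\prod^{\longrightarrow}_{r}X^{i^m_r}]=\sum_{p,q}\prod_l (JJ^{\dagger})_{p_lq_l}X^l_{q_lp_{\zeta(l)}}$, where $X^l$ is the product of resolvents on the corners between the cilia labelled $l$ and $\zeta(l)$. So the broken-face part becomes $\sum_{p_l,q_l}\prod_l (JJ^{\dagger})_{p_lq_l}\,X^l_{q_lp_{\zeta(l)}}$.

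Next I would extract the trace invariants $\Tr_\pi(JJ^\dagger)$ from $\prod_l (JJ^\dagger)_{p_lq_l}$. The matrix elements $(JJ^\dagger)_{p_lq_l}$ are symmetric functions of the matrix $JJ^\dagger$, and a product $\prod_{l=1}^k (JJ^\dagger)_{p_l q_l}$ together with the contraction pattern of the $X^l_{q_l p_{\zeta(l)}}$ factors will, after summing over the $q_l$ indices against the resolvent strings, organise itself into products $\Tr[(JJ^\dagger)^{k_1}]\cdots\Tr[(JJ^\dagger)^{k_p}]$ indexed by a partition $\pi$ of $k$. Concretely, the indices $q_l$ get glued through the resolvent matrices $X^l$ to the indices $p_{\zeta(l)}$, and then the remaining $p$-indices get glued to each other by a second permutation; the cycle structure of that permutation is exactly the partition $\pi$. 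This is where the Weingarten function enters: I would invoke the standard reorganisation (as in \cite{Collins,ColSni}) that expresses the sum over gluings of the $(JJ^\dagger)$-factors as $\sum_{\tau,\sigma\in\mathfrak S_k}\mathrm{Wg}(\tau\sigma^{-1},N)(\dots)$ with $\tau$ contracting the $p$-indices through the resolvents (giving $X^l_{p_{\tau(l)}p_{\zeta(l)}}$) and $\sigma$ producing $\Tr_\pi(JJ^\dagger)$ with $C(\sigma)=\pi$. Summing over $\pi\in\Pi_k$ and over $\tau,\sigma$ with $C(\sigma)=\pi$ reproduces \eqref{amplitudetraceinvariant}–\eqref{amplitudeWg}; the prefactors, the $s$-integrals, the $t$-integrals with the infimum factors, and the Gau\ss ian measure $d\mu_{s_{|L(G,T)|}C_T}(A)$ are simply carried along unchanged from \eqref{LVEamplitude}.

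The main obstacle — and the step I would be most careful about — is the bookkeeping that turns the "raw" index contraction coming from $\prod_f\Tr\{\cdots\}$ into the Weingarten form. One has to be precise about which corner carries which resolvent, about the orientation of the product around each broken face, and about the fact that the labelling of cilia induced by $\zeta$ is consistent with the way the $X^l$ strings are defined. In particular I would need to check that summing over $p$ and $q$ indices of the expression $\prod_l (JJ^\dagger)_{p_lq_l} X^l_{q_l p_{\zeta(l)}}$ really is of the form "$\sum_{p,q}$ (product of $(JJ^\dagger)$ matrix elements) (product of resolvent matrix elements with a fixed gluing)'', to which the Weingarten-type identity for the $(JJ^\dagger)$ part applies; this amounts to verifying that one may treat $JJ^\dagger$ as a generic Hermitian positive matrix and expand $(JJ^\dagger)_{p_lq_l}$ via the integral over $\mathrm U(N)$, or equivalently via the direct combinatorial identity $\sum_q (JJ^\dagger)_{pq} M_{q p'}$ summations. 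Once this index algebra is settled, the rest is a matter of matching prefactors, and no analytic input (convergence, analyticity) is needed here since this is a purely algebraic rewriting of each fixed LVE amplitude.
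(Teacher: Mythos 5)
Your outline follows the paper's route up to the decisive step, but at that step there is a genuine gap. After you rewrite the broken-face part as $\sum_{p,q}\prod_{l}(JJ^{\dagger})_{p_{l}q_{l}}X^{l}_{q_{l}p_{\zeta(l)}}$, you assert that the index contractions ``organise themselves'' into products $\Tr[(JJ^{\dagger})^{k_{1}}]\cdots\Tr[(JJ^{\dagger})^{k_{p}}]$, and that one can ``expand $(JJ^{\dagger})_{p_{l}q_{l}}$ via the integral over $\mathrm{U}(N)$''. Neither is true as stated: at fixed intermediate fields $A_{i}$ the quantity is \emph{not} a trace invariant of $JJ^{\dagger}$ (already for $k=1$, $\Tr[JJ^{\dagger}X]\neq c\,\Tr[JJ^{\dagger}]$ for a generic resolvent string $X$), and you cannot insert a Haar average over $U$ for free unless the quantity you average is invariant — otherwise the insertion changes its value. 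So the Weingarten functions do not appear from the ``sum over gluings'' by itself; some invariance argument must first be supplied, and your proposal never identifies its source.

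The missing idea, which is the crux of the paper's proof, is that \emph{after} the Gau\ss ian integration $\int d\mu_{s_{|L(G,T)|}C_{T}}(A)$ the amplitude \eqref{LVEamplitude} becomes a homogeneous degree-$k$ polynomial in $H=JJ^{\dagger}$ that is invariant under $J\to UJ$, $J^{\dagger}\to J^{\dagger}U^{\dagger}$: the unitary rotation of $J$ is compensated by $A_{i}\to UA_{i}U^{\dagger}$, under which the Gau\ss ian measure $d\mu_{s_{|L(G,T)|}C_{T}}$ is invariant. Only then may one write $P(H)=\int dU\,P(UHU^{\dagger})$ and apply the Weingarten relation \eqref{Weingartenrelation}; this is exactly the content of lemma \ref{traceinvariants:prop}, whose coefficient formula, applied with $A_{p_{1}q_{1},\dots,p_{k}q_{k}}$ read off as the (integrated) product $\prod_{m}\Tr[Y^{m}]\prod_{l}X^{l}_{q_{l}p_{\zeta(l)}}$, produces the factors $X^{l}_{p_{\tau(l)}p_{\zeta(l)}}$ and the sum over $\tau,\sigma$ with $C(\sigma)=\pi$ in \eqref{amplitudeWg}. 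With that invariance statement (and the resulting general expansion lemma) added, the rest of your argument — the split into broken and unbroken faces, the role of $\zeta$, $X^{l}$, $Y^{m}$, and carrying the prefactors, $s$- and $t$-integrals along unchanged — matches the paper's proof.
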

 
If the LVE graph $(G,T)$ is reduced to a tree we use the shorthand notation ${A}_{T}^{\pi}(\lambda,N) $ instead of 
${A}_{(T,T)}^{\pi}(\lambda,N)$. 

\begin{proposition}[Scalar cumulants]
\label{structure:prop}
The order $2k$ cumulants can be written as a sum over  partitions of $k$ and over two permutations of $k$ elements:
\begin{equation}
\label{structure:eq}
{{K}}_{a_{1}b_{1}c_{1}d_{1},\dots,a_{k}b_{k}c_{k}d_{k}}(\lambda,N)=\sum_{\pi\in\Pi_{k} }K_{\pi}(\lambda,N)
\sum_{\rho,\sigma\in\mathfrak{S}_{k}}
\prod_{1\leq l\leq k}\delta_{c_{l},a_{\rho\tau_{\pi}\sigma^{-1}(l)}}\delta_{d_{l},b_{\rho\xi_{\pi}\sigma^{-1}(l)}} \; ,
\end{equation}
where $\tau_{\pi}$ and $\xi_{\pi}$ are arbitrary permutations such that $\tau_{\pi}(\xi_{\pi})^{-1}$ has a cycle structure corresponding to the partition $\pi$
and the \emph{scalar cumulants} $ K_{\pi}(\lambda,N) $ are given by the expansion:
\begin{equation}
K_{\pi}(\lambda,N)=\sum_{T \text{ LVE tree with $k$ cilia}}{\cal A}^{\pi}_{T}(\lambda,N)\label{Kpitree} \; .
\end{equation}
\end{proposition}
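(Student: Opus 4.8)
The plan is to derive Proposition \ref{structure:prop} as a corollary of Theorem \ref{treeexpansion} together with Proposition \ref{prop:ampliWeingarten}, using the unitary invariance of the matrix model to fix the tensorial structure of the cumulants. First I would observe that the model \eqref{Mintegral} is invariant under $M\mapsto UMV^{\dagger}$ for $U,V\in\mathrm{U}(N)$, provided the sources are transformed as $J\mapsto UJV^{\dagger}$. Consequently $\log{\cal Z}$ is a function of $J,J^{\dagger}$ invariant under $(J,J^{\dagger})\mapsto(UJV^{\dagger},VJ^{\dagger}U^{\dagger})$, so it depends only on traces of words in $JJ^{\dagger}$, i.e.\ only on the trace invariants $\Tr_{\pi}(JJ^{\dagger})$, $\pi\in\Pi_{k}$, at order $2k$ in the sources. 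Taking $2k$ derivatives of such a function and evaluating at $J=J^{\dagger}=0$ then forces the index structure announced in \eqref{structure:eq}: each monomial $\prod_{l}(JJ^{\dagger})_{p_{l}q_{l}}$ contributing to $\Tr_{\pi}(JJ^{\dagger})$ differentiates into the product of Kronecker deltas $\prod_{l}\delta_{c_{l},a_{\rho\tau_{\pi}\sigma^{-1}(l)}}\delta_{d_{l},b_{\rho\xi_{\pi}\sigma^{-1}(l)}}$ after summing over the ways of pairing the $2k$ derivative labels with the $2k$ source labels, which produces the double sum over $\rho,\sigma\in\mathfrak{S}_{k}$ (one permutation from distributing the $J^{\dagger}$ indices, one from the $J$ indices, with the $\tau_{\pi},\xi_{\pi}$ encoding the fixed cyclic word dictated by $\pi$). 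This is essentially a bookkeeping computation: one writes $\Tr_{\pi}(JJ^{\dagger})=\sum_{p,q}\prod_{l}(JJ^{\dagger})_{p_{l}q_{l}}\,\delta\text{-structure}(\pi)$ as in the displayed identity preceding Proposition \ref{prop:ampliWeingarten}, differentiates, and reads off the coefficient.

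Having pinned down the tensor structure, the content of the proposition is the formula \eqref{Kpitree} for the scalars $K_{\pi}(\lambda,N)$. Here I would feed the tree expansion \eqref{treeexpansion:eq} of Theorem \ref{treeexpansion} into Definition \ref{cumulantsdef}: since the series $\sum_{T}{\cal A}_{T}[J,J^{\dagger};\lambda,N]$ is absolutely convergent for $\|JJ^{\dagger}\|<\epsilon_{\lambda}$ and its terms are analytic in the sources (each ${\cal A}_{T}$ being a polynomial in $J,J^{\dagger}$ of degree $2k$ where $k=|K(T)|$ is the number of cilia of $T$), one may differentiate term by term and evaluate at $J=J^{\dagger}=0$. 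A tree $T$ with $k$ cilia produces a contribution precisely of homogeneity $2k$ in the sources, so only LVE trees with exactly $k$ cilia survive the order-$2k$ derivative. Then one invokes Proposition \ref{prop:ampliWeingarten} (specialized to $G=T$, so $L(T,T)=\varnothing$, no $s$-integrals, $s_{|L|}=1$, a single face) to write ${\cal A}_{T}[J,J^{\dagger};\lambda,N]=\sum_{\pi\in\Pi_{k}}{\cal A}^{\pi}_{T}(\lambda,N)\,\Tr_{\pi}(JJ^{\dagger})$, match this expansion in trace invariants against the tensor structure \eqref{structure:eq} obtained from unitary invariance, and identify $K_{\pi}(\lambda,N)=\sum_{T\text{ LVE tree},\,|K(T)|=k}{\cal A}^{\pi}_{T}(\lambda,N)$. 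The uniqueness of the coefficients in the decomposition over the $\Tr_{\pi}$, which holds for $N$ large enough (the trace invariants $\Tr_{\pi}$ being linearly independent as polynomial functions on matrices once $N\geq k$), is what makes the matching legitimate.

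The main obstacle, and the point that deserves the most care in the write-up, is the interchange of differentiation and summation together with the bound on the number of cilia: one must argue that for $\|JJ^{\dagger}\|<\epsilon_{\lambda}$ the family $\{{\cal A}_{T}\}$ is not merely summable but summable in a way that survives $2k$ derivatives at the origin — concretely, that $\sum_{T}\|\partial^{2k}{\cal A}_{T}\|$ converges — which again follows from the absolute convergence statement of Theorem \ref{treeexpansion} and analyticity (Cauchy estimates on a polydisc of radius $<\epsilon_{\lambda}$), but should be stated explicitly. A secondary subtlety is the remark that the remaining derivatives of $\log{\cal Z}$ vanish: this is exactly the statement that $\log{\cal Z}$ depends on $J,J^{\dagger}$ only through $JJ^{\dagger}$, so that any derivative pattern not pairing each $\partial/\partial J^{\ast}_{ab}$ with a $\partial/\partial J_{cd}$ gives zero at the origin, and it is cleanest to establish this first, as it simultaneously justifies restricting attention to cumulants of even order $2k$ of the form in Definition \ref{cumulantsdef} and feeds into the tensor-structure computation above. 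Once these two analytic points are in place, the proof is a matter of assembling \eqref{treeexpansion:eq}, \eqref{amplitudetraceinvariant}, and the invariance-dictated form \eqref{structure:eq}.
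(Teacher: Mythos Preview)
Your proposal is correct and follows essentially the same route as the paper: invoke Theorem~\ref{treeexpansion}, differentiate the tree expansion term by term, decompose each tree amplitude via Proposition~\ref{prop:ampliWeingarten} into trace invariants, and compute $\partial^{2k}\Tr_{\pi}(JJ^{\dagger})|_{J=0}$ explicitly to read off the Kronecker structure. Your discussion is in fact more careful than the paper's on two points the paper leaves implicit: the justification for exchanging $\partial^{2k}$ with $\sum_{T}$ (the paper just says ``as the sum over LVE trees is convergent we can derive term by term''), and the linear independence of the $\Tr_{\pi}$ for $N\geq k$ needed to identify coefficients.
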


Choosing any other pair of permutations $\tau_{\pi}$ and $\xi_{\pi}$ leads to an identical result, after reorganizing the sum over 
$\rho$ and $\sigma$. $K_{\pi}(\lambda,N)$ only depends on the partition $\pi$ and not on the index structure of 
${{K}}_{a_{1}b_{1}c_{1}d_{1},\dots,a_{k}b_{k}c_{k}d_{k}}(\lambda,N)$ which explains why we call it \emph{scalar cumulant}.   

The main goal of this paper is to establish some analyticity results as well as bounds for the scalar cumulants $K_{\pi}(\lambda, N)$ regarded as 
functions of $\lambda$ inside  a cardioid  with $N$ considered as a parameter.
 
 \paragraph{Constructive expansions for cumulants}

Our first result states that the expansion of  $K_{\pi}(\lambda, N)$ as a sum over trees
yields an analytic function of $\lambda\in {\cal C}$. 

 \begin{theorem}[Analyticity and bound for cumulants] \label{treecumulants:thm}
The series:
\begin{equation}
K_{\pi}(\lambda, N)=\sum_{T\text{ LVE tree with $k$ cilia}}{\cal A}^{\pi}_{T}(\lambda, N)\;,
\end{equation}
defines an analytic function of $\lambda\in{\cal C}$. Moreover, each term in this sum is bounded (for $N$ large enough) as:
\begin{equation}\label{treecumulantsbound}
\big|{\cal A}^{\pi}_{T}(\lambda, N)\big|\leq\frac{N^{2-|\pi|}|\lambda|^{|E(T)|}\,(k!)^2 \, 2^{2k}}{(\cos\frac{\arg\lambda}{2})^{2|E(T)|+k}\,|V(T)|!}
\; ,
\end{equation}
where $|\pi|$ is the number of integers in the partition $\pi$ of $k$ (number of cilia).
\end{theorem}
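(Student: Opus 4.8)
The plan is to establish the two assertions of Theorem \ref{treecumulants:thm} separately: first the bound \eqref{treecumulantsbound} on each individual LVE-tree amplitude, and then the analyticity of the sum, which will follow from the bound together with a counting of the trees contributing at each order in $|E(T)|$.

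\paragraph{Bounding a single tree amplitude.} Starting from the explicit formula for ${\cal A}^\pi_T(\lambda,N)$ in Proposition \ref{prop:ampliWeingarten} specialized to $G=T$ (so there are no loop edges, no $s$-integrals, no infima products, and a single face), I would proceed as follows. First, since $T$ is a tree, $|E(T)|=|V(T)|-1$, so the prefactor $N^{|V(T)|-|E(T)|}=N$ and the explicit power of $(-\lambda)$ is $|\lambda|^{|E(T)|}$; the $1/|V(T)|!$ is carried along untouched. The $t$-integrals are over $[0,1]^{|E(T)|}$ and contribute a factor bounded by $1$. The heart of the matter is the Gau\ss ian integral over the Hermitian matrices $A=(A_i)$ against $d\mu_{C_T}$: inside it sit the resolvents $\bigl(1-\mathrm{i}\sqrt{\lambda/N}\,A_{i_c}\bigr)^{-1}$ located on the corners of the single face. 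I would bound the operator norm of each such resolvent pointwise in $A$. Writing $\lambda=|\lambda|e^{\mathrm{i}\arg\lambda}$, one checks that for $A$ Hermitian the real part of $1-\mathrm{i}\sqrt{\lambda/N}\,A$ (after the standard rotation of contour / or directly by the numerical-range argument used in the LVE literature, see \cite{Razvannonperturbative,tensornonperturbative,LVE}) is bounded below so that
\[
\Bigl\| \bigl(1-\mathrm{i}\sqrt{\tfrac{\lambda}{N}}\,A_{i_c}\bigr)^{-1} \Bigr\| \;\le\; \frac{1}{\cos\frac{\arg\lambda}{2}}\;,
\]
uniformly in $A$ and $N$; this is exactly the mechanism producing the $\cos^2(\arg\lambda/2)$ in the cardioid. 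Each of the $k$ factors $X^l$ appearing in \eqref{amplitudeWg} is a product of such resolvents, hence also has norm $\le (\cos\frac{\arg\lambda}{2})^{-(\#\text{resolvents in }X^l)}$, and a single face of a tree with $|E(T)|$ edges is bounded by $2|E(T)|$ corners; together with the extra $k$ from distributing, this gives the exponent $2|E(T)|+k$. The trace of the unbroken-face term $\Tr[Y^m]$ — there are none for a tree, but in general — is bounded by $N\|Y^m\|$. Since the integrand is now bounded in norm by a constant independent of $A$, the normalized Gau\ss ian integral of its norm is bounded by that same constant; no moments of $A$ are needed. It remains to count the combinatorial/index sums: the sum over $\tau,\sigma\in\mathfrak S_k$ with $C(\sigma)=\pi$ is bounded by $(k!)^2$, and summing the free matrix indices $p_1,\dots,p_k$ together with the Weingarten weight $\mathrm{Wg}(\tau\sigma^{-1},N)$ and the trace structure produces the net power $N^{2-|\pi|}$ — this is precisely the content of the normalization: a tree contributing to the partition $\pi$ scales as the Euler characteristic $2-|\pi|$ of a sphere with $|\pi|$ punctures. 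The remaining numerical factor $2^{2k}$ absorbs the uniform-in-$N$ estimate on $|\mathrm{Wg}|$ (which behaves as $N^{-k-|\,\cdot\,|+\dots}$ times bounded coefficients for $N$ large). Collecting everything yields \eqref{treecumulantsbound}.

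\paragraph{Analyticity of the sum.} Each amplitude ${\cal A}^\pi_T(\lambda,N)$ is, by its integral representation, manifestly analytic in $\lambda$ on ${\cal C}$: it is a finite-dimensional convergent integral of a function that is analytic in $\lambda$ for $\lambda\in{\cal C}$ (the resolvents are analytic there because $1-\mathrm{i}\sqrt{\lambda/N}A$ stays invertible by the norm bound above), with an integrand dominated uniformly on compacts of ${\cal C}$, so differentiation under the integral applies. To conclude that the sum over all LVE trees with $k$ cilia defines an analytic function, I would invoke a normal-convergence (Weierstrass) argument: group the trees by $|E(T)|=n$, note $|V(T)|=n+1$, and use \eqref{treecumulantsbound} to bound the total contribution of order $n$ by
\[
\frac{N^{2-|\pi|}(k!)^2\,2^{2k}}{(n+1)!}\;\Bigl(\frac{|\lambda|}{(\cos\frac{\arg\lambda}{2})^{2}}\Bigr)^{\!n}\;(\cos\tfrac{\arg\lambda}{2})^{-k}\;\times\;\#\{\text{LVE trees on }n+1\text{ labeled vertices with }k\text{ cilia}\}\;.
\]
The number of LVE trees on $n+1$ labeled vertices is controlled by Cayley's formula times a choice of at most one cilium per vertex and a choice of which faces are broken, i.e.\ it is bounded by $(n+1)^{n-1}\,2^{n+1}\,C_k$ for a constant $C_k$ depending only on $k$ (the positions and pairing of the $k$ cilia among vertices). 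Using $(n+1)^{n-1}/(n+1)! \le e^{n+1}$ (Stirling), the order-$n$ contribution is bounded by $\mathrm{const}(k,N)\cdot\bigl(2e\,|\lambda|/\cos^2\frac{\arg\lambda}{2}\bigr)^{n}$ up to harmless constants; this is summable precisely when $2e\,|\lambda|<\cos^2\frac{\arg\lambda}{2}$, hmm — this naive bound is slightly worse than the cardioid $4|\lambda|<\cos^2$, so the genuine argument must use the $1/|V(T)|!$ more carefully: the factor $1/(n+1)!$ beats $(n+1)^{n-1}$ down to $e^n/\sqrt{n}$, and the careful bookkeeping of the tree count (Cayley with the $1/(n+1)!$ producing $e^{n}$, and the binomial choices of loop/cilia structure producing the $4^n$ rather than a larger base) reproduces the stated radius $\tfrac14$; in any case the series converges normally on compact subsets of ${\cal C}$, hence its sum is analytic there by the Weierstrass theorem.

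\paragraph{Main obstacle.} The routine parts are the resolvent norm bound and the analyticity-under-the-integral. The genuinely delicate step is the index/Weingarten sum producing exactly $N^{2-|\pi|}$ with a constant that is \emph{uniform in $N$} (for $N$ large): one must track how the $\delta$-contractions coming from the Gau\ss ian covariance $C_T$ (which connects the $A_i$'s along tree edges) interlock with the face structure and with the Weingarten $\delta$-patterns over $\tau,\sigma$, and show that the worst-case index count is the expected Euler characteristic and not something larger. Establishing this cleanly — presumably by a topological argument identifying each choice of contractions with a (possibly higher-genus) surface whose Euler characteristic is at most that of the punctured sphere — is where the real work lies; the asymptotics of $\mathrm{Wg}$ then only contribute the benign $2^{2k}$ and do not spoil uniformity in $N$.
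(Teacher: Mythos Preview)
Your overall architecture matches the paper's: bound each resolvent by $(\cos\frac{\arg\lambda}{2})^{-1}$ via Lemma~\ref{boundresolvent}, bound the Weingarten functions by $2^{2k}N^{-2k+|C(\cdot)|}$ (Lemma~\ref{Weingartenbound:lem}), and count the two permutation sums by $(k!)^2$. You also correctly identify the hard step as the $N^{2-|\pi|}$ scaling. The precise tool the paper uses there is not a surface argument per se but the purely combinatorial cycle inequality of Lemma~\ref{permutationinequalitylemma}: for $\sigma,\tau\in\mathfrak S_k$, $|C(\sigma)|+|C(\tau)|\le k+|C(\sigma\tau)|$. Applied twice, $|C(\zeta\tau^{-1})|+|C(\tau\sigma^{-1})|\le 2k+|C(\zeta)|-|C(\sigma)|=2k+|B(G)|-|\pi|$; combined with the $N^{|C(\zeta\tau^{-1})|}$ from the index traces, the $N^{|C(\tau\sigma^{-1})|-2k}$ from the Weingarten bound, and the prefactor $N^{|V|-|E|+|F|-|B|}$, this collapses to $N^{2-|\pi|}$ for a tree. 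Your ``topological'' intuition is not wrong---the inequality is equivalent to the nonnegativity of a genus---but you should state and prove this lemma rather than leave it as a sketch.

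There is a genuine error in your analyticity argument: LVE trees are \emph{ribbon} (plane) trees, not abstract labeled trees, so Cayley's formula does not bound their number. The correct count (Lemma~\ref{coutingtrees:lem}) is ${\cal N}(n,k)=\frac{(2n+k-1)!\,(n+1)!}{(n+k)!\,(n+1-k)!\,k!}\le 2^{2n+k-1}(n-1)!\binom{n+1}{k}$, which is of order $4^n(n-1)!$ in $n$, \emph{larger} than $(n+1)^{n-1}\sim e^n n!/n^{3/2}$ at the exponential scale. Your bound $(n+1)^{n-1}\cdot 2^{n+1}$ is therefore not justified by the map you describe (there is no injection from plane trees into abstract trees times cilia choices). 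With the correct count, $\frac{{\cal N}(n,k)}{(n+1)!}\lesssim \frac{4^n}{n^2}\cdot n^{k-1}$, so the series $\sum_n {\cal N}(n,k)\,\frac{|\lambda|^n}{(n+1)!(\cos\frac{\arg\lambda}{2})^{2n}}$ converges exactly when $4|\lambda|<\cos^2\frac{\arg\lambda}{2}$, i.e.\ on the full cardioid ${\cal C}$. Your route via Cayley, even if patched, would only yield a strictly smaller domain and hence would not prove the theorem as stated.
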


By further expanding loop edges on each tree, we obtain a perturbative expansion with a well controlled remainder. 
In order to identify the graphs contributing to $K_{\pi}(\lambda,N)$, we say that  a ciliated ribbon graph has broken 
faces corresponding to  $\pi$ if the partition of the cilia defined by the broke faces agrees with the partition $\pi$. 

\begin{theorem}[Perturbative expansion with remainder] \label{perturbativecumulants:thm}
The perturbative expansion of the cumulants reads:
\begin{equation}
K_{\pi}(\lambda, N)=\sum_{G\text{ ribbon graph with $k$ cilia}\atop\text{ broken faces corresponding to $\pi$ and $|E(G)|\leq n$}}
\frac{(-\lambda)^{|E(G)|}N^{\chi(G)}}{|\text{Aut}(G)|}
+{\cal R}_{\pi,n}(\lambda,N) \; .
\end{equation}
The perturbative remainder  ${\cal R}_{\pi,n}(\lambda,N)$ is a sum over LVE graphs with  $k$ cilia, at least $n+1$ edges and at most $n+1$ loop edges,
\begin{equation}
{\cal R}_{\pi,n}(\lambda,N)=
\sum_{(G,T)\text{ LVE graphs with broken structure corresponding to $\pi$}\atop
|E(G)|\geq n+1 \text{ and } |L(G,T)|\leq n+1}
{\cal A}^{\pi}_{(G,T)}(\lambda, N) \; .
\end{equation}
The perturbative reminder is analytic for $\lambda \in {\cal C}$ and for any $\lambda \in {\cal C}$ and $N$ large enough 
it obeys the bound:
\begin{align*}
& \Big|{\cal R}_{\pi,n}(\lambda,N)\Big|
\leq \crcr
& \; \leq N^{2-|\pi|} \left( \frac{2^{3k-1}k!}{ \big(\cos\frac{\arg\lambda}{2}\big)^{k} } \right) (n+1)!   
\left(  \frac{ 4 |\lambda|}{  \big(\cos\frac{\arg\lambda}{2}\big)^{2}} \right)^{n+1} 
\left(
 \frac{    \frac{ 4 |\lambda|}{  \big(\cos\frac{\arg\lambda}{2}\big)^{2} }   }
   { \left( 1 -  \frac{ 4 |\lambda|}{  \big(\cos\frac{\arg\lambda}{2}\big)^{2} } \right)^{n+2} }
   + 2^{k+n+2 }
\right)\; .
\end{align*}
\end{theorem}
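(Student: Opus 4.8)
The plan is to derive Theorem~\ref{perturbativecumulants:thm} from Theorem~\ref{perturbativegenerating:thm} and Theorem~\ref{treecumulants:thm} by ``projecting'' onto a fixed partition $\pi$ of the cilia and then carefully summing the bound \eqref{treecumulantsbound} over the LVE graphs contributing to the remainder. First I would observe that the perturbative identity itself is immediate: taking $2k$ derivatives of \eqref{perturbativeexpansion:eq} with respect to the sources and selecting, via Proposition~\ref{prop:ampliWeingarten} and Proposition~\ref{structure:prop}, the coefficient of $\Tr_\pi(JJ^\dagger)$, the first term of \eqref{perturbativeexpansion:eq} produces exactly the sum over ribbon graphs with $k$ cilia whose broken faces realize $\pi$ and with $|E(G)|\le n$, while ${\cal R}_n$ produces ${\cal R}_{\pi,n}$. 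Analyticity of ${\cal R}_{\pi,n}$ on ${\cal C}$ then follows once we show the defining series converges absolutely and uniformly on compact subsets of ${\cal C}$, which is exactly what the bound will give.

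The core of the proof is therefore the quantitative estimate. I would start from the per-graph bound on $|{\cal A}^\pi_{(G,T)}|$ generalizing \eqref{treecumulantsbound}: in the intermediate field / LVE representation each loop edge contributes an extra factor controlled by $\frac{|\lambda|}{(\cos\frac{\arg\lambda}{2})^2}$ (each edge, tree or loop, carries a $\lambda$ and a resolvent bounded by $1/\cos\frac{\arg\lambda}{2}$, and the Gaussian integration of the extra $A$'s produces the numerical $2^{2k}$-type factors and no growth), so that
\[
\big|{\cal A}^\pi_{(G,T)}(\lambda,N)\big|\ \le\ \frac{N^{2-|\pi|}\,(k!)^2\,2^{2k}}{|V(G)|!}\,
\Big(\frac{|\lambda|}{(\cos\frac{\arg\lambda}{2})^2}\Big)^{|E(G)|}
\big(\cos\tfrac{\arg\lambda}{2}\big)^{\,|E(G)|-|L(G,T)|}\,,
\]
uniformly in $N$ large. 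Then I would split ${\cal R}_{\pi,n}$ into the two families appearing in ${\cal R}_n$: LVE graphs with exactly $n+1$ edges and at most $n+1$ loop edges, and LVE trees with at least $n+2$ edges. For the first family, for each number $\ell$ of loop edges ($0\le\ell\le n+1$) one sums over the choice of a tree on $V$ vertices, the $\ell$ extra edges, and the cilia placement; the number of labelled trees on $V$ vertices is $V^{V-2}$ (Cayley), the ways to add $\ell$ edges is at most $\binom{V^2}{\ell}\le V^{2\ell}$, and the $1/V!$ beats all of this, so after bounding $V^{V}/V!\le e^{V}$ and resumming the geometric-type series in $V$ one is left with an $\ell$-sum of the shape $\sum_{\ell\le n+1}(\text{const})^\ell$, producing the $\big(\tfrac{4|\lambda|}{(\cos\frac{\arg\lambda}{2})^2}\big)^{n+1}$ prefactor together with the $\frac{1}{(1-\frac{4|\lambda|}{(\cos\frac{\arg\lambda}{2})^2})^{n+2}}$ factor. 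For the second family (trees with $|E(T)|\ge n+2$, i.e.\ $|V(T)|\ge n+3$), the same counting with only the $1/|V(T)|!$ and the Cayley factor gives a convergent tail $\sum_{V\ge n+3}(\text{something})^{V}/V!$ which I would bound by its leading term times $e$, yielding the residual $2^{k+n+2}$ contribution; the bookkeeping of the combinatorial constants $4=2^2$ (one $2$ for the binomial/Cayley entropy, one from resolvent orientations) versus $2^{3k-1}k!$ (cilia choices plus Weingarten/Gaussian factors) has to be tracked so the final inequality comes out in the stated closed form.

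The main obstacle I expect is precisely this combinatorial resummation: controlling the sum over LVE graphs with a \emph{bounded} number of loop edges but an \emph{unbounded} number of vertices, while matching the explicit constants in the statement. The delicate point is that adding a loop edge does not introduce a new vertex, so one cannot rely on $1/V!$ alone to tame the loop sum — one must extract a geometric series in the effective expansion parameter $\tfrac{4|\lambda|}{(\cos\frac{\arg\lambda}{2})^2}$ (finite on ${\cal C}$), which is where the $(1-\cdot)^{-(n+2)}$ denominator and the $(n+1)!$ — coming from the number of orderings of the $s$-parameters, or equivalently from the number of ways the loop edges can be successively attached — enter. Once the convergence of both sums is established on compacts of ${\cal C}$, analyticity of ${\cal R}_{\pi,n}$ follows from the analyticity of each amplitude (which is clear from \eqref{amplitudeWg}, the integrand being analytic in $\lambda\in{\cal C}$ and the $s,t$-integrals being over compact domains) together with a Weierstrass/Vitali argument, completing the proof.
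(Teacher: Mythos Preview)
Your overall strategy---project Theorem~\ref{perturbativegenerating:thm} onto the $\pi$-component via Proposition~\ref{prop:ampliWeingarten} and then estimate the two families in the remainder separately---is the same as the paper's. However, the execution you sketch has two genuine gaps that would prevent you from reaching the stated bound.

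First, your combinatorial counting is for the wrong objects. LVE trees and LVE graphs are \emph{ribbon} (plane) graphs: the half-edges at each vertex carry a cyclic order, and loop edges are inserted at \emph{corners}, not at pairs of vertices. Cayley's $V^{V-2}$ counts ordinary labeled trees and $\binom{V^2}{\ell}$ counts unordered pairs of vertices; both undercount the ribbon objects, so the resulting inequality would be false. The paper instead uses the exact enumerations
\[
{\cal N}(n,k)=\frac{(2n+k-1)!\,(n+1)!}{(n+k)!\,(n+1-k)!\,k!}\,,\qquad
{\cal N}(n',n'',k)=\frac{(2n'+2n''+k-1)!\,(n'+1)!}{(n'+k)!\,2^{n''}\,k!\,(n'+1-k)!}\,,
\]
together with the per-graph bound
\(
|{\cal A}^\pi_{(G,T)}|\le \dfrac{2^{2k}(k!)^2\,|\lambda|^{|E(G)|}\,N^{2-|\pi|}}{(\cos\frac{\arg\lambda}{2})^{2|E(G)|+k}\,|V(G)|!\,|L(G,T)|!}
\)
(note the $1/|L(G,T)|!$ from the ordered $s$-integration, which is missing from your displayed bound and is what kills the loop-edge labeling). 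Your displayed per-graph estimate also carries an unexplained factor $(\cos\frac{\arg\lambda}{2})^{|E(G)|-|L(G,T)|}$ that does not arise from the resolvent bound.

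Second, you have the two pieces of the final answer reversed. In the paper it is the \emph{tree tail} $\sum_{n'\ge n+2}$ that, after the binomial bound on ${\cal N}(n',k)$, produces the geometric-type sum $\sum_{m\ge 0}\binom{m+n+1}{m}x^{m+n+2}=x^{n+2}/(1-x)^{n+2}$ with $x=4|\lambda|/(\cos\frac{\arg\lambda}{2})^2$; the \emph{finite} family of LVE graphs with exactly $n+1$ edges is what gives the $2^{k+n+2}$ term, after bounding ${\cal N}(n',n'',k)$ and summing over $n'+n''=n+1$. The $(n+1)!$ in the final bound does not come from ``orderings of the $s$-parameters'' (that integral contributes $1/|L|!$, the opposite direction) but from the residual factorial left over after dividing $(2n+k+1)!$ by $(n+k+1)!$ via binomial inequalities. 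If you redo the two estimates with the correct ribbon counts and the correct attribution, the stated closed form follows directly.
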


\paragraph{Borel summation for cumulants}

The previous expansion defines an asymptotic expansion of the cumulants. Indeed, let us collect the contribution of all graphs of a given order in
\begin{equation}
a_{\pi,n}(N)=\sum_{G\text{ ribbon graph with $k$ cilia}\atop\text{ broken faces $\pi$  and $|E(G)|=n$}}\frac{N^{\chi(G)}}{|\text{Aut}(G)|} \; ,
\end{equation}
so that for $\lambda \in {\cal C}$ the bound on $R_{\pi,n}(\lambda,N)$ implies 
\begin{equation}
\lim_{\lambda\rightarrow 0}\Bigg|
\frac{K_{\pi}(\lambda, N)-\sum_{k\leq m\leq n}(-\lambda)^{m}a_{\pi,m}(N)}
{\lambda^{n}}
\Bigg|=0 \; .
\end{equation}

However, the series $\sum_{n}a_{\pi,n}\lambda^{n}$ is divergent which means that  $K_{\pi}(\lambda,N)$ is not analytic at the origin.
From a combinatorial point of view, the divergence of the series is due to the occurrence of too many graphs at a given order in $n$.
Nevertheless,  $\sum_{}a_{\pi,n}(N)\lambda^{n}$ contains all the information required to reconstruct  $K_{\pi}(\lambda,N)$ through 
the Borel summation procedure. The latter is based on the following theorem.

For any $R>0$, let ${\cal D}_{R}$ be the disc of radius $R$ tangent at the origin (see figure \ref{Borel:pic})
 \begin{equation}
 {\cal D}_{R}=\big\{\lambda\in{ \mathbb{C} }\,\big|\,\text{Re}\Big(\frac{1}{\lambda}\Big)>\frac{1}{R}\Big\} \; ,
 \end{equation}
and let $\Sigma_{\sigma}$ be the half strip (see figure \ref{Borel:pic} for a representation of ${\cal D}_{R}$ and $\Sigma_{R}$)
\begin{equation}
\Sigma_{\sigma}=\big\{s\in{ \mathbb{C} }\,\big|\,\text{distance}(s,{ \mathbb{R}^{+}})<\frac{1}{\sigma}\} \; .
\end{equation} 

\begin{figure}[htb]
\[
\begin{array}{cc}
\includegraphics[width=4cm]{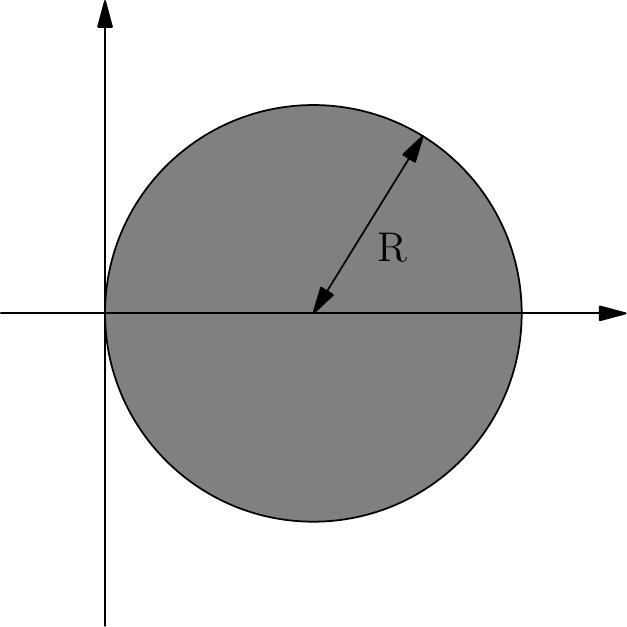}&
\includegraphics[width=6.7cm]{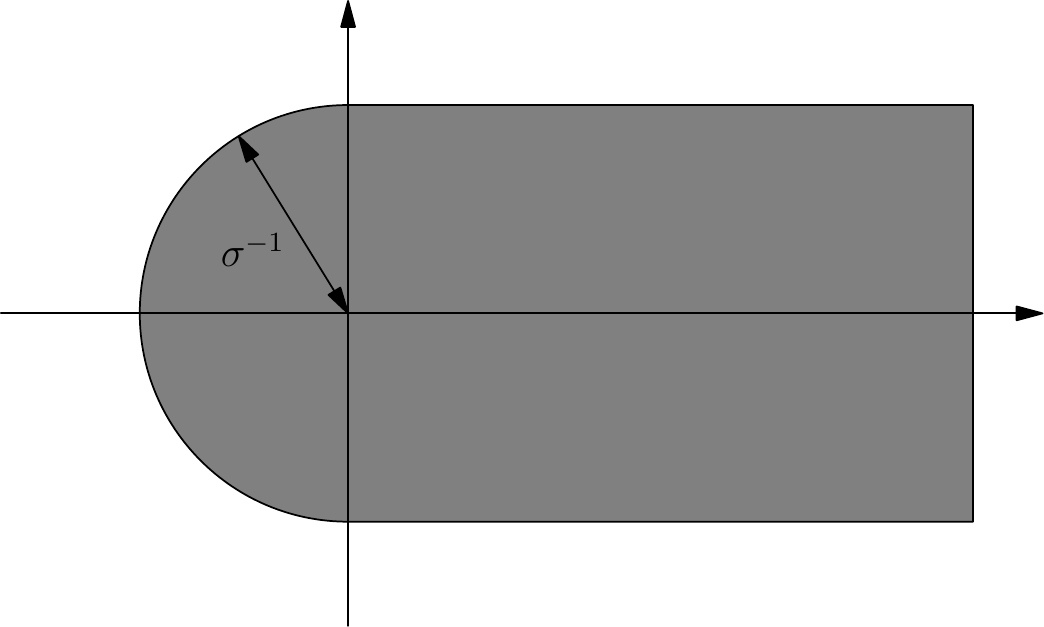}\\
{\cal D}_{R}&\Sigma_{\sigma}
\end{array}
\]
\caption{Domain of analyticity of $F$ and of its Borel transform $B$}
\label{Borel:pic}
\end{figure}

\begin{theorem}[Nevanlinna-Sokal \cite{Sokal}]
\label{NevanlinnaSokal} Let $R>0$ and $F_{\omega}(\lambda)$ be a family of analytic functions on the disc ${\cal D}_R$ depending on some 
parameter $\omega\in\Omega$. If there exists a sequence $a_{n}(\omega)$ of  functions of $\omega\in\Omega$ obeying, 
for any $n$, $\lambda\in{\cal D}_{R}$ and $\omega\in\Omega$ the uniform bound: 
  \begin{equation}
 \big| F_{\omega}(\lambda)-\sum_{m=0}^{n}a_{m}(\omega)\lambda^{m} \big|< C\sigma^{n+1}|\lambda|^{n+1}(n+1)!\label{Taylorbound} \; ,
 \end{equation}
 with $C$ and $\sigma$ two positive constants that do not depend on $\omega$, then the series
 \begin{equation}
B_{\omega}(s)=\sum_{n=0}^{\infty}\frac{a_{n}(\omega)}{n!}s^{n} \;,
 \end{equation}
 has radius of convergence $\sigma^{-1}$ and can be analytically continued in the strip $\Sigma_{\sigma}$.
Moreover, there exists a constant $B$ such that, for any $s\in\Sigma_{\sigma}$ and $\omega\in\Omega$, we have
\begin{equation}
\big|B_{\omega}(s)\big|\leq B \text{e}^{\frac{s}{R}} \;.
\end{equation}
Finally, for any $\lambda\in{\cal D}_{R}$, $F_{\omega}(\lambda)$ is given by the following absolutely convergent integral:
\begin{equation}
F_{\omega}(\lambda)=\int_{0}^{\infty}\!\!ds \,B_{\omega}(s)
\text{e}^{-\frac{s}{\lambda}}\label{Boreltransform} \; .
\end{equation}
\end{theorem}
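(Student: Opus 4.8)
I would follow \cite{Sokal}. The plan has three parts: (i) extract from the uniform remainder bound \eqref{Taylorbound} a bound on the coefficients $a_n(\omega)$, which already yields analyticity and a bound for $B_\omega$ on the disc $|s|<\sigma^{-1}$; (ii) represent the Borel transform by a contour integral built out of $F_\omega$ which displays its analytic continuation to $\Sigma_\sigma$ together with the exponential bound; (iii) recover $F_\omega$ from $B_\omega$ by inserting a Laplace representation of the Cauchy kernel and applying Fubini. For (i): writing \eqref{Taylorbound} at orders $n$ and $n-1$ and subtracting, the triangle inequality gives $|a_n(\omega)|\,|\lambda|^n\le C\sigma^{n+1}|\lambda|^{n+1}(n+1)!+C\sigma^n|\lambda|^n\,n!$ for every $\lambda\in{\cal D}_R$; dividing by $|\lambda|^n$ and letting $\lambda\to 0$ inside ${\cal D}_R$ (the right-hand side is continuous at the origin) yields the uniform bound $|a_n(\omega)|\le C\sigma^n\,n!$, hence $|a_n(\omega)/n!|\le C\sigma^n$. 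Thus $B_\omega(s)=\sum_n\frac{a_n(\omega)}{n!}s^n$ has radius of convergence at least $\sigma^{-1}$, with $|B_\omega(s)|\le C(1-\sigma|s|)^{-1}$ there, uniformly in $\omega$.

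For (ii) I would fix $r<R$ and consider $\frac{1}{2\pi i}\oint_{\partial{\cal D}_r}\frac{e^{s/\mu}}{\mu}\,F_\omega(\mu)\,d\mu$. On $\partial{\cal D}_r$ one has $\text{Re}(1/\mu)=1/r$, so $\text{Re}(s/\mu)=\text{Re}(s)/r-\text{Im}(s)\,\text{Im}(1/\mu)$; for $s$ real the exponential is bounded by $e^{\text{Re}(s)/r}$, and after splitting $F_\omega(\mu)=a_0(\omega)+a_1(\omega)\mu+R_{\omega,2}(\mu)$ with $|R_{\omega,2}(\mu)|\le 2C\sigma^2|\mu|^2$ (from \eqref{Taylorbound} at order $1$) the integral converges absolutely — the remainder piece decays like $|y|^{-3}\,dy$ in the coordinate $1/\mu=1/r+iy$, while the two leading monomials yield explicit polynomials in $s$ via $\frac{1}{2\pi i}\oint_{\partial{\cal D}_r}e^{s/\mu}\mu^{n-1}\,d\mu=s^n/n!$ for $s>0$. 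This defines $B_\omega$ on $\mathbb{R}^+$, and comparison of asymptotic expansions (Watson's lemma applied to the Laplace representation of part (iii)) identifies it there with the power series of (i). For $s$ with $\text{Im}(s)\neq0$ the integrand is no longer integrable on $\partial{\cal D}_r$, and one must deform the contour near $\mu=0$: because ${\cal D}_R$ contains a neighbourhood of the segment $(0,R)$, the two ends of the contour can be rotated into directions along which $\text{Re}(s/\mu)\to-\infty$, which restores absolute convergence. Estimating the deformed integral against the full remainder $F_\omega(\mu)=\sum_{m<N}a_m(\omega)\mu^m+R_{\omega,N}(\mu)$, $|R_{\omega,N}(\mu)|\le C\sigma^N|\mu|^N N!$, and optimizing over $N$, one finds that the deformation succeeds, with $|B_\omega(s)|\le B\,e^{\text{Re}(s)/r}$ uniformly in $\omega$, exactly when $s$ lies within distance $\sigma^{-1}$ of $\mathbb{R}^+$; letting $r\uparrow R$ produces the analytic extension to $\Sigma_\sigma$ and the stated exponential bound.

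For (iii): given $\lambda\in{\cal D}_R$, choose $r<R$ with $\lambda\in{\cal D}_r$ and use Cauchy's formula $F_\omega(\lambda)=\frac{1}{2\pi i}\oint_{\partial{\cal D}_r}\frac{F_\omega(\mu)}{\mu-\lambda}\,d\mu$; since $\text{Re}(1/\lambda)>1/r=\text{Re}(1/\mu)$ along the contour, one has $\frac{1}{\mu-\lambda}=\frac{1}{\mu\lambda}\int_0^\infty e^{-s/\lambda}e^{s/\mu}\,ds$. Substituting and exchanging the two integrations — which is absolutely convergent, hence legitimate, for the $R_{\omega,2}$ piece, and reduces to the explicit monomial computation above for the two leading terms — gives
\[
F_\omega(\lambda)=\frac{1}{\lambda}\int_0^\infty e^{-s/\lambda}\,B_\omega(s)\,ds ,
\]
which is \eqref{Boreltransform} up to the conventional $\lambda^{-1}$ normalization of the Borel transform. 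The absolute convergence of this integral for all $\lambda\in{\cal D}_R$ follows from $|B_\omega(s)|\le B\,e^{\text{Re}(s)/R}$ together with $\text{Re}(1/\lambda)>1/R$, and the identity extends to all of ${\cal D}_R$ by analyticity.

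I expect the contour deformation in part (ii) to be the main obstacle: the coefficient bound of part (i) alone gives analyticity of $B_\omega$ only on the disc $|s|<\sigma^{-1}$, and it is the tension between how far the contour can be rotated while staying in ${\cal D}_R$ and the $N!$ growth of the remainder that upgrades this disc to a neighbourhood of the entire ray $\mathbb{R}^+$ and fixes its width as $\sigma^{-1}$. This is precisely Sokal's sharpening of Watson's classical theorem; the remaining steps are bookkeeping with the Cauchy formula, Fubini's theorem and elementary inverse-Laplace integrals.
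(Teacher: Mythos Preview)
The paper does not prove this theorem: it is quoted from the literature (reference \cite{Sokal}) and used as a black box to derive Corollary~\ref{Boreltheorem}. There is therefore no ``paper's own proof'' to compare against.

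Your sketch is essentially Sokal's original argument, and the three-step outline (coefficient bound from the remainder, contour-integral representation of the Borel transform with deformation near $\mu=0$, Cauchy--Laplace--Fubini to close the loop) is the right one. Two minor remarks. First, in part (i) your radius-of-convergence statement should read ``at least $\sigma^{-1}$'' rather than ``exactly $\sigma^{-1}$''; nothing in the hypotheses prevents the $a_n$ from being much smaller than $C\sigma^n n!$. Second, you correctly spotted that the displayed formula \eqref{Boreltransform} as written in the paper is off by a factor $\lambda^{-1}$: with $B_\omega(s)=\sum_n a_n(\omega)s^n/n!$ one has $\int_0^\infty e^{-s/\lambda}s^n/n!\,ds=\lambda^{n+1}$, so the integral reproduces $\lambda F_\omega(\lambda)$, not $F_\omega(\lambda)$. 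Your derivation in (iii) produces the correct $\lambda^{-1}$ prefactor.
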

If the assumption of theorem \ref{NevanlinnaSokal} are fulfilled, $F_{\omega}$ is said to be Borel summable at $\lambda=0$, uniformly in $\omega$. 
In this case, $F_{\omega}$ can be uniquely recovered from the coefficients $a_{n}(\omega)$ 
using its Borel transform $s\mapsto B_{\omega}(s)$ and eq. \eqref{Boreltransform}.

For any $\lambda\in {\cal D}_R \subset {\cal C}$, $\big(\cos\frac{\arg\lambda}{2}\big) \ge 2^{-1/2}$ and there exists $R$
such that the perturbative reminder in Theorem \ref{perturbativecumulants:thm} is bounded as in eq. \eqref{Taylorbound}.

\begin{corollary}[Borel summability]
\label{Boreltheorem}
The rescaled cumulants  $N^{-2+|\pi|}K_{\pi}(\lambda, N)$ (with $|\pi|$ the number of parts in the partition $\pi$) 
are Borel summable in $\lambda$ at the origin, uniformly in $N$, so that
\begin{equation}
K_{\pi}(\lambda, N)=\int_{0}^{\infty}\!\!\!\!ds\,\text{e}^{-\frac{s}{\lambda}}
\bigg(\sum_{n\geq k}\frac{a_{\pi,n}(N)}{n!}s^{n}\bigg) \;, \end{equation}
in a disc included in ${\cal C}$ tangent to the imaginary axis at the origin and independent of $N$.
\end{corollary}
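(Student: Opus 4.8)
The plan is to obtain Corollary~\ref{Boreltheorem} by applying the Nevanlinna--Sokal criterion, Theorem~\ref{NevanlinnaSokal}, to the rescaled scalar cumulants, with the matrix size playing the role of the external parameter. Concretely, I would take $\Omega$ to be the set of positive integers $N$, set $F_{N}(\lambda)=N^{-2+|\pi|}K_{\pi}(\lambda,N)$, and let $a_{n}(N)=(-1)^{n}N^{-2+|\pi|}a_{\pi,n}(N)$ be the associated (rescaled, signed) Taylor coefficients, so that the finite sum in Theorem~\ref{perturbativecumulants:thm} equals $N^{2-|\pi|}\sum_{m=0}^{n}a_{m}(N)\lambda^{m}$ and $N^{2-|\pi|}\bigl(F_{N}(\lambda)-\sum_{m=0}^{n}a_{m}(N)\lambda^{m}\bigr)={\cal R}_{\pi,n}(\lambda,N)$. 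Two hypotheses must then be verified: analyticity of $F_{N}$ on a disc ${\cal D}_{R}$ tangent to the imaginary axis at the origin with $R$ independent of $N$; and the uniform Taylor bound \eqref{Taylorbound} for the remainder ${\cal R}_{\pi,n}$.

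For the analyticity I would first record the elementary inclusion ${\cal D}_{R}\subset{\cal C}$ for $R\le 1/4$. Writing $\lambda=\rho\,\mathrm{e}^{\mathrm{i}\theta}$, the defining inequality $\mathrm{Re}(1/\lambda)>1/R$ of ${\cal D}_{R}$ reads $\rho<R\cos\theta$ with $|\theta|<\pi/2$, whence $4\rho<4R\cos\theta\le\cos\theta\le\tfrac12(1+\cos\theta)=\cos^{2}(\theta/2)$, the middle step using $R\le 1/4$ and the last $\cos\theta\le 1$. Since Theorem~\ref{treecumulants:thm} asserts that the tree series for $K_{\pi}$ converges to an analytic function on all of ${\cal C}$, $F_{N}$ is analytic on ${\cal D}_{R}$. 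On ${\cal D}_{R}$ one also has $|\arg\lambda|<\pi/2$, hence $\cos^{2}(\arg\lambda/2)\ge 1/2$, and therefore the quantity $x:=4|\lambda|\big/\cos^{2}(\arg\lambda/2)$ governing the geometric part of the remainder satisfies $x\le 8|\lambda|\le 8R\cos\theta\le 8R$; imposing in addition $R<1/8$ makes $x<1$, which is exactly what makes the first term in the bracket of Theorem~\ref{perturbativecumulants:thm} summable.

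For the Taylor bound I would insert the estimate of Theorem~\ref{perturbativecumulants:thm} into \eqref{Taylorbound} and simplify with the inequalities just obtained. For a fixed partition $\pi$, hence a fixed $k$, the prefactor $2^{3k-1}k!\big/(\cos\tfrac{\arg\lambda}{2})^{k}$ is at most the $N$-independent constant $2^{3k-1}2^{k/2}k!$; the factor $\bigl(4|\lambda|\big/\cos^{2}(\arg\lambda/2)\bigr)^{n+1}=x^{n+1}$ is at most $(8|\lambda|)^{n+1}$; and, since $x\le 8R<1$, the bracket $\dfrac{x}{(1-x)^{n+2}}+2^{k+n+2}$ is bounded by $\bigl(8R\beta+2^{k+1}\bigr)\,\sigma_{0}^{\,n+1}$ with $\beta:=(1-8R)^{-1}$ and $\sigma_{0}:=\max\{\beta,2\}$, all independent of $N$. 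Multiplying out and dividing by $N^{2-|\pi|}$ gives, for every $n\ge 0$, every $\lambda\in{\cal D}_{R}$ and every sufficiently large $N$,
\[
\Bigl|\,F_{N}(\lambda)-\sum_{m=0}^{n}a_{m}(N)\lambda^{m}\,\Bigr|\;<\;C\,\sigma^{\,n+1}\,|\lambda|^{\,n+1}\,(n+1)!\,,
\]
with $\sigma:=8\sigma_{0}$ and $C$ collecting the $k$-dependent prefactor; this is precisely hypothesis \eqref{Taylorbound}. The finitely many small values of $N$ excluded by the ``$N$ large enough'' clause of Theorem~\ref{perturbativecumulants:thm} are dealt with one at a time: for each, $K_{\pi}(\,\cdot\,,N)$ is analytic on ${\cal D}_{R}$ with the same asymptotic series, so enlarging $C$ absorbs them, preserving uniformity in $N$.

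Theorem~\ref{NevanlinnaSokal} then applies: the Borel transform $B_{N}(s)=\sum_{n}\tfrac{a_{n}(N)}{n!}s^{n}$ has radius of convergence $\sigma^{-1}$, extends analytically to the half-strip $\Sigma_{\sigma}$ with the stated exponential bound, and reconstructs $F_{N}$ through the Borel--Laplace integral on ${\cal D}_{R}$; undoing the rescaling by $N^{2-|\pi|}$ yields the representation of $K_{\pi}(\lambda,N)$ asserted in the corollary on the $N$-independent disc ${\cal D}_{R}\subset{\cal C}$ tangent to the imaginary axis --- that is, uniform Borel summability. The step demanding the most care is the uniform Taylor bound: one must verify that \emph{every} occurrence of $N$ in the estimate of Theorem~\ref{perturbativecumulants:thm} is confined to the overall factor $N^{2-|\pi|}$ that has been scaled out, and that the two competing pieces of its bracket --- the resolvent/geometric tail $x/(1-x)^{n+2}$ and the purely combinatorial $2^{k+n+2}$ --- can be dominated \emph{simultaneously} by a single geometric factor $\sigma^{n+1}$. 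It is this last constraint, which forces $x\le 8R<1$ and hence $R<1/8$, that fixes how large the tangent disc in the statement can be taken.
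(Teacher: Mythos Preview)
Your proposal is correct and follows exactly the approach the paper takes: the paper's entire argument for Corollary~\ref{Boreltheorem} is the one sentence preceding it, namely that on a tangent disc ${\cal D}_R\subset{\cal C}$ one has $\cos\tfrac{\arg\lambda}{2}\ge 2^{-1/2}$ and hence the remainder bound of Theorem~\ref{perturbativecumulants:thm} reduces to the Nevanlinna--Sokal hypothesis~\eqref{Taylorbound}. You have simply spelled out the details --- the inclusion ${\cal D}_R\subset{\cal C}$, the constraint $R<1/8$ making $x<1$, the reduction of the bracket to a single geometric factor, and the handling of the finitely many small $N$ --- that the paper leaves implicit.
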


\paragraph{Topological expansion for cumulants}

The Taylor expansion at the origin of the cumulants leads to ribbon graphs drawn on surfaces with boundary. 
The Euler characteristic  of a surface determines the power of $N$. This is known as the topological expansion. 
While it is well known that the contributions of Feynman graphs of fixed genus are analytic functions in a disk of
fixed radius $\frac{1}{12}$, less is known about the remainder. We state an analyticity results and a bound for 
the remainder. 

\begin{theorem}[Topological expansion]
\label{topologicalcumulants:thm}
The cumulants $K_{\pi}(\lambda, N)$ are expanded in inverse powers of $N$ as
\begin{equation}
K_{\pi}(\lambda, N)=\sum_{h=0}^{g} N^{2-2g-|\pi|}K_{\pi,h}(\lambda)+\widetilde{R}_{\pi,g}(\lambda,N) \;, 
\end{equation}
where $K_{\pi,h}(\lambda)$ is a sum over ciliated ribbon graphs of genus $h$ whose broken faces correspond to the partition $\pi$, 
convergent for $|\lambda|<\frac{1}{12}$:
\begin{equation}
K_{\pi,h}(\lambda)=\sum_{G\text{ ribbon graph with}\atop\text{genus $h$ and broken faces corresponding to $\pi$}} \frac{(-\lambda)^{|E(G)|}}{|\text{Aut}\,G|} \; .
\end{equation}
The topological remainder $\widetilde{R}_{\pi,g}(\lambda,N)$ is a sum over LVE 
graphs with broken faces corresponding to $\pi$, genus $g+1$ and such that, if we remove the loop edge of highest label, we get a genus $g$ graph 
\begin{equation}
\widetilde{{\cal R}}_{\pi,g}(\lambda,N)=
\sum_{(G,T)\text{ LVE graphs with broken faces corresponding to $\pi$}\atop
g(G)=g+1 \text{ and } g(G-e_{|L(G,T)|})=g}
{\cal A}^{\pi}_{(G,T)}(\lambda, N) \; .
\end{equation}
This series converges for $\lambda\in\widetilde{\cal C}$ and in this domain the topological reminder is bounded by 
\begin{align*}
 & \big|\widetilde{R}_{\pi,g}(\lambda,N)\big| \leq \crcr
 &\leq  N^{2-2(g+1)-|\pi|} \frac{2^{3k} k! }{ \big(\cos\frac{\arg\lambda}{2}\big)^{k} } C''_{g+1}
\left( \frac{12|\lambda|}{\big(\cos\frac{\arg\lambda}{2}\big)^{2} } \right)^{2g+2} 
\frac{(4g+k+1)! }{ \left( 1 -    \frac{12|\lambda|}{\big(\cos\frac{\arg\lambda}{2}\big)^{2} } \right)^{4g+k} } \;, 
\end{align*}
 with $C''_g$ a constant depending only on the genus.
\end{theorem}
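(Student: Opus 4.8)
The plan is to obtain this theorem as the ``scalar cumulant'' specialization of the Topological Expansion with Remainder of Theorem \ref{topologicalexpansion:thm}, in exactly the same way that Theorem \ref{treecumulants:thm} follows from Theorem \ref{treeexpansion}. First I would differentiate the identity \eqref{topologicalexpansion:eq} with respect to the sources $J,J^{\dagger}$ as prescribed in Definition \ref{cumulantsdef}, and set $J=J^{\dagger}=0$. On the perturbative side this selects, among the ciliated ribbon graphs of genus $\le g$, exactly those with $k$ cilia distributed among broken faces according to the partition $\pi$; using Proposition \ref{structure:prop} (and the fact that the index structure factors out through the permutation sum) this produces the scalar cumulants $K_{\pi,h}(\lambda)$, which are the genus-$h$ truncations of the perturbative series. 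Their radius of convergence $\tfrac{1}{12}$ is the standard fact about fixed-genus map enumeration quoted after \eqref{cardioidtopological} (asymptotics from \cite{SchaefferChapuyMarcus}). The remainder $\widetilde{{\cal R}}_{g}$ of Theorem \ref{topologicalexpansion:thm}, after the same differentiation and use of Proposition \ref{prop:ampliWeingarten}, yields precisely $\widetilde{{\cal R}}_{\pi,g}(\lambda,N)=\sum_{(G,T)}{\cal A}^{\pi}_{(G,T)}(\lambda,N)$ over LVE graphs whose broken faces realize $\pi$, with $g(G)=g+1$ and $g(G-e_{|L(G,T)|})=g$. I should also note that differentiation of the ribbon-graph amplitude commutes with the (absolutely convergent, by Theorem \ref{topologicalexpansion:thm}) sum, which justifies the termwise manipulation at $J=J^{\dagger}=0$.

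The substantive part is the bound on $\big|\widetilde{R}_{\pi,g}(\lambda,N)\big|$, and here the key steps are: (i) bound a single amplitude ${\cal A}^{\pi}_{(G,T)}$ using the ingredients already assembled for Theorems \ref{treecumulants:thm} and \ref{perturbativecumulants:thm} — namely the uniform bound $\big\|(1-\mathrm{i}\sqrt{\lambda/N}\,A)^{-1}\big\|\le (\cos\tfrac{\arg\lambda}{2})^{-1}$ coming from $\lambda\in\widetilde{{\cal C}}$, the Gaussian integration producing at most $(\cos\tfrac{\arg\lambda}{2})^{-2|E(G)|}$ together with a factor counting the Wick contractions, the Weingarten-function estimate $|\text{Wg}(\tau\sigma^{-1},N)|\le C N^{-k-|C(\tau\sigma^{-1})|+\text{(correction)}}$, and the $N$-power bookkeeping $N^{|V(G)|-|E(G)|}\cdot N^{\chi(G)-\text{corrections}}$ that collapses to $N^{2-2(g+1)-|\pi|}$ by \eqref{Euler:eq}; (ii) sum over the LVE graphs $(G,T)$ contributing at fixed $g$. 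For step (ii) the crucial structural input is that $g(G-e_{|L(G,T)|})=g$ while $g(G)=g+1$: this forces the number of loop edges, hence $|E(G)|$, to be controlled — a genus-$(g{+}1)$ graph obtained from a genus-$g$ graph by adding one loop edge has at most $2(g+1)$ independent cycles beyond the tree, so $|L(G,T)|\le 2g+2$ and the number of such graphs on a fixed vertex set is polynomially bounded, with the combinatorial count producing the factorial $(4g+k+1)!$ and the genus-dependent constant $C''_{g+1}$. The geometric series in $\tfrac{12|\lambda|}{(\cos\frac{\arg\lambda}{2})^{2}}<1$ arises exactly as in Theorem \ref{perturbativecumulants:thm}: summing over the allowed positions and weakening/$s$-parameter integrals of the (at most $2g+2$) loop edges, each loop edge contributing a factor $\tfrac{12|\lambda|}{(\cos\frac{\arg\lambda}{2})^{2}}$ rather than $\tfrac{4|\lambda|}{(\cos\frac{\arg\lambda}{2})^{2}}$ because the fixed-genus subsums have radius $\tfrac1{12}$ rather than $\tfrac14$.

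I expect the main obstacle to be step (ii): carefully enumerating the LVE graphs at fixed excess genus and extracting the correct factorial growth $(4g+k+1)!$ and the prefactor $\big(\tfrac{12|\lambda|}{(\cos\frac{\arg\lambda}{2})^{2}}\big)^{2g+2}$ with denominator $\big(1-\tfrac{12|\lambda|}{(\cos\frac{\arg\lambda}{2})^{2}}\big)^{-(4g+k)}$. The delicate points are: proving that adding a single loop edge to a genus-$g$ ribbon graph cannot increase the genus by more than one while the stop rule iteratively builds up to genus $g+1$, so that the total number of loop edges is at most $2g+2$ along the whole construction; and showing that the number of faces created along the way is bounded so that the $N$-powers indeed saturate at $N^{2-2(g+1)-|\pi|}$ (this uses that each loop edge addition changes $|F|$ by $\pm 1$ and that the genus increments absorb the ``$-$'' cases). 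Once these combinatorial facts are in place, the analyticity of $\widetilde{R}_{\pi,g}$ on $\widetilde{{\cal C}}$ follows because the bound makes the sum over $(G,T)$ normally convergent on compact subsets, each amplitude ${\cal A}^{\pi}_{(G,T)}(\lambda,N)$ being analytic in $\lambda$ there (the resolvents and the $(-\lambda)^{|E(G)|}$ prefactor are analytic, and the parameter integrals are over compact domains), exactly as in Theorem \ref{treecumulants:thm}. The rest is routine: assemble (i) and (ii), invoke \eqref{Euler:eq} to identify the overall power of $N$, and collect constants into $C''_{g+1}$.
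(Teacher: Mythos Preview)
Your derivation of the expansion itself --- differentiating Theorem \ref{topologicalexpansion:thm} with respect to the sources and passing to trace invariants via Propositions \ref{prop:ampliWeingarten} and \ref{structure:prop} --- matches the paper, as does your single-graph bound in step (i), which is exactly \eqref{amplitudecumulantsinequality}.

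Step (ii), however, contains a genuine error. Your claim that a genus-$(g{+}1)$ LVE graph has $|L(G,T)|\le 2g+2$ is false; Euler's formula gives $|L(G,T)|=|E|-|V|+1=|F|-1+2(g+1)\ge 2(g+1)$, so the inequality goes the \emph{other} way. In an orientable ribbon graph, adding a loop edge between corners on the same face splits that face (genus unchanged), while adding it between corners on different faces merges them (genus increases by $1$). Hence the remainder $\widetilde{R}_{\pi,g}$ is a sum over LVE graphs with \emph{arbitrarily many} edges, not a finite sum, and your mechanism for the factor $\big(\tfrac{12|\lambda|}{(\cos\frac{\arg\lambda}{2})^{2}}\big)^{2g+2}$ and the factorial $(4g+k+1)!$ does not work.

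The paper's route for step (ii) is quite different. One first uses Lemma \ref{Heppsectors} to collapse the sum over pairs $(G,T)$ (and the labelling of loop edges) to a sum over ribbon graphs $G$ of genus $g{+}1$ with $n$ edges and $k$ cilia; the bound \eqref{amplitudecumulantsinequality} depends only on $G$, so this is legitimate. Then one controls the \emph{number} of such graphs by the fixed-genus map asymptotics of \cite{SchaefferChapuyMarcus}, namely $\widetilde{\cal N}(g{+}1,n,k)\le C''_{g+1}\,12^{n}n^{\frac{5}{2}g}\binom{2n}{k}$ (eq.~\eqref{ciliabound:eq}). Summing over $n\ge 2(g{+}1)$ produces the prefactor $\big(\tfrac{12|\lambda|}{(\cos\frac{\arg\lambda}{2})^{2}}\big)^{2g+2}$; the polynomial factor $n^{\frac{5}{2}g+k}$ is then crudely bounded by $\frac{(m+4g+k+1)!}{m!}$ (with $m=n-2g-2$), and the resulting series sums to the stated $(4g+k+1)!\big(1-\tfrac{12|\lambda|}{(\cos\frac{\arg\lambda}{2})^{2}}\big)^{-(4g+k)}$. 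So the $12$ arises from the growth rate of rooted maps of fixed genus, not from a ``per loop edge'' factor as you suggest.
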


\section{Intermediate field representation}\label{sec:intfield}

\label{intermediate}

To begin with, we introduce the intermediate field $A$ (a $N\times N$ Hermitian matrix) and write the quartic interaction as a Gau\ss ian integral:
\begin{equation}
\exp \left\{ -\frac{\lambda }{2N}\Tr(MM^{\dagger}MM^{\dagger}) \right\}
=\int dA \exp \left\{ -  \frac{1}{2}\Tr(A^{2}) + \mathrm{i}\sqrt{\frac{\lambda}{N}}\,\Tr(M^{\dagger}AM)  \right\} \;, 
\label{intermediateA}
\end{equation}
where the integral is over Hermitian $N\times N$ matrices and is assumed to be normalized. The new field $A$ propagates with the trivial Gau\ss ian
measure and the four valent interaction is traded for a three valent interaction involving an $A$ field and a $M$ and a $M^{\dagger}$ field.
This is illustrated in figure \ref{intermediate:fig}.

\begin{figure}[htb]
\[
\parbox{3cm}{\includegraphics[width=3cm]{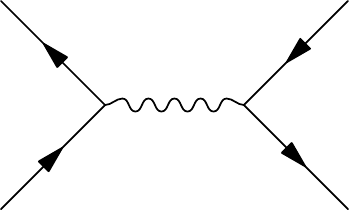}}\qquad
\Leftrightarrow\qquad\
\parbox{2cm}{\includegraphics[width=2cm]{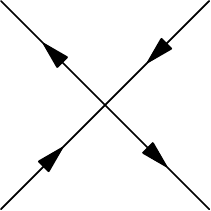}}
\]
\caption{Intermediate field representation.}
\label{intermediate:fig}.
\end{figure}

The generating function is thus:
\begin{multline}
{\cal Z}[J,J^{\dagger};\lambda,N]=\int dMdA\\
\exp\bigg\{-\frac{1}{2}\Tr(A^{2}) - 
\Tr\bigg[ M^{\dagger}\bigg(1-\mathrm{i}\sqrt{\frac{\lambda}{N}}A\bigg)M\bigg]
+\sqrt{N}\Tr(JM^{\dagger})+\sqrt{N}\Tr(MJ^{\dagger})\bigg\} \; ,
\label{AMintegral}
\end{multline}
The integral over the original matrices $M$ and $M^{\dagger}$ is a (non normalized) Gau\ss ian integral with covariance
$\big(1-\mathrm{i}\sqrt{\frac{\lambda}{N}}A\big)\otimes 1$. Taking into account that:
\begin{equation}
\det \left[ \Big(1-\mathrm{i}\sqrt{\frac{\lambda}{N}}A \Big)\otimes 1 \right] = 
\bigg[\det\Big(1-\mathrm{i}\sqrt{\frac{\lambda}{N}}A\Big)\bigg]^{N}=
\exp \bigg\{N\Tr\log\Big(1-\mathrm{i}\sqrt{\frac{\lambda}{N}}A\Big)\bigg\} \;,
\end{equation}
we obtain:
\begin{multline}
{\cal Z}[J,J^{\dagger};\lambda,N]=\\
\int dA
\exp\bigg\{ - \frac{1}{2}\Tr(A^{2})
-N\Tr\log\left(1-\mathrm{i}\sqrt{\frac{\lambda}{N}}A\right)-N
\Tr \left[ J\bigg(1-\mathrm{i}\sqrt{\frac{\lambda}{N}}A \bigg)^{-1}J^{\dagger}\right] \bigg\}\label{Aintegral} \; .
\end{multline}

We thus have three different expressions \eqref{Mintegral}, \eqref{AMintegral} and \eqref{Aintegral} 
for the generating function of the cumulants $\log{\cal Z}[J,J^{\dagger};\lambda,N]$. Their Feynman graph expansions are constructed as follows.

The expression \eqref{AMintegral} involves two types of fields $A$ and $M$ so that the Feynman graphs have tow types of edges. The $M$ edges (solid edges) 
are oriented from $M^{\dagger}$ to $M$ since $M$ is a complex matrix while the $A$ edges (wavy edges) are not because $A$ is Hermitian. 
There are 3-valent vertices corresponding to $\Tr(M^{\dagger}AM)$ and univalent vertices, also viewed as extra half-edges 
(external legs in the physics literature) corresponding to $\Tr(JM^{\dagger})$ and $\Tr(MJ^{\dagger})$.
Note that all the variables we integrate over are matrices so that we have a cyclic ordering at each vertex and
the Feynman graphs are ribbon graphs. We embed the trivalent vertices turning in the clockwise direction so that 
the $A$ edges are on the right when we follow the orientation of the $M$ edges.

\begin{figure}[htb]
\begin{center}
\includegraphics[width=3cm]{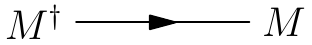}\qquad\qquad\qquad
\includegraphics[width=3cm]{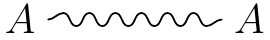}\qquad\\
\bigskip
\bigskip
\includegraphics[width=3.5cm]{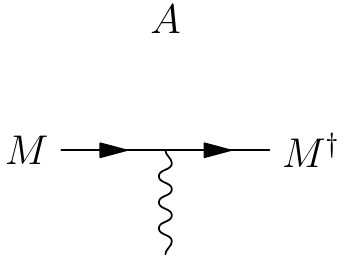}\qquad\qquad
\includegraphics[width=2.5cm]{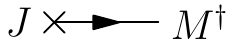}\qquad\qquad
\includegraphics[width=2.5cm]{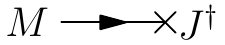}\qquad
\end{center}
\caption{Propagators and interaction}
\end{figure}

Integrating over $A$  in \eqref{AMintegral} before proceeding to the perturbative expansion, we recover the integral \eqref{Mintegral}. 
Its Feynman rules involve only the $M$ edges (which are oriented) as well as an even number of univalent vertices (external legs) 
and tetravalent vertices. The latter involve two incoming edges and two outgoing ones, 
alternating in cyclic order around the vertex. 

Integrating over $M$ in \eqref{AMintegral} before proceeding to the perturbative expansion, we recover the integral \eqref{Aintegral}. 
Its Feynman rules involve the $A$ edges and two types of vertices of arbitrary valence (see figure \ref{comparison:fig}). 
\begin{figure}[htb]
\[
\parbox{2cm}{\includegraphics[width=2cm]{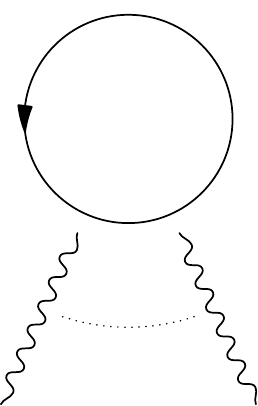}}\qquad
\Leftrightarrow\qquad\
\parbox{2cm}{\includegraphics[width=2cm]{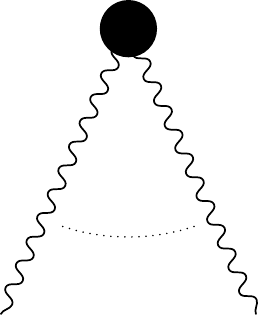}}
\qquad\qquad
\parbox{2cm}{\includegraphics[width=2cm]{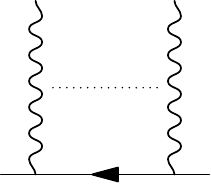}}\qquad
\Leftrightarrow\qquad\
\parbox{2cm}{\includegraphics[width=2cm]{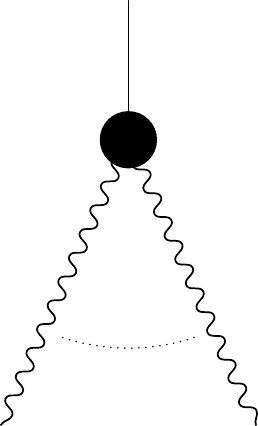}}
\]
\caption{Intermediate field vertices.}
\label{comparison:fig}
\end{figure}
The first one is an ordinary ribbon vertex, arising from the term $\Tr\log\big(1-\mathrm{i}\sqrt{\frac{\lambda}{N}}A\big)$. The second one comes 
from the coupling to the source $\Tr J\big(1-\mathrm{i}\sqrt{\frac{\lambda}{N}}A\big)^{-1}J^{\dagger}$. It is a ribbon vertex with a cilium on a corner 
(the insertion of the source). We illustrate the three representations for a graph contributing to the order 2 cumulant in figure \ref{graphexample:fig}.
\begin{figure}[htb]
\[
\begin{array}{ccccc}
\parbox{6cm}{\includegraphics[width=6cm]{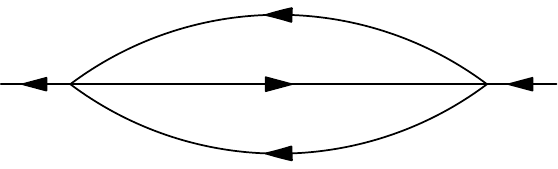}}&
\Leftrightarrow&
\parbox{6cm}{\includegraphics[width=6cm]{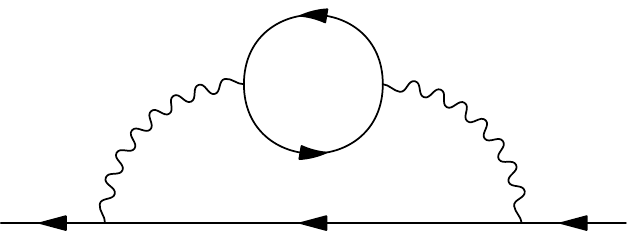}}&
\Leftrightarrow&
\parbox{1.5cm}{\includegraphics[width=1.5cm]{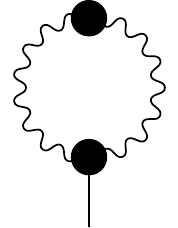}}\\
&&&&\\
\text{ variable $M$}&&
\text{ variables $M$ and $A$}&&
\text{ variable $A$}
\end{array}
\]
\caption{Three equivalent graphs contributing to the order 2 cumulant.}
\label{graphexample:fig}
\end{figure}

The perturbative expansion can be performed either starting from \eqref{Mintegral} or starting from \eqref{Aintegral}, using:
\begin{equation}
\bigg(1-\mathrm{i}\sqrt{\frac{\lambda}{N}}A\bigg)^{-1}=\sum_{n=0}^{\infty}
\bigg(\mathrm{i}\sqrt{\frac{\lambda}{N}}\bigg)^{n}A^{n}
\end{equation}
and performing the Gau\ss ian integral over $A$.

Comparing the two perturbative expansion for the order 2 cumulant yields the following bijection.
\begin{proposition}
\label{bijection}
The intermediate field representation yields the following  bijection
\[
\left\{\text{
\begin{minipage}{4cm}
Connected alternating 2-in 2-out ribbon graphs with $n$ vertices and $2m$ external edges
\end{minipage}
}
\right\}
\quad\Leftrightarrow\quad
\left\{\text{
\begin{minipage}{4cm}
Connected ribbon graphs with $n$ edges and $m$ ciliated vertices
\end{minipage}
}
\right\}
\]
\end{proposition}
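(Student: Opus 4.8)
The plan is to realise the bijection explicitly, as the composition of two mutually inverse surgeries on ribbon graphs: at each tetravalent $M$-vertex one \emph{splits} the vertex by inserting an $A$-edge, turning the quartic $M$-graph into a two-colour $AMM^{\dagger}$-graph as in figures \ref{intermediate:fig} and \ref{graphexample:fig}; then one \emph{contracts} all the $M$-edges, collapsing each maximal $M$-strand to a single $A$-vertex, ciliated precisely when the strand ran between two sources. Reading this construction backwards gives the inverse map, and the bookkeeping of the counts and of the ribbon (cyclic-order) data is what remains to be checked.

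First I would describe the forward map. Let $\mathcal{G}$ be a connected alternating $2$-in $2$-out ribbon graph with $n$ tetravalent vertices and $2m$ univalent source vertices ($m$ of each orientation, coming from $\Tr(JM^{\dagger})$ and $\Tr(MJ^{\dagger})$). At a tetravalent vertex the four incident half-edges occur alternately as heads and tails with respect to the $M$-orientation; this orientation together with the clockwise-embedding convention of figure \ref{comparison:fig} (``$A$ on the right of the $M$-strand'') singles out a \emph{unique} chord separating the vertex into two trivalent corners, along which one inserts a wavy $A$-edge, producing two trivalent $\Tr(M^{\dagger}AM)$ vertices. After doing this at all $n$ vertices every $M$-strand is bivalent, so the $M$-edges organise into disjoint oriented closed cycles and oriented open paths running between two source half-edges. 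Contracting each $M$-edge and erasing the trivalent vertices it passed through collapses each closed $M$-cycle to one unciliated vertex and each open $M$-path to one ciliated vertex, the cilium marking the corner where the strand was interrupted by its two sources; the cyclic order of the $A$-half-edges around such a vertex is the order in which the strand met them. The result is a connected ribbon graph $H$ whose edges are the $n$ inserted $A$-edges and with exactly $m$ ciliated vertices, and connectedness is preserved because both splitting and contraction preserve it.

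The inverse map reverses the two surgeries. Given a connected ribbon graph $H$ with $n$ edges and $m$ ciliated vertices, blow up each unciliated vertex of valence $d$ into a clockwise cycle of $d$ trivalent $\Tr(M^{\dagger}AM)$ vertices threaded by an $M$-cycle, attaching the $i$-th $A$-half-edge to the $i$-th trivalent vertex in cyclic order, and blow up each ciliated vertex of valence $d$ into an open chain of $d$ such trivalent vertices threaded by an $M$-path, placing the two source vertices at its ends so that the outermost corner carries the cilium. Reconnect the $A$-half-edges according to the edges of $H$ and then contract every $A$-edge, each contraction merging its two trivalent endpoints into one tetravalent $\Tr(MM^{\dagger}MM^{\dagger})$ vertex. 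The clockwise convention guarantees that the tetravalent vertex so produced is alternating $2$-in $2$-out, and the $2m$ source endpoints become the external edges; since blowing up is literally the reverse of $M$-contraction and $A$-contraction the reverse of the $A$-insertion split, splitting-then-contracting and blowing-up-then-$A$-contracting are strict inverses of one another. Counting is then immediate: $n$ tetravalent vertices $\leftrightarrow$ $n$ $A$-edges, and $2m$ external legs $\leftrightarrow$ $m$ ciliated vertices.

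The main obstacle I expect is not conceptual but the careful tracking of the ribbon (cyclic-order) data through both surgeries: showing that the chord used in the split is genuinely canonical, so that the split is not merely well defined up to a reflection; that the induced cyclic order of the $A$-half-edges around a contracted $M$-strand is well defined; and that the clockwise convention really makes the two composite maps strict inverses rather than inverses up to an orientation flip. A clean way to make all of this rigorous, and an independent check of the count, is to observe that the two surgeries are precisely the graphical content of the identity \eqref{intermediateA}: the perturbative expansions of \eqref{Mintegral} and of \eqref{Aintegral} produce the two sides of the claimed correspondence with identical combinatorial weights, so that distinctness of the resulting monomials in $J$ and $J^{\dagger}$ forces the map to be a bijection. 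One may also note, although it is not needed for the statement, that both surgeries preserve the Euler characteristic, so the bijection is in fact genus-preserving.
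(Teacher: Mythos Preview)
Your proof is correct and follows essentially the same route as the paper. Both constructions pass through the intermediate two-colour $AMM^{\dagger}$ graph of figures~\ref{intermediate:fig}--\ref{graphexample:fig}: you phrase the forward map as ``split each tetravalent vertex by an $A$-edge, then contract the resulting $M$-strands'', while the paper phrases it as ``colour the faces black/white according to the side of the oriented $M$-edges and take the black faces as the new vertices'' (the generalised medial-graph construction). These are two descriptions of the same surgery, since the closed (resp.\ open) $M$-strands after your split are precisely the boundaries of the unbroken (resp.\ broken) black face-pieces in the paper's colouring, and the canonical chord you identify via the clockwise convention is exactly what makes the black/white distinction well defined. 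The inverse maps likewise agree. Your extra remarks on genus preservation and on the check via matching monomials in the perturbative expansions of \eqref{Mintegral} and \eqref{Aintegral} are sound and not in the paper's text, but are consistent with its discussion.
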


 This bijection can be described explicitly as follows. Starting with a connected alternating 2-in 2-out ribbon graph with $n$ vertices and $2m$ external edges, 
 we observe that its faces come in three types. If the face does not contain an external edge, either it is on the left or it is on the right of all 
 the edges which bound it. We color the first kind of faces in black and the second kind white. If a face is broken, then it contains an even
 number of external edges that are alternatively incoming and outgoing. The pieces of the face comprised between two consecutive external edges are either 
 on the left or on the right of all the edges which bound them. We color these pieces of faces in black and respectively in white.
 We join pairs of incoming and outgoing external edges separated by black pieces of broken faces into cilia. The black faces (ciliated or not) 
 define the vertices of the intermediate field graphs. Two such vertices are joined by an intermediate field edge if and only 
 if the associated faces meet at vertex.

Conversely, given a intermediate field graph, we expand its vertices into (black) faces, and we cut the cilia into two. 
We then form tetravalent vertices by contracting the intermediate field edges. 

This construction is a generalization of the medial graph construction to graphs with external edges (or equivalently, cilia). Indeed, if there
is no cilium on the $A$ graph, then the associated $M$ graph is its medial graph. The basic features of this bijection are summarized in the following  table.

\begin{center}
\begin{tabular}{|c|c|}
\hline
matrix model&intermediate field\\
\hline
vertex&edge\\
\hline
black face & vertex\\
\hline
white face & face\\
\hline 
edge&corner\\
\hline
pair of external legs&cilium\\
\hline
\end{tabular}
\label{correspondence}
\center{Matrix model graphs - intermediate field graphs correspondence.}
\end{center}

Let us end this section by giving two consequences of the intermediate field representation of the matrix model. 

First, the number of planar graphs with $n$ vertices contributing to the order 2 cumulant (2-point function in physics parlance) 
can be evaluated explicitly using the Schwinger-Dyson equation for the intermediate field. The details of this
computation are relegated to the appendix \ref{SDappendix} and the result is:
\begin{equation}
\frac{2\cdot3^{n}}{n+2}C_{n}\qquad\text{with}\quad C_{n}=\frac{(2n)!}{n!^{2}(n+1)}\quad(\text{Catalan numbers})   \; .\label{2point}
\end{equation}
This is nothing but the number of planar bipartite quadrangulations with $n$ quadrangles, rooted at an edge. 
Bipartiteness means that the vertices of the quadrangulation
are colored in black and white and the edges only connect vertices of different colors.  The $M$ graph is 
the dual of the quadrangulation. The black/white coloring of the faces of the $M$ graph induces an orientation of the 
$M$ edges in such a way that all the $M$ vertices are alternating 2-in 2-out. 

The intermediate field graphs are in bijection with bipartite quadrangulations with $m$ marked edges. The intermediate field graph is 
obtained by connecting the pair of black vertices
on each quadrangle by an wavy $A$ edge (and adding a cilium for every incidence of a marked edge at a black vertex). We thus obtain:
\begin{proposition}
\label{dualbijection}
\[
\left\{\text{
\begin{minipage}{4cm}
Bipartite quadrangulations with $n$ faces of genus $g$ having $m$ marked edges
\end{minipage}
}
\right\}
\quad\Leftrightarrow\quad
\left\{\text{
\begin{minipage}{4cm}
Connected ribbon graphs of genus $g$ with $n$ edges and $m$ cilia.
\end{minipage}
}
\right\}
\]
\end{proposition}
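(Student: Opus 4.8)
The plan is to obtain Proposition~\ref{dualbijection} as the composition of two bijections: the intermediate field bijection of Proposition~\ref{bijection}, read from ciliated ribbon graphs to alternating $2$-in/$2$-out ribbon graphs with external legs, and the classical duality between $4$-valent maps carrying a proper two-coloring of their faces and bipartite quadrangulations. Since both of these are bijections and both are carried out on one fixed surface, the composition will automatically be a genus-preserving bijection, and it only remains to match the combinatorial decorations.

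First I would make precise the classical duality. Starting from a connected ribbon graph $G$ of genus $g$ with $n$ edges and $m$ cilia, Proposition~\ref{bijection} together with the medial-graph description given after it produces a connected alternating $2$-in/$2$-out ribbon graph $M$ embedded on the same genus-$g$ surface: its $n$ tetravalent vertices sit on the $n$ edges of $G$, its faces are two-coloured (black faces $=$ vertices of $G$, white faces $=$ faces of $G$), and the two colours alternate around each tetravalent vertex, which is exactly the $2$-in/$2$-out condition; the $m$ cilia of $G$ have been cut in two, giving $2m$ external legs. Taking the ribbon dual $Q=M^{\star}$ yields a map of genus $g$ with one vertex per face of $M$ and one face per vertex of $M$; since every vertex of $M$ is tetravalent, every face of $Q$ is a quadrangle, so $Q$ has exactly $n=|E(G)|$ quadrangles. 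The two-colouring of the faces of $M$ becomes a proper two-colouring of the vertices of $Q$, so $Q$ is bipartite, and the alternation of the colours around the vertices of $M$ is precisely the statement that $Q$ is a genuine bipartite quadrangulation. This is the well-known fact that $4$-valent maps with a proper two-colouring of faces are exactly the duals of bipartite quadrangulations; I would use it in this form and check that it respects the cilia/marked-edge bookkeeping.

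Then I would track the decorations. Each external leg of $M$ lies in a black piece of face of $M$; a pair of external legs (together with the black piece between them), which is the data of one cilium of $G$, lies in a black face of $M$, hence dually is an incidence of an edge at a black vertex of $Q$. Turning this incidence into a mark on that edge identifies the $m$ cilia of $G$ with $m$ (necessarily distinct, since the edge of a flag is determined by the flag) marked edges of $Q$ incident to black vertices. Together with $|E(G)|=|V(M)|=n$ and the invariance of the genus under duality and under the medial construction — both performed on the fixed surface, with Euler's relation \eqref{Euler:eq} bookkeeping the faces — this gives a genus-preserving bijection between connected ribbon graphs of genus $g$ with $n$ edges and $m$ cilia and bipartite quadrangulations of genus $g$ with $n$ faces and $m$ marked edges. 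Its inverse is the map already described in the text: connect the two black vertices of each quadrangle by a wavy edge and add a cilium for each incidence of a marked edge at a black vertex; one checks it is inverse to $G\mapsto Q$ by composing the explicit inverses of the two bijections used.

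The step I expect to be the main obstacle is not the genus count — which is immediate once one knows all constructions stay on one surface — but the careful treatment of the boundary data: one must verify that the operation ``cut a cilium into two external legs'' of Proposition~\ref{bijection} is turned by duality into exactly ``mark one edge incident to a black vertex,'' including when several cilia lie in the same broken face, and that no spurious factor of two appears, the bipartition of $Q$ being the canonical one fixed by the construction (black vertices $=$ vertices of $G$), so that $Q$ carries no data beyond that of a bipartite quadrangulation with $m$ marked edges. Once this dictionary is nailed down, the proposition follows.
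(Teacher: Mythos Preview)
Your approach is correct and is essentially the one the paper takes: the paper states (just before the proposition) that the $M$ graph is the dual of the bipartite quadrangulation and that the intermediate field graph is obtained by joining the two black vertices of each quadrangle by a wavy edge and adding a cilium for each incidence of a marked edge at a black vertex, which is exactly the inverse of your composition ``Proposition~\ref{bijection} then dualize.'' Your write-up is simply a more careful version of this, with the genus and decoration bookkeeping made explicit; note that since every edge of a bipartite quadrangulation has exactly one black endpoint, your ``marked edge incident to a black vertex'' is just ``marked edge,'' so no spurious factor appears.
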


 \begin{figure}[htb]
 \begin{center}
 \begin{tabular}{cc}
\parbox{7cm}{\includegraphics[width=7cm]{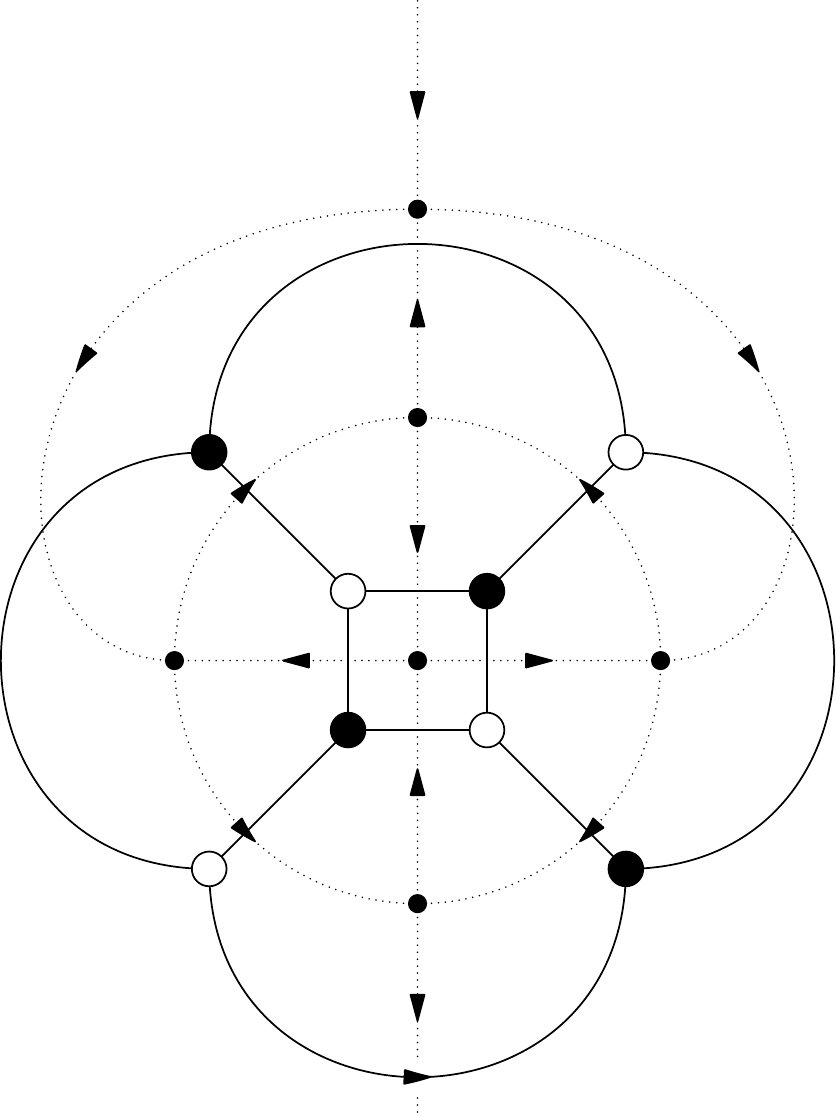}}\qquad&\qquad
\parbox{6cm}{\includegraphics[width=6cm]{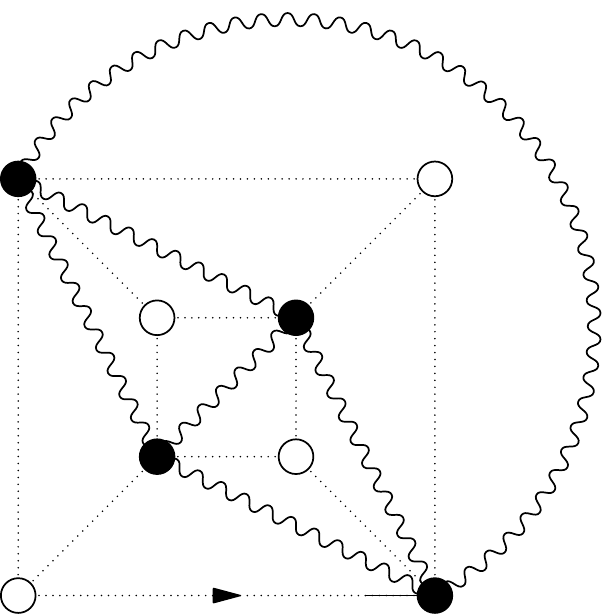}}\\
Dual quadrangulation & Ciliated graph
 \end{tabular}
 \end{center}
\caption{Bijection between bipartite quadrangulations with marked edges and ribbon graphs with cilia.}
 \label{bijectionquadra:fig}
 \end{figure}
 
Second, the intermediate field can be used to study the analyticity properties of ${\cal Z}$.
We first have the following bound.
 \begin{lemma}
 \label{boundresolvent}
Writing $\lambda=\rho\mathrm{e}^{\mathrm{i}\theta}$ with $\rho>0$, we have:
 \begin{equation}
 \Big\|\Big(1-\mathrm{i}\sqrt{\frac{\lambda}{N}}A\Big)^{-1}\Big\|\leq\frac{1}{\cos\frac{\theta}{2}} \; ,
 \end{equation}
 where $ \Big\| \cdot \Big\| $ denotes the operator norm.
 \end{lemma}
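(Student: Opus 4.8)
The plan is to reduce the operator-norm bound to a scalar estimate by diagonalizing the Hermitian matrix $A$. Since $A$ is Hermitian, it has real eigenvalues $a_1,\dots,a_N$, and the operator $1-\mathrm{i}\sqrt{\lambda/N}\,A$ is normal (indeed it is a polynomial in $A$), so its operator norm is the maximum modulus of its eigenvalues and, likewise, the operator norm of its inverse is $\max_j |1-\mathrm{i}\sqrt{\lambda/N}\,a_j|^{-1}$. Writing $\lambda=\rho\mathrm{e}^{\mathrm{i}\theta}$ with $\rho>0$ and $-\pi<\theta<\pi$, we have $\sqrt{\lambda/N}=\sqrt{\rho/N}\,\mathrm{e}^{\mathrm{i}\theta/2}$ for the principal branch, so each eigenvalue of $1-\mathrm{i}\sqrt{\lambda/N}\,A$ is of the form $z(a)=1-\mathrm{i}\,\sqrt{\rho/N}\,\mathrm{e}^{\mathrm{i}\theta/2}\,a$ with $a\in\mathbb{R}$. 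Hence it suffices to prove the scalar inequality $|z(a)|\geq \cos\frac{\theta}{2}$ for every $a\in\mathbb{R}$.

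For the scalar estimate, I would compute the real part explicitly. Write $u=\sqrt{\rho/N}\,a\in\mathbb{R}$, so that $z=1-\mathrm{i}u\,\mathrm{e}^{\mathrm{i}\theta/2}=1-\mathrm{i}u(\cos\frac{\theta}{2}+\mathrm{i}\sin\frac{\theta}{2})=1+u\sin\frac{\theta}{2}-\mathrm{i}u\cos\frac{\theta}{2}$. Then
\[
|z|^2=\Big(1+u\sin\tfrac{\theta}{2}\Big)^2+\Big(u\cos\tfrac{\theta}{2}\Big)^2
=1+2u\sin\tfrac{\theta}{2}+u^2\sin^2\tfrac{\theta}{2}+u^2\cos^2\tfrac{\theta}{2}
=1+2u\sin\tfrac{\theta}{2}+u^2 .
\]
This is a quadratic in $u$ whose minimum over $u\in\mathbb{R}$ is attained at $u=-\sin\frac{\theta}{2}$ and equals $1-\sin^2\frac{\theta}{2}=\cos^2\frac{\theta}{2}$. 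Therefore $|z(a)|^2\geq\cos^2\frac{\theta}{2}$ for all real $a$, and since $-\pi<\theta<\pi$ gives $\cos\frac{\theta}{2}>0$, we conclude $|z(a)|\geq\cos\frac{\theta}{2}>0$. Taking the maximum over the eigenvalues of the inverse yields
\[
\Big\|\Big(1-\mathrm{i}\sqrt{\tfrac{\lambda}{N}}A\Big)^{-1}\Big\|
=\max_{1\le j\le N}\frac{1}{|z(a_j)|}\leq\frac{1}{\cos\frac{\theta}{2}} ,
\]
which is the claimed bound.

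There is no real obstacle here; the only point requiring a little care is the branch of the square root and the reduction via normality. The mild subtlety is making sure that $1-\mathrm{i}\sqrt{\lambda/N}\,A$ is genuinely invertible (so that the operator norm of the inverse makes sense), which is exactly guaranteed by the strict positivity $\cos\frac{\theta}{2}>0$ coming from the restriction $-\pi<\theta<\pi$; on the negative real axis ($\theta=\pm\pi$) the bound degenerates, consistent with the cut appearing in the cardioid domains ${\cal C}$ and $\widetilde{{\cal C}}$.
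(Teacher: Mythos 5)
Your proof is correct, but it takes a genuinely different route from the paper's. You diagonalize the Hermitian matrix $A$, note that $1-\mathrm{i}\sqrt{\lambda/N}\,A$ is normal so the operator norm of its inverse is the reciprocal of the smallest eigenvalue modulus, and then minimize the scalar quantity $|1-\mathrm{i}u\,\mathrm{e}^{\mathrm{i}\theta/2}|^{2}=1+2u\sin\frac{\theta}{2}+u^{2}$ over $u\in\mathbb{R}$, which gives the sharp lower bound $\cos^{2}\frac{\theta}{2}$. The paper instead factors out $\sqrt{\lambda}$ and uses a Schwinger-type integral representation,
\[
\Big(1-\mathrm{i}\sqrt{\tfrac{\lambda}{N}}A\Big)^{-1}=\frac{1}{\sqrt{\lambda}}\int_{0}^{\infty}d\alpha\,\exp\Big\{-\frac{\alpha}{\sqrt{\lambda}}+\mathrm{i}\frac{\alpha}{\sqrt{N}}A\Big\} \;,
\]
bounds $\big\|\exp\big(\mathrm{i}\alpha A/\sqrt{N}\big)\big\|\leq 1$ because $A$ is Hermitian, and integrates the remaining scalar factor using $\mathrm{Re}\big(1/\sqrt{\lambda}\big)=\cos\frac{\theta}{2}/\sqrt{\rho}$. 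Both arguments are elementary and rest on the same two inputs (self-adjointness of $A$ and the principal branch of $\sqrt{\lambda}$ with $-\pi<\theta<\pi$). Your spectral argument is more direct and makes the sharpness of the constant $\cos\frac{\theta}{2}$ transparent, while the paper's integral representation never invokes normality or diagonalization, so it transfers verbatim to situations where one only knows that $\mathrm{e}^{\mathrm{i}\alpha A}$ is a contraction (for instance unbounded self-adjoint operators). Your closing remarks on invertibility and on the degeneration of the bound at $\theta=\pm\pi$, consistent with the cut of the cardioid domains, are accurate.
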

 
 \begin{proof}
 To prove the lemma, it is convenient to factor $\sqrt{\lambda}$ and write
\begin{equation}
\Big(1-\mathrm{i}\sqrt{\frac{\lambda}{N}}A\Big)^{-1}
=\frac{1}{\sqrt{\lambda}}
\int_{0}^{\infty}d\alpha\,\exp\Big\{  -\alpha  \frac{1}{\sqrt{\lambda}} + \alpha \frac{\text{i}}{\sqrt{N}} A\Big\} \;,
\end{equation}
therefore, the operator norm is bounded by:
\begin{equation}
\Big\|\Big(1-\mathrm{i}{ \sqrt{\frac{\lambda}{N}}}A\Big)^{-1}
\Big\|
\leq\frac{1}{|\sqrt{\lambda}|}
\int_{0}^{\infty}\exp\Big\{-\alpha\text{Re}\big(\frac{1}{\sqrt{\lambda}}\big)\Big\}
\Big\|\exp\Big\{   \alpha \frac{\text{i}A}{\sqrt{N}}\Big\}\Big\|
=\frac{1}{\cos\frac{\theta}{2}} \; .
\end{equation}
\end{proof}
We can then rewrite \eqref{Aintegral} as:
\begin{equation}
{\cal Z}[J,J^{\dagger};N,\lambda]=
\int dA
\frac{\exp-\bigg\{\frac{1}{2}\Tr (A^{2})
+N\Tr J\bigg(1-\mathrm{i}\sqrt{\frac{\lambda}{N}}A
\bigg)^{-1}J^{\dagger}\bigg\} }{\Big[\det \Big(1-\mathrm{i}\sqrt{\frac{\lambda}{N}}A
\Big)\Big]^{N}} \; ,
\end{equation}
and use lemma \ref{boundresolvent} to show that this integral is convergent for $\theta\in (-\pi,\pi)$. 
As the integrand is analytic for $\theta\in (-\pi,\pi)$ we have the following result.
\begin{proposition}
${\cal Z}[J,J^{\dagger};N,\lambda]$ is analytic in $\lambda$ on the cut plane ${\mathbb C}-{\mathbb R}^{-}$. 
\end{proposition}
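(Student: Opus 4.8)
The plan is to read off analyticity directly from the intermediate field representation \eqref{Aintegral}, exhibiting $\mathcal Z[J,J^\dagger;N,\lambda]$ as an $A$-integral of an integrand that is holomorphic in $\lambda$ and locally uniformly integrable. Fix the principal branch of $\sqrt{\lambda}$ on $\mathbb C\setminus\mathbb R^-$ and write $\lambda=\rho\,\mathrm e^{\mathrm i\theta}$ with $\rho>0$, $\theta\in(-\pi,\pi)$. The first step is to check that, for each fixed Hermitian matrix $A$, the integrand
\[
f(\lambda,A)=\exp\Big\{-\tfrac12\Tr(A^2)-N\Tr\log\big(1-\mathrm i\sqrt{\lambda/N}\,A\big)-N\Tr\Big[J\big(1-\mathrm i\sqrt{\lambda/N}\,A\big)^{-1}J^\dagger\Big]\Big\}
\]
is a holomorphic function of $\lambda$ on the cut plane. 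Diagonalising $A$ with real eigenvalues $a_k$ and setting $z_k=\sqrt{\rho/N}\,a_k$, the eigenvalues of $1-\mathrm i\sqrt{\lambda/N}\,A$ are $1+z_k\sin\tfrac\theta2-\mathrm i z_k\cos\tfrac\theta2$, whose modulus squared equals $(z_k+\sin\tfrac\theta2)^2+\cos^2\tfrac\theta2$; since $\cos\tfrac\theta2>0$ on the cut plane these numbers never lie on $(-\infty,0]$. Hence $\det(1-\mathrm i\sqrt{\lambda/N}\,A)\neq0$, the resolvent exists, $\Tr\log$ is unambiguously defined by the principal branch, and $f(\cdot,A)$ is holomorphic because $\lambda\mapsto 1-\mathrm i\sqrt{\lambda/N}\,A$ is holomorphic while matrix inversion and $\Tr\log$ are holomorphic wherever the spectrum avoids $(-\infty,0]$.

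The second step is to produce a dominating function, uniform on compact subsets $K\subset\mathbb C\setminus\mathbb R^-$. On such a $K$ one has $\cos\tfrac\theta2\geq\delta$ for some $\delta\in(0,1]$, so the eigenvalue computation above gives $|1-\mathrm i\sqrt{\lambda/N}\,a_k|\geq\cos\tfrac\theta2\geq\delta$, whence
\[
\big|\det\big(1-\mathrm i\sqrt{\lambda/N}\,A\big)\big|^{-N}=\exp\Big\{-N\sum_k\log\big|1-\mathrm i\sqrt{\lambda/N}\,a_k\big|\Big\}\leq\delta^{-N^2}.
\]
By Lemma \ref{boundresolvent}, $\big\|(1-\mathrm i\sqrt{\lambda/N}\,A)^{-1}\big\|\leq\delta^{-1}$, and therefore $\big|\Tr[J(1-\mathrm i\sqrt{\lambda/N}\,A)^{-1}J^\dagger]\big|\leq\delta^{-1}\Tr(JJ^\dagger)$. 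Combining the two estimates, $|f(\lambda,A)|\leq C_{K,J}\,\mathrm e^{-\frac12\Tr(A^2)}$ for all $\lambda\in K$, with $C_{K,J}$ independent of $\lambda$ and $A$, and the Gaussian weight makes this bound integrable against $dA$.

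The final step is the standard conclusion: $f(\lambda,A)$ is holomorphic in $\lambda$ for each $A$, measurable in $A$ for each $\lambda$, and locally uniformly dominated by an integrable function, so $\mathcal Z[J,J^\dagger;N,\lambda]=\int dA\,f(\lambda,A)$ is holomorphic on $\mathbb C\setminus\mathbb R^-$. One may phrase this either via Morera's theorem together with Fubini (for any triangle $\gamma\subset K$ one has $\oint_\gamma\mathcal Z\,d\lambda=\int dA\oint_\gamma f\,d\lambda=0$), or by differentiating under the integral sign after bounding $\partial_\lambda f$ on $K$ by a Cauchy estimate on a slightly larger compact neighbourhood. I expect no genuine obstacle beyond bookkeeping; the one point that needs a little care — but which is immediate from the factorisation $(z_k+\sin\tfrac\theta2)^2+\cos^2\tfrac\theta2$ underlying Lemma \ref{boundresolvent} — is the uniform control of the inverse determinant $|\det(1-\mathrm i\sqrt{\lambda/N}\,A)|^{-N}$ in the denominator.
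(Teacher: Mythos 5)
Your proposal is correct and follows essentially the same route as the paper: starting from the intermediate field representation \eqref{Aintegral}, using the eigenvalue estimate behind Lemma \ref{boundresolvent} (i.e. $|1-\mathrm i\sqrt{\lambda/N}\,a_k|\ge\cos\frac{\theta}{2}$) to dominate the integrand uniformly on compacts of the cut plane by a Gaussian, and then concluding analyticity of ${\cal Z}$ from analyticity of the integrand. The only difference is that you spell out the inverse-determinant bound and the Morera/dominated-convergence step, which the paper leaves implicit.
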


However, analyticity of ${\cal Z}[J,J^{\dagger};N,\lambda] $ in the cut plane does not imply analyticity of its logarithm, 
as ${\cal Z}[J,J^{\dagger};N,\lambda]  $ may have zeros. In the next section we will see that in order to establish an 
analyticity result for the logarithm one needs to work some more.

\section{Proofs of the theorems regarding the generating function}\label{sec:proof1}

In this section, we establish the constructive theorems 
\ref{treeexpansion}, \ref{perturbativegenerating:thm} and \ref{topologicalexpansion:thm}
regarding the generating function of the cumulants.

\subsection{The Loop Vertex Expansion (proof of Theorem \ref{treeexpansion})}\label{LVEsec}

The basic ingredient in establishing the constructive theorems stated in section \ref{mainresultsec} is the loop vertex expansion, introduced 
by Rivasseau in \cite{LVE}. Starting with \eqref{Aintegral}, we expand the exponential as a power series, convergent if $\lambda\in{\mathbb C}-{\mathbb R}^{-}$, 
\begin{multline}
{\cal Z}[J,J^{\dagger}]=
\sum_{n=0}^{\infty}\frac{(-1)^{n}}{n!}
\int d\mu(A)\,\,
\bigg[N\Tr\log\bigg(1-\mathrm{i}\sqrt{\frac{\lambda}{N}}A\bigg)+N\Tr J\bigg(1-\mathrm{i}\sqrt{\frac{\lambda}{N}}A
\bigg)^{-1}J^{\dagger}\bigg]^{n} \; ,
\end{multline}
where $d\mu(A)=dA
\exp-\frac{1}{2}\Tr(A^{2})$ is the normalized Gau\ss ian measure on Hermitian matrices (and we dropped the arguments
$\lambda$ and $N$ of ${\cal Z}$ in order to simplify the notation). 

We then use the replica trick and replace (for the term of order $n$) the integral over a single matrix $A$ by integral over a $n$-uple 
of $N\times N$ Hermitian matrices $A=(A_{i})_{1\leq i\leq n}$. The replicated Gau\ss ian integral is performed with a normalized 
Gau\ss ian measure $d\mu_{C}(A)$ with a degenerated covariance  $C_{ij}=1$. Recall that for any real positive symmetric matrix 
$C_{ij}$ the Gau\ss ian integral is: 
\begin{equation}
\int d\mu_{C}(A) \,A_{i|ab}A_{j|cd}=C_{ij}\,\delta_{ad}\delta_{bc} \;,
\end{equation}
where $A_{i|ab}$ denotes the matrix element  in the row $a$ and column $b$ of the matrix $A_{i}$.
The Gau\ss ian integral with a degenerated covariance, is equivalent to inserting $n-1$ 
Dirac distributions $\delta(A_{1}\!-\!A_{2})\cdots\delta(A_{n-1}\!-\!A_{n})$, since all the $n-1$
matrices $A_{1}\!-\!A_{2}$, $\dots$,$A_{n-1}\!-\!A_{n}$ span the kernel of $C_{ij}$. This can easily be seen
by regularizing the covariance as $C_{ij}+\epsilon \delta_{ij}$ and letting $\epsilon\rightarrow 0$. Equivalently,
at the perturbative level, the uniform covariance generates all the edges (with the appropriated weights) in the Feynman graph expansion that
connect the various replicas together.

The generating function then reads:
\begin{multline}
{\cal Z}[J,J^{\dagger}]=\\\sum_{n=0}^{\infty}\frac{(-1)^{n}}{n!}
\int d\mu_{C}(A)
\prod_{i=1}^{n}\Bigg[N\Tr\log\bigg(1-\mathrm{i}\sqrt{\frac{\lambda}{N}}A_{i}\bigg)+N\Tr J\bigg(1-\mathrm{i}\sqrt{\frac{\lambda}{N}}A_{i}
\bigg)^{-1}J^{\dagger}\Bigg] \; .
\end{multline}
Remark that the Gau\ss ian measure can alternatively be written as the differential operator:
 \[
 \int d\mu_{C }(A)  \; F(A) =
 \left[ e^{\frac{1}{2} \sum_{i,j} \Tr \left[\frac{\partial }{ \partial A_i} \frac{\partial }{ \partial A_j}\right] } F(A)\right]_{A_i=0} \; .
 \]

We now apply the Bridges-Kennedy-Abdessalam-Rivasseau forest formula (see appendix \ref{BKARsec}). We start by replacing 
the covariance $C_{ij}=1$ by $C_{ij}(x)=x_{ij}$ (and $x_{ij}=x_{ji}$) evaluated at $x_{ij}=1$ for $i\neq j$ and $C_{ii}(x)=1$. Then ${\cal Z}[J,J^{\dagger}]$ is
given as a sum over forests:
\begin{multline}
{\cal Z}[J,J^{\dagger}]=
\sum_{F\,\text{labeled forest}}\frac{(-1)^{n}}{n!}\int_{0}^{1} \prod_{(i,j)\in F}dt_{ij}  \;\;   \left( \prod_{(i,j)\in F}\frac{\partial }{\partial x_{ij}} \right) \\
\times \bigg\{
\int d\mu_{C(x)}(A)\prod_{i=1}^{n}\Big[N\Tr\log\big(1-\mathrm{i}\sqrt{\frac{\lambda}{N}}A_{i}\big)+N\Tr J\bigg(1-\mathrm{i}\sqrt{\frac{\lambda}{N}}A_{i}
\bigg)^{-1}J^{\dagger}\Big]\bigg\}\bigg|_{x_{ij}=v^F_{ij}} \; ,
\label{LVEcombinatorial}
\end{multline}
 where $n$ is the number of vertices of $F$, $i$ and $j$ label the vertices of the forest, there is one weakening parameter
 $t_{ij}$ per edge $(i,j)$ of the forest and 
 \begin{equation}
v^{F}_{ij}=\left\{\begin{array}{ccl}
\inf_{(k,l)\in{ P}_{i\leftrightarrow j}^{{F}}} t_{kl}&\text{if}&  { P}_{i\leftrightarrow j}^{{F}} \; \text{exists} \\
0&\text{if}&  { P}_{i\leftrightarrow j}^{{F}} \; \text{does not exist} \
\end{array}\right. \; ,
\end{equation}
 where ${ P}_{i\leftrightarrow j}^{{F}} $ is the unique path in $F$ joining $i$ and $j$ (and the infimum is set to $1$ if $i=j$). 
 The Gau\ss ian measure 
 can alternatively be written as the differential operator:
 \[
 \int d\mu_{C(x)}(A)  F(A) =
 \left[ e^{\frac{1}{2} \sum_{i,j} x_{ij} \Tr \left[\frac{\partial }{ \partial A_i} \frac{\partial }{ \partial A_j}\right] } F(A)\right]_{A_i=0} \; .
 \]
 
 In order to extract the logarithm we use the following lemma.
\begin{lemma}
Let ${\cal W}(T)$ be a weight associated to a tree $T$, independent of the labels of its vertices and define the weight of a forest 
${\cal W} (F)$ to be the product of the weights of its trees (connected components). Then, as formal series:
\begin{equation}
\log \sum_{F \text{ labeled forests}}\frac{{\cal W}(F)}{|V(F)|!}=
\sum_{T\text{ labeled trees}}\frac{{\cal W}(T)}{|V(T)|!} \; ,
\end{equation}
with $|V(F)|$ and $|V(T)|$ the number of vertices in $F$ and $T$.
\end{lemma}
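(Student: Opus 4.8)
The plan is to read this identity as an instance of the exponential formula for exponential generating functions (the ``set of connected components'' decomposition). First I would introduce a formal bookkeeping variable $x$ recording the number of vertices and set
\[
E(x)=\sum_{T\text{ labeled tree}}\frac{{\cal W}(T)}{|V(T)|!}\,x^{|V(T)|}\;,
\qquad
\Phi(x)=\sum_{F\text{ labeled forest}}\frac{{\cal W}(F)}{|V(F)|!}\,x^{|V(F)|}\;,
\]
with the convention that the empty forest contributes the constant term $1$ to $\Phi$. Grouping terms by the number of vertices on the label set $[n]=\{1,\dots,n\}$ gives $E(x)=\sum_{n\ge1}\frac{e_n}{n!}x^n$ with $e_n=\sum_{T\text{ on }[n]}{\cal W}(T)$ a finite sum, and $\Phi(x)=\sum_{n\ge0}\frac{\varphi_n}{n!}x^n$ with $\varphi_n=\sum_{F\text{ on }[n]}{\cal W}(F)$.

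The combinatorial core is the canonical bijection between forests on $[n]$ and pairs $\bigl(\pi,(T_B)_{B\in\pi}\bigr)$ consisting of a set partition $\pi$ of $[n]$ together with a tree $T_B$ on each block $B\in\pi$ --- the blocks being precisely the connected components of the forest. Since ${\cal W}(F)=\prod_{B\in\pi}{\cal W}(T_B)$ by the definition of the forest weight, and since ${\cal W}(T_B)$ depends only on $|B|$ and not on which labels lie in $B$ (so that, after relabeling, $\sum_{T\text{ on }B}{\cal W}(T)=e_{|B|}$), summing over all forests on $[n]$ yields $\varphi_n=\sum_{\pi\vdash[n]}\prod_{B\in\pi}e_{|B|}$. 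This is exactly the formula expressing the coefficients of $\exp E(x)$ in terms of set partitions, so $\Phi(x)=\exp E(x)$ as formal power series in $x$; the exponential is well defined because $E$ has vanishing constant term, so only finitely many partitions enter each order. Taking the logarithm --- legitimate since $\Phi$ has constant term $1$ --- gives $\log\Phi(x)=E(x)$, and specializing $x=1$ produces the stated identity, now read as an equality of formal series in the remaining parameters ($\lambda$ and the sources $J,J^{\dagger}$), each order of which involves only trees and forests of bounded size.

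I do not expect a genuine obstacle here; the argument is just a packaging of a standard identity. The only points that need care are bookkeeping ones: verifying that the $1/|V(\cdot)|!$ normalizations are precisely those for which the ``disjoint union of trees'' decomposition exponentiates (this is exactly where the hypothesis that ${\cal W}$ is independent of the vertex labels is used, since it is what makes $\sum_{T\text{ on }B}{\cal W}(T)$ depend only on $|B|$), and making explicit the sense in which the identity holds ``as formal series'' --- either by retaining the auxiliary variable $x$, or equivalently by noting that in the intended application each ${\cal W}(T)$ already carries positive powers of $\lambda$ (and of $JJ^{\dagger}$) that play the role of $x$ and guarantee the sums are finite order by order.
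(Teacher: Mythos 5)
Your argument is correct and is essentially the paper's own proof: both amount to the exponential formula, i.e.\ the multinomial/set-partition expansion identifying $\sum_{F}{\cal W}(F)/|V(F)|!$ with $\exp\sum_{T}{\cal W}(T)/|V(T)|!$, with label-independence of ${\cal W}$ handling the relabeling of vertices among blocks. The only difference is presentational (you decompose forests into components with an auxiliary variable $x$, while the paper expands the exponential), so nothing further is needed.
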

\begin{proof}
This identity is equivalent to
\begin{equation}
 \sum_{F \text{ labeled forests}}\frac{{\cal W}(F)}{|V(F)|!}=
\exp \sum_{T\text{ labeled trees}}\frac{{\cal W}(T)}{|V(T)|!} \; ,
\end{equation}
which follows by expanding the right hand side using the multinomial formula, and taking due care of the relabeling of the vertices.
\end{proof}
As both the differential operator and the Gau\ss ian measure factor over the trees in the forest $F$ we obtain:
 \begin{align*}
  \log{\cal Z}[J,J^{\dagger}]& = \sum_{T\,\text{labeled trees}}\frac{(-1)^{n}}{n!}\int_{0}^{1} \prod_{(i,j)\in T}dt_{ij} \; 
\left( \prod_{(i,j)\in T}\frac{\partial }{\partial x_{ij}} \right) 
\crcr
& \times \bigg\{
\int d\mu_{C(x)}(A)\prod_{i=1}^{n}\Big[N\Tr\log\big(1-\mathrm{i}\sqrt{\frac{\lambda}{N}}A_{i}\big)+N\Tr J\bigg(1-\mathrm{i}\sqrt{\frac{\lambda}{N}}A_{i}
\bigg)^{-1}J^{\dagger}\Big]\bigg\}\bigg|_{v^T_{ij}} \; ,\crcr
 v^{T}_{ij} & = \inf_{(k,l)\in{ P}_{i\leftrightarrow j}^{T} } t_{kl} \; .
\end{align*}
where ${ P}_{i\leftrightarrow j}^{T} $ is the unique path in the tree $T$ joining $i$ and $j$. 
Starting from the expression of the Gau\ss ian integral as a differential operator it is immediate to see that:
\begin{equation}
\frac{\partial}{\partial x_{ij}}
\bigg(\int d\mu_{C(x)}(A)F(A)\bigg)=
\int d\mu_{C(x)}(A) \; \Tr \left[\frac{\partial }{ \partial A_i} \frac{\partial }{ \partial A_j}\right] F(A) \; .
\end{equation}
This differential operator acts on two vertices ($i$ and $j$) and generates an edge connecting them. 
Taking into account that:
\begin{equation}
\frac{\partial }{ \partial A_{i|ab}}  \bigg(1-\text{i}\sqrt{\frac{\lambda}{N}}A_{i}\Big)_{cd}^{-1}
= \left( \frac{-\lambda}{N} \right) \bigg(1-\text{i}\sqrt{\frac{\lambda}{N}}A_{i}\Big)_{ca}^{-1} \bigg(1-\text{i}\sqrt{\frac{\lambda}{N}}A_{i}\Big)_{bd}^{-1} \; ,
\end{equation}
we observe that a resolvent operator $\bigg(1-\text{i}\sqrt{\frac{\lambda}{N}}A_{i}\Big)^{-1} $ is associated to each corner of a vertex.
Multiple derivatives acting on the same vertex (corresponding to multiple edges hooked to it) can act on either of the corners of the vertex
and split it into two.
The logarithm of ${\cal Z}$ becomes thus a sum over plane trees and the source terms $JJ^{\dagger}$ correspond to cilia decorating some 
of the vertices of the trees. We thus obtain:
\begin{align}
 \log {\cal Z}[J,J^{\dagger}]  = & \sum_{T\,\atop\text{LVE tree}}  {\cal A}_{T}[J,J^{\dagger},\lambda,N]  \; ,\crcr
{\cal A}_{T}[J,J^{\dagger},\lambda,N]   = &  \frac{(-\lambda)^{|E(T)|}  N^{|V(T)| - |E(T) |} }{|V(T)|!} \int_0^1 \prod_{e\in E(T)}dt_{e}\, \crcr
  & \times \int d\mu_{C_{T}}(A) \; 
 \Tr\Big[
\mathop{\overrightarrow{\prod}}\limits_{c\in \partial T\,\text{corner}}
\Big(1-\text{i}\sqrt{\frac{\lambda}{N}}A_{i_{c}}\Big)^{-1}
(JJ^{\dagger})^{\eta_{c}}\Big] \; ,
\label{LVEexpansion}
\end{align}
where $i_{c}$ is the label of the vertex the corner $c$ is attached to, $\eta_{c}\in\{0,1\}$ depending 
on whether the corner $c$ is followed ($\eta_{c}=1$) or not ($\eta_{c}=0$) by a source insertion and
the covariance $C_T$ is 
\begin{equation}
(C_{T})_{ij}=\inf_{(k,l)\in P^{T}_{i\leftrightarrow j}} t_{kl} 
\end{equation}
and the infimum is set to $1$ if $i=j$.

We have thus established the expansion \eqref{treeexpansion:eq} in Theorem \ref{treeexpansion}.
In order to establish Theorem \ref{treeexpansion} it remains to study the domain on which the expansion \eqref{LVEexpansion} is convergent. 

We first bound the amplitude of each tree using lemma \ref{boundresolvent}:
\begin{align}
\bigg|
\Tr\Big[
\mathop{\overrightarrow{\prod}}\limits_{c\in \partial T\,\text{corner}}
\Big(1-\text{i}\sqrt{\frac{\lambda}{N}}A_{i_{c}}\Big)^{-1}
(JJ^{\dagger})^{\eta_{c}}\Big]
\bigg|
&\leq
N
\mathop{\overrightarrow{\prod}}\limits_{c\in \partial T\,\text{corner}}
\Big\|
\Big(1-\text{i}\sqrt{\frac{\lambda}{N}}A_{i_{c}}\Big)^{-1}\Big\|
\Big\|JJ^{\dagger}\Big\|^{\eta_{c}}\crcr
&\leq\frac{N\|JJ^{\dagger}\|^{k}}{
\big(\cos\frac{\arg\lambda}{2}\big)^{2(|E(T)|+k)}} \; ,
\label{treeamplitudebound}
\end{align} 
where $k$ denotes the number of cilia of the tree.

Then we bound the number of LVE trees with a given number of edges and cilia.
\begin{lemma}[Counting LVE trees]
\label{coutingtrees:lem}
The number of LVE trees with $n$ edges and  $k$ cilia
\begin{equation}
{\cal N}(n,k)=\frac{(2n+k-1)!\,(n+1)!}{(n+k)!\,(n+1-k)!\,k!}
\leq 2^{2n+k-1}\,(n-1)!\,\frac{(n+1)!}{(n+1-k)!\,k!} \; ,
\end{equation}
\end{lemma}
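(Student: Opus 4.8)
The plan is to count \emph{LVE trees} with $n$ edges and $k$ cilia by decomposing the counting problem into two independent pieces: the underlying combinatorial (plane, i.e. ribbon) tree structure on $n+1$ labelled vertices, and the placement of the cilia. An LVE tree with $n$ edges has $n+1$ vertices and, being a plane tree, has a single face whose boundary walk visits $2n$ corners (each of the $n$ edges is traversed twice). A cilium is a half-edge inserted at a vertex; since there is at most one cilium per vertex and the cilium sits at a corner, choosing the ciliated configuration amounts to choosing which $k$ of the corners carry a cilium, subject to the constraint that no two chosen corners belong to the same vertex. I would first recall (or cite \cite{advanced} or a standard plane-tree enumeration) that the number of rooted plane trees on $n$ edges is the Catalan number $C_n=\frac{(2n)!}{n!(n+1)!}$, and that accounting for vertex labels multiplies by $(n+1)!$ but then one must be careful about the rooting/symmetry bookkeeping — this is exactly the kind of place where the $\frac{(n+1)!}{(n+1-k)!\,k!}$ type factors will appear.

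Concretely, the cleanest route is a generating-function / Lagrange-inversion argument, or equivalently a direct bijective count: an LVE tree on $n$ edges with its $2n$ boundary corners is encoded by a sequence, and inserting $k$ cilia into distinct vertices among the $n+1$ vertices, where a vertex of degree $d$ offers $d$ corners, leads to a sum that telescopes. I expect the exact formula
\[
{\cal N}(n,k)=\frac{(2n+k-1)!\,(n+1)!}{(n+k)!\,(n+1-k)!\,k!}
\]
to drop out either from the coefficient extraction in a bivariate generating function $\sum_{n,k}{\cal N}(n,k)z^n w^k$ satisfying an algebraic equation coming from the recursive (rooted at a corner) decomposition of plane trees with cilia, or from a Cycle-Lemma / cyclic-sieving style argument on the $2n+k-1$ total "slots" (the $2n$ corners plus $k-1$ separators, or similar). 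The binomial-looking building blocks $\binom{2n+k-1}{n+k}$-ish and $\binom{n+1}{k}$ strongly suggest a two-step choice: first a ballot-type object of size $2n+k-1$, then a choice of $k$ vertices out of $n+1$ to host the cilia, with the $\frac{1}{k!}$ coming from the fact that the cilia in distinct broken faces are otherwise interchangeable at the level of the amplitude's symmetry.

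For the inequality, the plan is elementary estimation. Write
\[
{\cal N}(n,k)=\frac{(2n+k-1)!}{(n+k)!}\cdot\frac{(n+1)!}{(n+1-k)!\,k!},
\]
and bound the first factor: $\frac{(2n+k-1)!}{(n+k)!}=(n+k+1)(n+k+2)\cdots(2n+k-1)$ is a product of $n-1$ integers each at most $2n+k-1\le 2(n+k)\le \cdots$; more efficiently, $\frac{(2n+k-1)!}{(n+k)!\,(n-1)!}=\binom{2n+k-1}{n-1}\le 2^{2n+k-1}$ since it is a single binomial coefficient and the full row of Pascal's triangle of width $2n+k-1$ sums to $2^{2n+k-1}$. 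Hence $\frac{(2n+k-1)!}{(n+k)!}\le 2^{2n+k-1}(n-1)!$, and multiplying by the untouched factor $\frac{(n+1)!}{(n+1-k)!\,k!}$ gives exactly the claimed right-hand side
\[
{\cal N}(n,k)\le 2^{2n+k-1}\,(n-1)!\,\frac{(n+1)!}{(n+1-k)!\,k!}.
\]

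The main obstacle is the exact enumeration formula, not the inequality: getting the precise factorials right — in particular the interplay between the $(n+1)!$ from vertex labels, the division by $k!$ for the interchangeable cilia, and the $(n+k)!$ and $(n+1-k)!$ in the denominators — requires setting up the bijection or generating function carefully and checking small cases ($n=0,1,2$ and $k=0,1,2$, comparing e.g. against the $3^n C_n$ planar two-point count quoted in \eqref{2point} after the medial-graph correspondence of Proposition \ref{dualbijection}). Once the closed form is pinned down and verified on these base cases, both the identity and the bound follow as above.
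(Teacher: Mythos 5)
Your inequality step is correct and is in fact exactly the paper's argument: writing $\frac{(2n+k-1)!}{(n+k)!\,(n-1)!}=\binom{2n+k-1}{n-1}\leq 2^{2n+k-1}$ and multiplying by the untouched factor $\frac{(n+1)!}{(n+1-k)!\,k!}$ gives the stated bound. The problem is the exact enumeration, which is the actual content of the lemma and which you do not prove: you propose several possible routes (bivariate generating function, Lagrange inversion, a Cycle-Lemma argument, a corner-insertion bijection, verification on small cases) but carry none of them through, and you yourself flag this as ``the main obstacle''. Announcing that the formula should ``drop out'' of one of these schemes, plus checking base cases, does not establish the identity ${\cal N}(n,k)=\frac{(2n+k-1)!\,(n+1)!}{(n+k)!\,(n+1-k)!\,k!}$; in particular you never account for where the $(n+k)!$ in the denominator comes from, which is precisely the non-trivial part of the count.

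For comparison, the paper's proof is a two-line factorization: the number of LVE trees with $n+1$ labelled vertices and cilia placed on a \emph{fixed} set of $k$ vertices equals $\frac{(2n+k-1)!}{(n+k)!}$, a result imported from the earlier tensor-model paper cited as \cite{Razvannonperturbative}, and one then multiplies by $\binom{n+1}{k}=\frac{(n+1)!}{(n+1-k)!\,k!}$ for the choice of which vertices carry the cilia. Note that in this decomposition the $\frac{1}{k!}$ is simply part of that binomial coefficient; your explanation of it as a symmetry factor for ``interchangeable cilia in distinct broken faces'' is a misreading of the structure. Your corner-based picture (a cilium occupies one of the $d$ corners of a degree-$d$ vertex, at most one per vertex) is consistent with what ${\cal N}(n,k)$ counts, so a bijective derivation along your lines could be completed, but as written the central identity is assumed rather than proved, either directly or by citation.
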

\begin{proof}
The number of LVE trees with $n+1$ 
vertices and $k$ cilia on a fixed set of vertices labeled $i_{1},\dots,i_{k}$ is $ \frac{(2n+k-1)!}{(n+k)!}$ (see \cite{Razvannonperturbative}).
To obtain ${\cal N}(n,k)$ one simply multiplies the latter by the possible choices of $k$ vertices among $n+1$. The bound follows 
by the binomial formula $\frac{(2n+k-1)!}{(n+k)!\,(n-1)!}\leq 2^{2n+k-1}$.
\end{proof}

Consequently, the sum over LVE trees is bounded by:
\begin{align}\label{eq:boundsumtrees}
\bigg|\sum_{\text{LVE tree }} {\cal A}_{T}[J,J^{\dagger},\lambda,N]\bigg|
&\leq 
\sum_{n=0}^{\infty}\sum_{k=1}^{n+1}
\frac{N^{2}|\lambda|^{n}\|JJ^{\dagger}\|^{k}}{(n+1)!\big(\cos\frac{\arg\lambda}{2}\big)^{2n+k}} \; 
2^{2n+k-1}\,(n-1)!\,\frac{(n+1)!}{(n+1-k)!\,k!}\\\nonumber
&\leq N^{2}\sum_{n=0}^{\infty}
\frac{2^{2n-1}|\lambda|^{n}}{\big(\cos\frac{\arg\lambda}{2}\big)^{2n}}\bigg(1+\frac{2\|JJ^{\dagger}\|}{\cos\frac{\arg\lambda}{2}}\bigg)^{n+1} \; .
\end{align}

Each ${\cal A}_{T}[J,J^{\dagger},\lambda,N]$ is analytic in the cut plane ${\mathbb C}-{\mathbb R}^{-}$. Furthermore,
For every $\lambda$ inside the cardioid:
\[
 {\cal C} = \left\{  \lambda\in \mathbb{C} \;, \qquad  4|\lambda|< \cos^{2}\Big({\frac{\arg\lambda}{2}}\Big)  \right\} \; ,
\]
it is possible to find a $\epsilon_{\lambda}>0$ such that,
\begin{equation}
\frac{4|\lambda|}{\big(\cos\frac{\arg\lambda}{2}\big)^{2}}\bigg(1+\frac{2\epsilon_{\lambda} }{\cos\frac{\arg\lambda}{2}}\bigg) <1 \; ,
\end{equation}
hence theorem \ref{treeexpansion} follows.

Remark furthermore that for $\lambda \in {\cal C}$ and $\|JJ^{\dagger}\|<\epsilon_{\lambda}$, $\log {\cal Z}[J, J^{\dagger}]$ 
is analytic in $\lambda$.

\subsection{Perturbative expansion with remainder (proof of Theorem \ref{perturbativegenerating:thm})}\label{algorithm}

Our starting point is eq. \eqref{LVEexpansion}:
\begin{align}
 \log {\cal Z}[J,J^{\dagger}]  = & \sum_{T\,\atop\text{LVE tree}}  {\cal A}_{T}[J,J^{\dagger},\lambda,N]  \; ,\crcr
{\cal A}_{T}[J,J^{\dagger},\lambda,N]   = &  \frac{(-\lambda)^{|E(T)|}  N^{|V(T)| - |E(T) |} }{|V(T)|!} \int_0^1 \prod_{e\in E(T)}dt_{e}\, \crcr
  & \times \int d\mu_{C_{T}}(A) \; 
 \Tr\Big[
\mathop{\overrightarrow{\prod}}\limits_{c\in \partial T\,\text{corner}}
\Big(1-\text{i}\sqrt{\frac{\lambda}{N}}A_{i_{c}}\Big)^{-1}
(JJ^{\dagger})^{\eta_{c}}\Big] \; ,
\end{align}
and the Gaussian measure $d\mu_{ C_{T}}(A) $ can also be written as:
\[
\int d\mu_{ C_{T}}(A) \;  F(A) = \left[  e^{\frac{1}{2} \sum_{ij} \bigl( \inf_{(k,l)\in P^T_{i\leftrightarrow j} } t_{kl} \bigr)
\Tr \; \left[ \frac{\partial}{\partial A_i} \frac{\partial}{\partial A_j} \right] } \; F(A) \right]_{ A_i=0 } \; .
\]
A Taylor expansion at first order with an uniform parameter of the Gaussian measure leads to:
\begin{align*}
&  e^{\frac{1}{2} \sum_{ij}[ \inf_{(k,l)\in P^T_{i\leftrightarrow j} } t_{kl} ]
\Tr \; \left[ \frac{\partial}{\partial A_i} \frac{\partial}{\partial A_j} \right] } = 
  e^{\frac{s}{2} \sum_{ij} \bigl( \inf_{(k,l)\in P^T_{i\leftrightarrow j} } t_{kl} \bigr)
\Tr \; \left[ \frac{\partial}{\partial A_i} \frac{\partial}{\partial A_j} \right] } \Big{|}_{s=1} \crcr
& =1 + \int_0^1 ds_1 \;\; \left[ \frac{d}{ds}   e^{\frac{s}{2} \sum_{ij} \bigl(  \inf_{(k,l)\in P^T_{i\leftrightarrow j} } t_{kl} \bigr)
\Tr \; \left[ \frac{\partial}{\partial A_i} \frac{\partial}{\partial A_j} \right] }  \right]_{s=s_1} \crcr
& =1 + \int_0^1 ds_1 \;\; \left( \frac{1}{2} \sum_{ij} \bigl( \inf_{(k,l)\in P^T_{i\leftrightarrow j} } t_{kl} \bigr) 
\Tr \; \left[ \frac{\partial}{\partial A_i} \frac{\partial}{\partial A_j} \right]  \right)  
e^{\frac{s_1}{2} \sum_{ij} \bigl( \inf_{(k,l)\in P^T_{i\leftrightarrow j} } t_{kl} \bigr)
\Tr \; \left[ \frac{\partial}{\partial A_i} \frac{\partial}{\partial A_j} \right] }  
\; .
\end{align*}

The term with the Gaussian measure set to $1$ corresponds to setting all the replicated fields $A_i=0$. Consequently all the resolvents in the trace 
are replaced by the identity and the trace becomes just a trace over a product of the external sources. 

The rest term is more involved.
The new derivatives with respect to the replicated fields $A_i$ and $A_j$ act on the resolvents in the trace. As before, 
a $\frac{1}{2}\Tr \; \left[ \frac{\partial}{\partial A_i} \frac{\partial}{\partial A_j} \right]  $ operator creates a ribbon edge and brings 
an overall factor $\frac{-\lambda}{N}$.
As the edge connects two vertices already present in the tree, the new edge is necessarily a loop edge. The sums over $i$ and $j$ yields
a sum over all the possible ways to 
add such a loop edge to the tree $T$, hence we obtain a sum over all the LVE graphs $(G,T)$ one can build over $T$ having 
$|L(G,T)|=1$ loop edges. 

Iterating $L$ times we obtain:
\begin{align}\label{eq:LVE-Lloops}
& {\cal A}_{T}[J,J^{\dagger},\lambda,N]   =  \sum_{ \genfrac{}{}{0pt}{}{G } { (G,T) \text{ LVE graph} , \; |L(G,T)| \le L-1  } }
\frac{(-\lambda)^{|E(G)|}N^{\chi(G)}}{ | V(T)| !  }\prod_{f\in B(G)}\Tr\Big[\big(JJ^{\dagger}\big)^{c(f)}\Big]  \crcr
   & \qquad \times \int_0^1 \prod_{e\in E(T)}dt_{e} \left(  \prod_{e=(i,j)\in L(G,T)} \inf_{(k,l)\in P^T_{i\leftrightarrow j} } t_{kl}  \right) 
   \mathop{\int}\limits_{1\geq s_{1}\geq\cdots\geq s_{|L(G,T)|}\geq 0}\,\prod_{e\in L(G,T)}ds_{e} \crcr
& \qquad + \sum_{ \genfrac{}{}{0pt}{}{G } { (G,T) \text{ LVE graph} , \; |L(G,T)| = L  } } {\cal A}_{(G,T)}[J,J^{\dagger};\lambda,N] \; ,
\end{align}
where:
\begin{align}\label{eq:LVE-Lloops1}
& {\cal A}_{(G,T)}[J,J^{\dagger};\lambda,N]=\frac{(-\lambda)^{|E(G)|}N^{|V(G)|-|E(G)|}}{|V(G)|!}
\mathop{\int}\limits_{1\geq s_{1}\geq\cdots\geq s_{|L(G,T)|}\geq 0}\,\prod_{e\in L(G,T)}ds_{e} \crcr
& \times \mathop{\int}\limits_{[0,1]} \prod_{e\in E(T)} dt_{e}
  \left( \prod_{e=(i,j)\in L(G,T)}\mathop{\text{inf}}\limits_{e'\in P_{i\leftrightarrow j}^{T}} t_{e'}  \right) \crcr
& \times \int d\mu_{s_{|L(G,T)|}C_{T}}(A)\prod_{f\in F(G)}\Tr\bigg\{\mathop{\prod}\limits_{c\in\partial f}^{\longrightarrow}
\bigg(1-\text{i}\sqrt{\frac{\lambda}{N}}\,A_{i_{c}}\bigg)^{-1}(JJ^{\dagger})^{\eta_{c}}\bigg\} \; .
\end{align}
 yielding the expression eq. \eqref{LVEamplitude} for the amplitude of an LVE graph.

We prove theorem \ref{perturbativegenerating:thm} by induction on $n$. 
We start with the tree expansion in \eqref{treeexpansion:eq} and select the unique LVE tree  without any edge. 
This tree is a single vertex with one cilium (the vertex without a cilium is absent because of
the normalization ${\cal Z}[0,0,\lambda, N]$). We perform an expansion up to $L=1$ loop edges for this term.
All the other LVE trees with edges 
are included in the rest term. 
We obtain:
\begin{align*}
 \log {\cal Z}[J,J^{\dagger}]  = & N \Tr[JJ^{\dagger}] + 
                         \sum_{ \genfrac{}{}{0pt}{}{G } { (G,T) \text{ LVE graph} , \; | V(G) | =1, \; |L(G,T)| =1   } } {\cal A}_{(G,T)}[J,J^{\dagger};\lambda,N] \crcr
                                 & + \sum_{\genfrac{}{}{0pt}{}{T \; \text{LVE tree}}{ E(T) \ge 1 }} {\cal A}_{T}[J,J^{\dagger},\lambda,N]  \; .
\end{align*}
As the trees with exactly one edge are LVE graphs themselves we can move them to the first rest term and write:
\begin{align*}
 \log {\cal Z}[J,J^{\dagger}]  = & N \Tr[JJ^{\dagger}] + 
                         \sum_{ \genfrac{}{}{0pt}{}{G } { (G,T) \text{ LVE graph} , \; E(G)  =1   } } {\cal A}_{(G,T)}[J,J^{\dagger};\lambda,N] \crcr
                                 & + \sum_{\genfrac{}{}{0pt}{}{T \; \text{LVE tree}}{ E(T) \ge 2 }} {\cal A}_{T}[J,J^{\dagger},\lambda,N]  \; ,
\end{align*}
reproducing eq. \eqref{perturbativeexpansion:eq} for $n=0$.
The first term in this expression is the contribution of an ordinary Feynman graph (without any resolvent). If we write the theory
in terms only of $M$, this graph has two univalent vertices $J$ and $J^{\dagger}$ connected by an edge.
The second term is the amplitude for a LVE graph with one vertex, one cilium and one edge (which can either be a loop edge or a tree edge). 
The last term is the contribution of all the LVE trees with at least two edges. 

Let us assume that the theorem has been established up to order $n$ hence the perturbative remainder at order $n$,
$R_{n}[J,J^{\dagger},\lambda,N]$ is the sum of two terms:
\begin{align*}
R'_{n}[J,J^{\dagger},\lambda,N]&=
\sum_{ \genfrac{}{}{0pt}{}{ (G,T)\,\text{LVE graph}} {  |E(G)|=n+1 } }{\cal A}_{(G,T)}[J,J^{\dagger},\lambda,N] \; , \crcr
R''_{n}[J,J^{\dagger},\lambda,N]&=
\sum_{ \genfrac{}{}{0pt}{}{ T\,\text{LVE trees} }{ |E(T)|\ge n+2 } } {\cal A}_{T}[J,J^{\dagger},\lambda,N] \; .
\end{align*}

The trees contributing to $R_n''$ having at least $n+3$ edges give exactly $R''_{n+1}$ (where we omit the arguments in order to simplify the notation). 
The trees having exactly $n+2$ edges are transferred to $R_{n+1}'$. Now consider the 
LVE graphs in  $R'_n$. The amplitude of each of these graphs is written as a Gau\ss ian integral with covariance $s_{|L(G,T)|} C_T$. 
For each of these graphs we expand one more loop edge using:
\begin{align*}
& e^{\frac{s_{|L(G,T)|}}{2} \sum_{ij} \bigl( \inf_{(k,l)\in P^T_{i\leftrightarrow j} } t_{kl} \bigr)
\Tr \; \left[ \frac{\partial}{\partial A_i} \frac{\partial}{\partial A_j} \right] } \Big{|}_{s=1} \crcr
& =1 + \mathop{\int} \limits_{s_{|L(G,T)| }  \ge s_{|L(G,T)| +1} \ge 0 } ds_{|L(G,T)| +1}
 \;\; \left( \frac{1}{2} \sum_{ij} \bigl( \inf_{(k,l)\in P^T_{i\leftrightarrow j} } t_{kl} \bigr) 
\Tr \; \left[ \frac{\partial}{\partial A_i} \frac{\partial}{\partial A_j} \right]  \right)  \crcr
& \qquad \qquad \times e^{\frac{s_{|L(G,T)| +1} }{2} \sum_{ij} \bigl( \inf_{(k,l)\in P^T_{i\leftrightarrow j} } t_{kl} \bigr)
\Tr \; \left[ \frac{\partial}{\partial A_i} \frac{\partial}{\partial A_j} \right] }  
\; .
\end{align*}
The rest terms are all collected to yield the remaining terms in $R'_{n+1}$, as they are all LVE graphs of order $n+2$, with the correct amplitude. 

It remains to check that the evaluation of the new explicit terms reproduces exactly the perturbative 
evaluation of the amplitude of the graphs with exactly $E(G) = n+1$ edges.  
As all the $A_{i}$'s are set to zero in the explicit terms, the product of traces yields just:
\[
 N^{|F(G)|-|B(G)| } \prod_{f\in B(G)}\Tr\Big[\big(JJ^{\dagger}\big)^{c(f)}\Big] \;.
\]

The integral over the loop parameters $s_e$ can be trivially performed:
\begin{equation}
\mathop{\int}\limits_{1\geq s_{1}\geq\cdots\geq s_{|L(G,T)|}\geq 0}\,\prod_{e\in L(G,T)}ds_{e}=\frac{1}{|L(G,T)|!} \; .
\end{equation}
Since there are precisely $|L(G,T)|!$ ways to label the loop edges, the sum becomes a sum over graphs with unlabeled loop edges. 
We are left with ciliated ribbon graphs with labels on their vertices and a distinguished spanning tree. 

The integral over the weakening parameters $t_e$ is more subtle (see \cite{Vincentresum}). A Hepp sector $\alpha$ in a graph $G$ 
is a total order of the edges of $G$. For any Hepp sector $\alpha$, the dominant tree $T(\alpha)$ in the sector is obtained by iteratively choosing the 
highest edges in the sector which do not form cycles. It turns out (\cite{Vincentresum}) that for fixed $G$ and $T$ the integral 
over the parameters $t_e$ is the percentage of Hepp sectors of $G$ in which the tree $T$ is dominant.  The following lemma is then trivial.
\begin{lemma}
\label{Heppsectors}
For any vertex labeled graph $G$:
\begin{equation}
\sum_{\genfrac{}{}{0pt}{}{T\subset G}{ T\, \text{ spanning tree} } }
\int \prod_{e\in E(T)}dt_{e}\, \left( \prod_{e=(i,j)\in L(G,T)}
\inf_{e'\in P^T_{i\leftrightarrow j}}t_{e'} \right) =1 \; .
\end{equation}
\end{lemma}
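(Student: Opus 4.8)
The plan is to give a probabilistic reading of the left-hand side and identify it with a total probability equal to $1$, which is exactly the ``percentage of Hepp sectors'' interpretation quoted above. First I would introduce, for every edge $e\in E(G)$ (not only the tree edges), an independent random variable $x_e$ uniformly distributed on $[0,1]$; almost surely the $x_e$ are pairwise distinct, so they determine a Hepp sector $\alpha$, namely the total order on $E(G)$ in which $e$ precedes $e'$ whenever $x_e>x_{e'}$. I would then recall that the greedy construction of the dominant tree $T(\alpha)$ --- iteratively keep the highest remaining edge that does not close a cycle --- is precisely Kruskal's algorithm for the \emph{maximum}-weight spanning tree of $G$ with edge weights $x_e$. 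Hence the event $\{T(\alpha)=T\}$ coincides, up to a null set, with the event that the fixed spanning tree $T$ is a maximum-weight spanning tree of $(G,(x_e))$.

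\textbf{Computing the probability of a fixed tree.} Next I would invoke the standard fundamental-cycle characterization of maximum spanning trees: $T$ is a maximum-weight spanning tree if and only if, for every loop edge $e=(i,j)\in L(G,T)$, one has $x_e<\min_{e'\in P^T_{i\leftrightarrow j}}x_{e'}$ (a self-loop, for which the convention sets the infimum to $1$, imposes no constraint and is never selected, consistently with both sides of the identity). Conditioning on the values of the tree-edge variables $(x_{e'})_{e'\in E(T)}$ and integrating out the loop-edge variables --- which are mutually independent and independent of the tree variables --- the conditional probability of this event is exactly $\prod_{e=(i,j)\in L(G,T)}\inf_{e'\in P^T_{i\leftrightarrow j}}x_{e'}$. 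Taking the expectation over $(x_{e'})_{e'\in E(T)}\in[0,1]^{E(T)}$ therefore gives
\[
\text{Prob}\big[T \text{ is a maximum spanning tree of }(G,(x_e))\big]
=\int \prod_{e\in E(T)}dt_{e}\,\left( \prod_{e=(i,j)\in L(G,T)}\inf_{e'\in P^T_{i\leftrightarrow j}}t_{e'} \right),
\]
which is precisely the summand appearing in the Lemma.

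\textbf{Summation over trees.} Finally I would observe that, almost surely, $(G,(x_e))$ has a \emph{unique} maximum-weight spanning tree, so the events $\{T\text{ is a maximum spanning tree}\}$, indexed by the spanning trees $T\subset G$, form a partition of the probability space up to a set of measure zero. Summing the previous identity over all spanning trees of $G$ then yields
\[
\sum_{\genfrac{}{}{0pt}{}{T\subset G}{T\,\text{spanning tree}}}\int \prod_{e\in E(T)}dt_{e}\,\left( \prod_{e=(i,j)\in L(G,T)}\inf_{e'\in P^T_{i\leftrightarrow j}}t_{e'} \right)=1,
\]
which is the claim. The only point that requires a little care is the identification of the greedy dominant-tree rule with Kruskal's algorithm and the fundamental-cycle criterion for maximality; both are classical (and already implicit in the discussion of \cite{Vincentresum}), and once they are in place the statement reduces to the elementary observation that the total probability is one, as anticipated in the text.
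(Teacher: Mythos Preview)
Your argument is correct and is essentially the same as the paper's: the paper states that the integral equals the fraction of Hepp sectors (total orders on $E(G)$) in which $T$ is the dominant tree, so the sum over spanning trees is $1$; your continuous version with i.i.d.\ uniform weights $x_e$ is the same statement phrased probabilistically, with the dominant-tree rule identified as Kruskal's algorithm. The one place where you are slightly more explicit than the paper is in deriving the integrand via the fundamental-cycle criterion for maximum spanning trees, which the paper simply attributes to \cite{Vincentresum}.
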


Collecting these two results together, the explicit terms are a sum over vertex labeled ciliated ribbon graphs with exactly $n+1$ edges. 
Taking into account the amplitude does not depend on the labeling of the vertices, we collect together all the terms corresponding to 
different labellings of the same ribbon graph and use:
\begin{align*}
& \frac{1}{|V(G)|!}\sum_{\text{labellings of $V(G)$}}
(-\lambda)^{|E(G)|}N^{\chi(G)}\prod_{f\in B(G)}\Tr\Big[\big(JJ^{\dagger}\big)^{c(f)}\Big] \crcr
& =\frac{(-\lambda)^{|E(G)|}N^{\chi(G)}}{|\text{Aut}(G)|}\prod_{f\in B(G)}\Tr\Big[\big(JJ^{\dagger}\big)^{c(f)}\Big] \; ,
\end{align*}
where $|\text{Aut G}|$ corresponds to the order of the group of permutations of the vertices that leave the graph invariant
to obtain the explicit terms in eq. \eqref{perturbativeexpansion:eq}.

Finally, the analyticity of the remainder is obvious since $R'_{n}[J,J^{\dagger},\lambda,N]$ is a finite sum of analytic functions in 
$\mathbb{C}-\mathbb{R}^-$ and $R''_{n}[J,J^{\dagger},\lambda,N]$ is bounded by the bound in eq. \ref{eq:boundsumtrees}, hence 
converges and is analytic under the same hypothesis.
  
\subsection{Topological expansion (proof of Theorem \ref{topologicalexpansion:thm})} 

In the expansion theorem \ref{perturbativegenerating:thm}, we have recursively added loop edges to the trees 
irrespective of the genus of the graph $(G,T)$ we obtained. 
In order to prove the topological expansion theorem \ref{topologicalexpansion:thm}, we use the same algorithm, except that we stop 
adding loop edges to a graph if its genus reaches $g+1$. We thus obtain:
\begin{multline}
\log{\cal Z}[J,J^{\dagger};\lambda,N]=
\sum_{G\text{ ciliated ribbon graph}\atop |E(G)|\leq n \text{ and }g(G)\leq g}
\frac{(-\lambda)^{|E(G)|}N^{2-2g(G)-B(G)}}{|\text{Aut}(G)|}\prod_{f\in B(G)}\Tr\Big[\big(JJ^{\dagger}\big)^{c(f)}\Big]\\
+\widetilde{{\cal R}}_{g,n}[J,J^{\dagger};\lambda,N]
+\widetilde{{\cal R}}_{g,n}^{'}[J,J^{\dagger};\lambda,N]
+\widetilde{{\cal R}}_{n}^{''}[J,J^{\dagger};\lambda,N] \;,
\end{multline}
where now there are three classes of remainder terms.

The first remainder term is made of LVE graphs with less than $n+1$ edges such that the addition of the last loop edge (with label $|L(G,T)|$)
increases the genus form $g$ to $g+1$:
 \begin{equation}
\widetilde{\cal R}_{g,n}[J,J^{\dagger};\lambda,N]=
\sum_{(G,T)\text{ LVE graph with }|E(G)|\leq n+1\atop
g(G)=g+1\text{ and  }g(G-e_{|L(G,T)|})=g}
{\cal A}_{(G,T)}[J,J^{\dagger};\lambda,N] \;. 
\end{equation}
The second remainder term is a summation over LVE graphs with $n+1$ edges and genus less than $g$:
\begin{equation}
\widetilde{\cal R}'_{g,n}[J,J^{\dagger};\lambda,N]=
\sum_{(G,T)\text{ LVE graph with }\atop
|E(G)|= n+1\text{ and }g(G)\leq g}
{\cal A}_{(G,T)}[J,J^{\dagger};\lambda,N] \; .
\end{equation}
Finally, the last remainder term is a sum over LVE trees with at least $n+2$ edges on which the loop generating algorithm has not yet been applied: 
\begin{equation}
\widetilde{{\cal R}}_{n}''[J,J^{\dagger};\lambda,N]=
\sum_{T\text{ LVE tree}\atop |E(T)|\geq n+2}
{\cal A}_{T}[J,J^{\dagger};\lambda,N] \; .
\end{equation}

In order to take the limit $n\rightarrow\infty$, we bound the LVE amplitudes ${\cal A}_{(G,T)}[J,J^{\dagger};\lambda,N]$ as well as 
the number of graphs contributing to the remainders.

To bound the LVE amplitude in eq. \eqref{LVEamplitude} we observe that the latter is a product over faces of traces of products of resolvents. 
Bounding each trace as $|\Tr(O)|\leq N\|O\|$ and using lemma \ref{boundresolvent} for the norm of the resolvents, we get
\begin{multline}
\bigg|{\cal A}_{(G,T)}[J,J^{\dagger},\lambda,N]\bigg|\leq \int \prod_{e\in E(T)} dt_{e}
  \left( \prod_{e=(i,j)\in L(G,T)}\mathop{\text{inf}}\limits_{e'\in P_{i\leftrightarrow j}^{T}} t_{e'} \right) \\
\frac{N^{|F(G)|+|V(G)|-|E(G)|}|\lambda|^{|E(G)|}}{|V(G)|! |L(G,T)| !}\bigg(\frac{1}{\cos\frac{\arg\lambda}{2}}\bigg)^{2E(G)+k}\|JJ^{\dagger}\|^{k} \; .
\end{multline}
This bound is very similar to the one of the tree amplitude \eqref{treeamplitudebound}, except that we get one factor of $N$ for each internal face $G$   
and the integral over loop parameters $s_e$ yields a factor $\frac{1}{ |L(G,T)| !}$. 
Note that we have left the integral over the weakening parameters $t_{e}$ since it allows to cancel the choice of the spanning tree. Let us denote by
$\tilde{\cal N}(g, n,k)$ the number of ribbon graphs with unlabeled vertices having  
genus $g$, $n$ edges and $k$ cilia. Using proposition \ref{Heppsectors} and noticing that $\frac{1}{|L(G,T)|!}$ cancels the labeling of the loop edges, we get:
\begin{multline}
\sum_{(G,T)\text{ LVE graph with $k$ cilia }\atop
|E(G)|= n\text{ and }g(G)=g}
\int \prod_{e\in E(T)} dt_{e}
 \left( \prod_{e=(i,j) \in L(G,T)}\mathop{\text{inf}}\limits_{e'\in P_{i\leftrightarrow j}^{T}} t_{e'} \right) \quad
\frac{1}{|V(G)|! |L(G,T)| !}\\=
\sum_{G\text{ ribbon graph with labeled vertices}
\atop \text{$|E(G)|=n$  and $g(G)=g$}}\frac{1}{|V(G)|!}
\leq \widetilde{{\cal N}}(g,n,k) \; .
\end{multline}
The inequality (instead of an equality) in the last line comes from the possibility to have different labellings of the vertices leading to the same unlabeled graph 
(this is also the origin of the factor $1/|\text{Aut}(G)|  $ in the explicit terms).

Now we need a bound on $ \widetilde{{\cal N}} (g,n,k)$. We obtain all graph with $k$ cilia by adding $k-1$ cilia to a graph with $1$ cilia.
Relaxing the condition that there is at most one cilium per vertex to the condition that two cilia are not adjacent, in a graph with $n>0$ edges and 1
cilium, there are $2n+1-2=2n-1$ corners on which we can add the second cilium (the graph has $2n+1$ corners, but the 2 corners adjacent to the first cilium 
are forbidden). When adding the new cilium we create a new corner, but the two corners adjacent to the new cilium 
(which are distinct since $n>0$) are forbidden. 
Therefore, there are $2n-2$ corners on which we can add a third cilium and so on up to the last cilium for which we have $2n-(k-1)=2n+1-k$ available corners
Therefore,
\begin{equation}
\widetilde{\cal N}(g,n,k)\leq
\frac{\overbrace{(2n-1)\dots(2n+1-k)}^{k-1\text{ terms}}}{k!} \; \widetilde{\cal N}(g,n,1)\leq
\frac{(2n)!}{k!(2n-k)!} \; \widetilde{\cal N}(g,n,1) \;
\end{equation}
where we have divided by $k!$ since all the cilia are indistinguishable (i.e. adding cilia on the same corners but in a different order leads to the same graph). 
Note that we only obtain an inequality since adding cilia to different graphs can lead to the same ciliated graph.

The number of genus $g$ graphs with $n$ edges and a single cilium is equal to the number of rooted bipartite quadrangulations (due to the bijection 
in proposition \ref{dualbijection} applied to maps with one marked edge) which is known \cite{SchaefferChapuyMarcus}. 

\begin{lemma} \label{schaeffertheorem}
The number $\widetilde{{\cal N}}(g,n)$ of rooted maps with $n$ edges and genus $g$ has the asymptotic behavior
\begin{equation}
\widetilde{{\cal N}}(g,n)\mathop{\sim}\limits_{n\rightarrow\infty}
C_{g}12^{n} n^{\frac{5}{2}(g-1)} \;, 
\end{equation} 
with $C_{g}$ a constant that only depends on the genus.
\end{lemma}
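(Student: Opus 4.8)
The plan is to reduce the statement to the classical enumeration of rooted bipartite quadrangulations, whose coefficient asymptotics are by now standard, and then to quote the corresponding result. First I would invoke the classical edge/quadrangle bijection (going back to Tutte): a rooted map with $n$ edges on a genus $g$ surface corresponds to a rooted bipartite quadrangulation with $n$ faces on a genus $g$ surface, the vertex set of the quadrangulation being $V(M)\cup F(M)$ and each edge of $M$ producing exactly one quadrangle. This bijection preserves the genus, so it reduces the computation of $\widetilde{{\cal N}}(g,n)$ to that of rooted genus $g$ quadrangulations with $n$ quadrangles; equivalently one may use Proposition \ref{dualbijection} directly, a rooted map being nothing but a ciliated ribbon graph with a single cilium.

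Second, for the quadrangulations I would follow the bijective route of \cite{SchaefferChapuyMarcus}: a rooted genus $g$ quadrangulation with $n$ faces is encoded by a labeled one-face map (a \emph{unicellular mobile}, or \emph{$g$-tree}) carrying $n$ edges, which reduces the enumeration to that of decorated trees. The generating function of these trees inherits the planar critical value $z_{c}=\tfrac{1}{12}$ and, compared with the genus-zero case, has its singular exponent shifted by $\tfrac{5}{2}$ per handle, so that near $z_{c}$ it behaves like $(1-12z)^{(3-5g)/2}$ up to analytic and lower-order terms. An alternative, analytic route goes through Tutte's equation (the same loop/Schwinger--Dyson equations used in Appendix \ref{SDappendix} in genus $0$) together with the structure theorem of Bender and Canfield, which presents the genus $g$ series $M_{g}(z)=\sum_{n}\widetilde{{\cal N}}(g,n)z^{n}$ as a rational function of the genus-zero resolvent and of the uniformizing variable, with the same dominant singularity.

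Third, once the behaviour of $M_{g}$ at $z_{c}$ is known, I would apply the standard transfer theorem of singularity analysis: a singularity of type $(1-z/z_{c})^{-\alpha}$ with $\alpha=\tfrac{5g-3}{2}$ contributes $\tfrac{1}{\Gamma(\alpha)}\,z_{c}^{-n}\,n^{\alpha-1}$ to the $n$-th coefficient, which is precisely $C_{g}\,12^{n}\,n^{\frac{5}{2}(g-1)}$ for a constant $C_{g}$ depending only on the genus (proportional to $1/\Gamma(\tfrac{5g-3}{2})$). The main obstacle is the structural input for arbitrary genus --- either the Bender--Canfield rationality theorem, or, in the bijective approach, the construction and control of the $g$-tree encoding --- and since both are established in the literature, the cleanest path is simply to cite \cite{SchaefferChapuyMarcus} after recording the genus-preserving reduction from rooted maps to bipartite quadrangulations. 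For the use made of the lemma here only the value $z_{c}=\tfrac{1}{12}$ and the polynomial factor $n^{\frac{5}{2}(g-1)}$ are needed, the constant $C_{g}$ entering only as an overall genus-dependent prefactor.
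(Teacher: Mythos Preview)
Your proposal is correct and aligns with the paper's treatment: the paper does not prove this lemma at all but simply records it as a known result, citing \cite{SchaefferChapuyMarcus} immediately before stating it (and noting the reduction via Proposition~\ref{dualbijection} from ciliated ribbon graphs with one cilium to rooted bipartite quadrangulations). Your sketch goes further than the paper by outlining the actual content of the cited work --- the bijective encoding by $g$-trees and the singularity analysis --- but your bottom line, that one should simply invoke \cite{SchaefferChapuyMarcus} after the genus-preserving reduction, is exactly what the paper does.
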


Consequently, there is a constant $C'_{g}$ such that, for $n$ large enough $\widetilde{{\cal N}}(g,n)=\widetilde{{\cal N}}(g,n,1)\leq
C'_{g}12^{n} n^{\frac{5}{2}(g-1)}$, so that
\begin{equation}
\widetilde{\cal N}(g,n,k)\leq
C'_{g}12^{n} n^{\frac{5}{2}(g-1)}
\frac{(2n)!}{k!(2n-k)!} \; .
\label{ciliabound:eq}
\end{equation}
We thus obtain the bound:
\begin{align}
&\sum_{(G,T)\text{ LVE graph with }\atop
|E(G)|= n\text{ and }g(G)=g}
\big|{\cal A}_{(G,T)}[J,J^{\dagger},\lambda,N]\big|
\nonumber\\&\quad \leq
 \sum_{k=1}^{n+1}
N^{2-2g}|\lambda|^{n}
\bigg(\frac{1}{\cos\frac{\arg\lambda}{2}}\bigg)^{2n+k}\|JJ^{\dagger}\|^{k} \;  \frac{(2n)!}{k!(2n-k)!} \;
\widetilde{{\cal N}}(g,n)\nonumber\\
&\quad\leq C'_{g}12^{n} n^{\frac{5}{2}(g-1)}
N^{2-2g}\bigg(\frac{|\lambda|}{\cos^{2}\frac{\arg\lambda}{2}}\bigg)^{n}
\bigg(1+\frac{\|JJ^{\dagger}\|}{\cos\frac{\arg\lambda}{2}}\bigg)^{2n} \; ,                                                                                      
\end{align}
where we have extended the sum over $k$ from 0 to $2n$ (instead of $n+1$) and used the binomial formula.

For every $\lambda \in \widetilde{\cal C}$, there exists $\epsilon_{\lambda}>0$ 
such that 
\begin{equation}
\bigg(\frac{|\lambda|}{\cos^{2}\frac{\arg\lambda}{2}}\bigg)
\bigg(1+\frac{  \epsilon_{\lambda} }{\cos\frac{\arg\lambda}{2}}\bigg)^2=\xi<\frac{1}{12} \; .
\end{equation}
We chose $ \| JJ^{\dagger} \| \le \epsilon_{\lambda}$. We then have the following bounds:
\begin{itemize}
 \item we bound the term 
 \[ 
\widetilde{\cal R}_{g,n}[J,J^{\dagger};\lambda,N]=
\sum_{(G,T)\text{ LVE graph with }|E(G)|\leq n+1\atop
g(G)=g+1\text{ and  }g(G-e_{|L(G,T)|})=g}
{\cal A}_{(G,T)}[J,J^{\dagger};\lambda,N] \;. 
 \]
by a sum over graphs of genus $g+1$ having at most $n+1$ edges:
\begin{align*}
 | \widetilde{\cal R}_{g,n}[J,J^{\dagger};\lambda,N] | \le C'_{g+1} N^{2-2(g+1)} \sum_{m=0}^{n+1} m^{\frac{5}{2} g } (12 \, \xi)^m \;,
 \end{align*}
which is convergent as $n\to \infty$ while keeping $N$ fixed.
 \item we bound the term 
\[
\widetilde{\cal R}'_{g,n}[J,J^{\dagger};\lambda,N]=
\sum_{(G,T)\text{ LVE graph with }\atop
|E(G)|= n+1\text{ and }g(G)\leq g}
{\cal A}_{(G,T)}[J,J^{\dagger};\lambda,N] \; .
\]
by:
\begin{align*}
 | \widetilde{\cal R}'_{g,n}[J,J^{\dagger};\lambda,N] | \le N^2 \left(  \sum_{h=0}^g \frac{C'_h}{N^{2h}}  (n+1)^{\frac{5}{2}(h-1)} \right) (12 \, \xi)^{n+1} \; ,
 \end{align*}
hence this term goes to zero when sending $n \to \infty$ while keeping $N$ fixed.
 \item as the sum over LVE trees is convergent, the last reminder term:
\[
\widetilde{{\cal R}}_{n}''[J,J^{\dagger};\lambda,N]=
\sum_{T\text{ LVE tree}\atop |E(T)|\geq n+2}
{\cal A}_{T}[J,J^{\dagger};\lambda,N] \; ,
\]
also goes to zero when sending $n \to \infty$ while keeping $N$ fixed.
\end{itemize}
This achieves the proof of theorem  \ref{topologicalexpansion:thm}.

\section{Proofs of the theorems regarding the cumulants}\label{sec:proof2}

\subsection{Cumulants and their structure (proofs of Propositions \ref{prop:ampliWeingarten} and \ref{structure:prop})}
 
Before establishing the proposition \ref{prop:ampliWeingarten} we detail some properties of the Weingarten functions.

\begin{lemma}[Convolution inverse]
For $N>k$ and any permutations $\sigma,\tau\in{\mathfrak S}_{k}$, one has:
\begin{equation}
\sum_{\tau\in{\mathfrak S}_{k}}  N^{|C(\tau \rho^{-1})|}  \; \; \text{Wg}(\tau\sigma^{-1},N) 
=\begin{cases}
1&\text{if $\rho=\sigma$}\\
0&\text{otherwise}
\end{cases}
\end{equation}
where $|C(\sigma)|$ is the number of cycles in the decomposition of $\sigma$.
\end{lemma}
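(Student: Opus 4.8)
The plan is to show that the Weingarten function, as defined through the unitary integral \eqref{Weingartenrelation}, is precisely the convolution inverse (in the group algebra $\mathbb{C}[\mathfrak{S}_k]$, or rather in the algebra of class functions) of the function $\sigma\mapsto N^{|C(\sigma)|}$. The cleanest route is to start from the elementary identity for the integral of a single string of $U$'s against $\bar U$'s, specialized to a rank-one projector. Concretely, take the $N\times N$ matrix $E$ with a single nonzero entry $E_{11}=1$, set $a_i=b_i=1$ for all $i$, and integrate $\prod_l (UEU^\dagger)_{c_l d_l}$; since $UEU^\dagger$ is the rank-one projector onto the first column of $U$, the left-hand side is computable directly, while the right-hand side is the Weingarten sum specialized at these indices. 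Matching the two produces the stated orthogonality relation.

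First I would recall the standard consequence of \eqref{Weingartenrelation}: for any $N\times N$ matrices $B^{(1)},\dots,B^{(k)}$,
\[
\int dU\ \prod_{l=1}^{k}(UB^{(l)}U^\dagger)_{x_l y_l}
=\sum_{\sigma,\tau\in\mathfrak S_k}\ \prod_{l=1}^{k} B^{(l)}_{?}\ \mathrm{Wg}(\tau\sigma^{-1},N)\ ,
\]
obtained by writing $(UB U^\dagger)_{xy}=\sum_{a,b}U_{xa}B_{ab}U^*_{yb}$ and applying \eqref{Weingartenrelation}. Rather than keep all the index bookkeeping, I would specialize immediately to $B^{(l)}=E$ (the rank-one projector above) and to $x_l=y_l$, then take a further trace-like contraction with an arbitrary permutation $\rho$: multiply by $\prod_l \delta_{x_l,x_{\rho(l)}}$ and sum over $x_1,\dots,x_k$. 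On one side this counts, for the $\delta$-structure produced by $\tau$ and $\sigma$ acting on the rank-one $E$, a factor $N^{\#\text{free loops}}$ where the number of free index loops is $|C(\tau\rho^{-1})|$ when the $B$'s are projectors of rank one forcing the $\sigma$-cycles to contribute $1$; on the other side, because $UEU^\dagger$ is idempotent and has trace $1$, the left-hand integral collapses to $\delta_{\rho,\mathrm{id}}$-type data, and after relabeling $\sigma$ one is left with exactly $\sum_{\tau} N^{|C(\tau\rho^{-1})|}\,\mathrm{Wg}(\tau\sigma^{-1},N)=\delta_{\rho\sigma^{-1},\mathrm{id}}$.

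An equivalent and perhaps more transparent derivation, which I would actually write up, is purely algebraic: the relation \eqref{Weingartenrelation} applied with $B^{(l)}=\mathbf 1$ (the identity matrix) for all $l$ gives
\[
\int dU\ \prod_{l}(UU^\dagger)_{x_l y_l}=\prod_l\delta_{x_l y_l}
=\sum_{\sigma,\tau}\Big(\prod_l \delta_{x_{\tau(l)}\,?}\cdots\Big)\mathrm{Wg}(\tau\sigma^{-1},N)\ ,
\]
and contracting the index deltas against a test permutation $\rho$ turns the $\tau,\sigma$ sum into $\sum_{\sigma,\tau}N^{|C(\tau\rho^{-1})|}N^{|C(\sigma\rho^{-1})|}\cdots$; isolating the coefficient of a given class and using that the Gram matrix $(\mathrm{Gram})_{\sigma\tau}=N^{|C(\sigma\tau^{-1})|}$ is invertible for $N>k$ (its inverse being, by definition, $\mathrm{Wg}$) yields the claim. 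The main obstacle, and the only place requiring real care, is the combinatorics of index loops: one must verify that contracting the two families of Kronecker deltas in \eqref{Weingartenrelation} with $\prod_l\delta$'s built from $\rho$ produces exactly $N^{|C(\tau\rho^{-1})|}$ (and not $N^{|C(\sigma\rho^{-1})|}$ or some product), which comes down to carefully tracking which of $\sigma$ or $\tau$ governs the $a/c$ indices versus the $b/d$ indices and noticing that only one family survives after the rank-one (or identity) specialization. Once that loop count is pinned down, the identity is immediate from the definition of $\mathrm{Wg}$ as the inverse of the Gram matrix, which is well defined precisely under the hypothesis $N>k$.
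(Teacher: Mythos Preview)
Your approach is in the right spirit---contract the Weingarten relation \eqref{Weingartenrelation} against a test permutation and read off the identity---but the specific route you commit to (the $B^{(l)}=\mathbf 1$ version) lands on a weaker equation and you close the gap with a circular step.

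Concretely: writing $(U\mathbf 1U^\dagger)_{x_ly_l}=\sum_{a_l}U_{x_la_l}U^*_{y_la_l}$ and applying \eqref{Weingartenrelation}, the sum over the internal indices $a_l$ produces a factor $N^{|C(\sigma)|}$, and the subsequent contraction of the external $x_l,y_l$ against $\rho$ produces a second factor $N^{|C(\tau\rho^{-1})|}$. What you obtain is
\[
N^{|C(\rho)|}=\sum_{\sigma,\tau\in\mathfrak S_k}N^{|C(\tau\rho^{-1})|}\,N^{|C(\sigma)|}\,\mathrm{Wg}(\tau\sigma^{-1},N),
\]
i.e.\ $G\ast\mathrm{Wg}\ast G=G$ in the group algebra, where $G(\sigma)=N^{|C(\sigma)|}$. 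To deduce $\mathrm{Wg}\ast G=\delta_e$ you invoke that $G$ is invertible ``its inverse being, by definition, $\mathrm{Wg}$''. But in this paper $\mathrm{Wg}$ is \emph{defined} through the unitary integral \eqref{Weingartenrelation}, not as $G^{-1}$; the identity $\mathrm{Wg}=G^{-1}$ is precisely the lemma you are asked to prove. You would need a separate argument that $G$ is nonsingular for $N>k$ (true, e.g.\ via Schur--Weyl, but not supplied).

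The paper sidesteps this entirely by contracting on \emph{only one} pair of index families. Multiply \eqref{Weingartenrelation} by $\prod_i\delta_{a_{\rho(i)}c_i}$ and sum over $a,c$, leaving $b,d$ free. On the left, $\sum_a U_{ab}U^*_{ad}=(U^\dagger U)_{db}=\delta_{bd}$ collapses the integral to $\prod_i\delta_{b_{\rho(i)}d_i}$; on the right one gets
\[
\sum_{\sigma,\tau}N^{|C(\tau\rho^{-1})|}\Big(\prod_i\delta_{b_{\sigma(i)}d_i}\Big)\mathrm{Wg}(\tau\sigma^{-1},N).
\]
For $N\ge k$ the monomials $\prod_i\delta_{b_{\sigma(i)}d_i}$ are linearly independent as $\sigma$ varies (choose the $b_i$ pairwise distinct), so matching coefficients gives the lemma at once. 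This is the ``only one family survives'' phenomenon you allude to, but the point is to engineer it from the start by not contracting the $b,d$ indices, rather than to recover it afterwards by inverting~$G$.
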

\begin{proof}
For any permutation $\rho$ we have:
 \begin{align}
& \sum_{a,c} \left( \prod_{i=1}^k \delta_{a_{\rho(i)} c_i } \right)  \int dU \; U_{a_{1}b_{1}}\dots U_{a_{k}b_{k}}
U^{*}_{c_{1}d_{1}}\dots U^{*}_{c_{k}d_{k}} \crcr
& = \sum_{a,c} \left( \prod_{i=1}^k \delta_{a_{\rho(i)} c_i } \right) 
\sum_{\sigma,\tau\in \mathfrak{S}_{k}} \left( \prod_{i=1}^k  \delta_{a_{\tau(i)} c_i} \delta_{ b_{\sigma(i)} d_i }\right) \text{Wg}(\tau\sigma^{-1},N) 
\Rightarrow \crcr
&  \left( \prod_{i=1}^k \delta_{b_{\rho(i)} d_i } \right) = \sum_{\sigma,\tau\in \mathfrak{S}_{k}} N^{|C(\tau \rho^{-1})|}
\left( \prod_{i=1}^k  \delta_{ b_{\sigma(i)} d_i }\right) 
\text{Wg}(\tau\sigma^{-1},N) \; .
 \end{align}
Applying this equality for $b_{\sigma(i)} = d_i=i$ the left hand side is non zero only for $\rho=\sigma$ and in this case it equals $1$.
\end{proof}

In our context, the Weingarten functions are used in order to write any unitary invariant
homogeneous polynomial of degree $k$ of a $N\times N$ hermitian matrix $H$ as a linear combination of products of
traces of powers of $H$. Let $P$ be such a polynomial:
\begin{equation}
P(H)=\sum_{1\leq p_{1},q_{1},\dots,p_{k},q_{k}\leq N}
A_{p_{1},q_{1},\dots,p_{k},q_{k}}\,
H_{p_{1}q_{1}}\cdots H_{p_{k}q_{k}} \;,
\end{equation}
with $P(UHU^{\dagger})=P(H)$ for any $U\in\text{U}(N)$.

\begin{lemma}[Expansion over trace invariants]
\label{traceinvariants:prop}
Any unitary invariant degree $k$ homogeneous polynomial can be written as:
\begin{equation}
P(H)=\sum_{\pi\in\Pi_{k}}P_{\pi}\,\Tr_{\pi}(H) \;, 
\end{equation}
with:
\begin{equation}
P_{\pi}=\sum_{\sigma,\tau\in{\mathfrak S}_{k}\atop
C(\sigma)=\pi}
\sum_{1\leq p_{1},\dots,p_{k}\leq N}
A_{p_{1}p_{\tau(1)},\dots,p_{k}p_{\tau(k)}}
\text{Wg}(\tau\sigma^{-1},N)
\end{equation}
and the trace invariant (that only depend on the cycle structure $C(\sigma)$) are:
\begin{equation}
\Tr_{\pi}(H)=\sum_{1\leq a_{1},\dots,a_{k}\leq N}
H_{a_{1}a_{\sigma(1)}}\dots H_{a_{k}a_{\sigma(k)}} \;. 
\end{equation}

In particular, if $P$ is already a trace invariant associated to the partition $\pi_{0}$ then 
$P_{\pi}=1$ if $\pi=\pi_{0}$ and $0$ otherwise.
\end{lemma}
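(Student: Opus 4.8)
The plan is to exploit the unitary invariance by averaging over the unitary group. Since $P(UHU^{\dagger})=P(H)$ for every $U\in\mathrm{U}(N)$, one may write $P(H)=\int dU\,P(UHU^{\dagger})$ with the normalized Haar measure. I would then expand $(UHU^{\dagger})_{p_lq_l}=\sum_{a_l,b_l}U_{p_la_l}H_{a_lb_l}U^{*}_{q_lb_l}$ and insert this into the definition of $P$, obtaining a sum over the indices $p,q,a,b$ of $A_{p_1q_1,\dots,p_kq_k}\bigl(\prod_lH_{a_lb_l}\bigr)$ multiplied by the monomial $\prod_lU_{p_la_l}\prod_lU^{*}_{q_lb_l}$ in the entries of $U$.

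Next I would apply the Weingarten integration formula \eqref{Weingartenrelation}, which gives $\int dU\,\prod_lU_{p_la_l}\prod_lU^{*}_{q_lb_l}=\sum_{\sigma,\tau\in\mathfrak{S}_k}\bigl(\prod_l\delta_{p_{\tau(l)}q_l}\bigr)\bigl(\prod_l\delta_{a_{\sigma(l)}b_l}\bigr)\,\mathrm{Wg}(\tau\sigma^{-1},N)$. The deltas $\prod_l\delta_{a_{\sigma(l)}b_l}$ identify $b_l=a_{\sigma(l)}$, so $\prod_lH_{a_lb_l}=\prod_lH_{a_la_{\sigma(l)}}$, and summing over $a_1,\dots,a_k$ yields $\Tr(H^{k_1})\cdots\Tr(H^{k_p})=\Tr_{C(\sigma)}(H)$, which depends on $\sigma$ only through its cycle type. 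Likewise $\prod_l\delta_{p_{\tau(l)}q_l}$ forces $q_l=p_{\tau(l)}$, turning the coefficient into $A_{p_1p_{\tau(1)},\dots,p_kp_{\tau(k)}}$. Regrouping the terms according to $\pi=C(\sigma)$ reproduces exactly $P(H)=\sum_{\pi\in\Pi_k}P_\pi\,\Tr_\pi(H)$ with the claimed expression for $P_\pi$.

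For the last assertion, suppose $P=\Tr_{\pi_0}(H)=\sum_{c_1,\dots,c_k}\prod_lH_{c_lc_{\rho(l)}}$ for some $\rho$ with $C(\rho)=\pi_0$. Matching monomials gives $A_{p_1q_1,\dots,p_kq_k}=\prod_l\delta_{q_l,p_{\rho(l)}}$, hence $\sum_{p}A_{p_1p_{\tau(1)},\dots,p_kp_{\tau(k)}}=\sum_{p}\prod_l\delta_{p_{\tau(l)},p_{\rho(l)}}$, which counts the maps $p$ that are constant on the cycles of $\rho\tau^{-1}$, namely $N^{|C(\rho\tau^{-1})|}=N^{|C(\tau\rho^{-1})|}$. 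Inserting this into the formula for $P_\pi$ and invoking the convolution inverse lemma $\sum_{\tau}N^{|C(\tau\rho^{-1})|}\mathrm{Wg}(\tau\sigma^{-1},N)=\delta_{\rho\sigma}$ leaves $P_\pi=\sum_{\sigma:\,C(\sigma)=\pi}\delta_{\rho\sigma}$, which equals $1$ when $\pi=\pi_0$ and $0$ otherwise.

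The main obstacle is purely combinatorial bookkeeping: keeping track of how the index pattern of $A_{p_1q_1,\dots,p_kq_k}$ and of the $H$-monomial are contracted by the double sum over $\sigma,\tau$ produced by \eqref{Weingartenrelation}, and recognizing the surviving index sums as the trace invariants $\Tr_\pi$ labelled by cycle types. No genuine analytic input is required beyond the Weingarten formula and the convolution-inverse identity already established above.
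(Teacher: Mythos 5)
Your proof is correct and follows essentially the same route as the paper: average over the Haar measure, apply the Weingarten formula \eqref{Weingartenrelation}, regroup the delta-contracted index sums into the trace invariants $\Tr_{\pi}$, and invoke the convolution-inverse lemma for the final assertion. The only (inessential) difference is that for the last part you represent $\Tr_{\pi_{0}}$ by a single permutation $\rho$, whereas the paper symmetrizes the coefficients $A$ over all $\rho$ with cycle type $\pi_{0}$; both choices produce the factor $N^{|C(\tau\rho^{-1})|}$ and lead to the same conclusion.
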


\begin{proof}
 Due to unitary invariance of $P$ we have 
 \begin{align*}
&  P(H) = \int [dU] \; P(UHU^{\dagger}) = \sum_{1\leq p_{1},q_{1},\dots \leq N} 
A_{p_{1},q_{1},\dots,p_{k},q_{k}}\, \left(  \prod_{i=1}^k U_{p_i a_i} U^*_{q_i b_i} H_{a_{i}b_{i}}  \right) \crcr
&= \sum_{\tau, \sigma \in \mathfrak{S}_k } \left(  
 \sum_{1\leq p_{1},q_{1},\dots \leq N}  A_{p_{1},q_{1},\dots,p_{k},q_{k}}  \prod_{i=1}^k  \delta_{p_{\tau(i)} q_i } \right)  
 \left(  \prod_{i=1}^k \delta_{a_{\sigma(i)} b_i } H_{a_{i}b_{i}}  \right) \text{Wg}(\tau\sigma^{-1},N) \; .
 \end{align*}
If $P$ is the trace invariant associated to $\pi_0=(k_1,\dots k_{|\pi_0|})$ then:
\[
 A_{p_{1},q_{1},\dots,p_{k},q_{k}} =  \frac{1}{ \sum_{ \genfrac{}{}{0pt}{}{ \rho\in \mathfrak{S}_k}{  C(\rho)=\pi_0 } } 1 } 
  \sum_{ \genfrac{}{}{0pt}{}{ \rho\in \mathfrak{S}_k}{  C(\rho)=\pi_0 } }
 \prod_{i=1}^k \delta_{q_ip_{\rho(i)}}
\]
and 
\begin{align*}
 P_{\pi}&=\sum_{\sigma,\tau\in{\mathfrak S}_{k}\atop
C(\sigma)=\pi}
\left( \sum_{1\leq p_{1},\dots,p_{k}\leq N}
 \frac{1}{ \sum_{ \genfrac{}{}{0pt}{}{ \rho\in \mathfrak{S}_k}{  C(\rho)=\pi_0 } } 1 } 
  \sum_{ \genfrac{}{}{0pt}{}{ \rho\in \mathfrak{S}_k}{  C(\rho)=\pi_0 } }
 \prod_{i=1}^k \delta_{p_{\tau(i)} p_{\rho(i)}} \right)
\text{Wg}(\tau\sigma^{-1},N) \crcr
& =  
 \frac{1}{ \sum_{ \genfrac{}{}{0pt}{}{ \rho\in \mathfrak{S}_k}{  C(\rho)=\pi_0 } } 1 } 
  \sum_{ \genfrac{}{}{0pt}{}{ \rho\in \mathfrak{S}_k}{  C(\rho)=\pi_0 } } 
\sum_{\sigma,\tau\in{\mathfrak S}_{k}\atop 
C(\sigma)=\pi}  N^{|C(\tau \rho^{-1})|}  \; \; \text{Wg}(\tau\sigma^{-1},N) \; .
\end{align*}
By the convolution inverse identity, the sum over $\tau$ enforces $\rho = \sigma$ and the lemma follows.
 \end{proof}
We chose a permutation $\zeta\in{\mathfrak S}_{k}$ whose 
cycle decomposition reproduces the contribution of the broken faces to the amplitude of a LVE graph:
\begin{equation}
\zeta=(i_{1}^{1}\dots i_{k_{1}}^{1}) \cdots (i_{1}^{b}\dots i_{k_{b}}^{b}) \; .
\end{equation}
if there are $b=|B(G)|$ broken faces with $k_{1},\dots,k_{b}$ cilia. Denoting $X^{l}$ the product of the resolvents in between 
the cilia $l$ and $\zeta(l)$ and $Y^{m}$ the product of the resolvents around the unbroken face labeled $m$ the 
amplitude can be written as:
\begin{multline}
A_{(G,T)}[J,J^{\dagger},\lambda,N]=
\frac{(-\lambda)^{|E(G)|}N^{|V(G)|-|E(G)|}}{|V(G)|!}
\mathop{\int}\limits_{1\geq s_{1}\geq\cdots\geq s_{|L(G,T)|}\geq 0}\,\prod_{e\in L(G,T)}ds_{e}\\
\int \prod_{e\in E(T)} dt_{e}
 \left(  \prod_{e=(i,j)\in L(G,T)}\mathop{\text{inf}}\limits_{e'\in P_{i\leftrightarrow j}^{T}} t_{e'} \right) \\
\int d\mu_{s_{|L(G,T)|}C_{T}}(A)\prod_{1\leq m\leq B(G)}\Tr\Big[JJ^{\dagger}
\mathop{\prod}\limits_{1\leq r\leq k_{m}}^{\longrightarrow}X^{i^{m}_{r}}\Big]
\prod_{1\leq m\leq F(G)-B(G)}\Tr\Big[Y^{m}\Big] \;.
\end{multline}
This is a degree $k$ homogeneous polynomial which is invariant under the unitary transformation $J\rightarrow UJ$ and
$J^{\dagger}\rightarrow J^{\dagger}U^{\dagger}$. Indeed, because of the invariance of $d\mu_{C_{T}}(A)$ the transformation of
$J$ can be compensated by a transformation on the matrices $A_{1},...,A_{n}$. 
Therefore, we may apply lemma \ref{traceinvariants:prop} to expand it over trace invariants:
\begin{equation}
{\cal A}_{(G,T)}[J,J^{\dagger},\lambda,N]=\sum_{\pi\in\Pi_{k}}{A}_{(G,T)}^{\pi}(\lambda,N)\Tr_{\pi}(JJ^{\dagger}) \; ,
\end{equation}
with:
\begin{multline}
{A}_{(G,T)}^{\pi}(\lambda,N)=\frac{(-\lambda)^{|E(G)|}N^{|V(G)|-|E(G)|}}{|V(G)|!}
\mathop{\int}\limits_{1\geq s_{1}\geq\cdots\geq s_{|L(G,T)|}\geq 0}\,\prod_{e\in L(G,T)}ds_{e}\\
\int \prod_{e\in E(T)} dt_{e}
 \left( \prod_{e=(i,j) \in L(G,T)}\mathop{\text{inf}}\limits_{e'\in P_{i\leftrightarrow j}^{T}} t_{e'} \right)  
\int d\mu_{s_{|L(G,T)|}C_{T}}(A) \\ 
\times \sum_{\tau,\sigma\in{\mathfrak S}_{k}\atop C(\sigma)=\pi}
\sum_{1\leq p_{1},\dots,p_{k}\leq N}\text{Wg}(\tau\sigma^{-1},N)
\prod_{1\leq m\leq F(G)-B(G)}\Tr\Big[Y^{m}\Big]\prod_{1\leq l\leq k}
X^{l}_{p_{\tau(l) }p_{\zeta(l)}} \; .
\end{multline}

This proves proposition \ref{prop:ampliWeingarten}.

In order to prove proposition \ref{structure:prop}, we use the expression of $\log{\cal Z}[J,J^{\dagger},\lambda,N]$ 
form theorem \ref{treeexpansion}.
As the sum over LVE trees is convergent we can derive term by term with respect to the sources. 
Before deriving we express the amplitude of a LVE tree as in proposition \ref{prop:ampliWeingarten}. 
Proposition \ref{structure:prop} follows from the remark that for any two 
permutations $\tau,\xi\in{\mathfrak{S}}_{k}$ such that $\tau\xi^{-1}$ has cycle decomposition corresponding to the partition $\pi$,
we have:
\begin{equation}
\Tr_{\pi}\big(JJ^{\dagger}\big)=
\sum_{1\leq p_{1},q_1 \dots \leq N}
\prod_{1\leq l\leq k} J_{p_{l}q_{l}}J^{\ast}_{p_{\tau\xi^{-1}(l)}q_{l}}=
\sum_{1\leq p_{1},q_1 \dots \leq N}
\prod_{1\leq l\leq k} J_{p_{l}q_{l}}J^{\ast}_{p_{\tau(l)}q_{\xi(l)}} \;,
\end{equation}
hence the derivative with respect to the sources is:
\begin{multline}
\frac{\partial^{k}}{\partial J_{a_{1}b_{1}}\cdots\partial J_{a_{k}b_{k}}}
\frac{\partial^{k}}{\partial J^{\ast}_{c_{1}d_{1}}\cdots\partial J^{\ast}_{c_{k}d_{k}}}\Tr_{\pi}\big(JJ^{\dagger}\big)= \\
=\sum_{\rho,\sigma\in\mathfrak{S}_{k}}
\sum_{1\leq p_{1},q_1 \dots\leq N}
\prod_{1\leq l\leq k} 
\delta_{a_{\rho(l)},p_{l}}\delta_{b_{\rho(l)},q_{l}}
\delta_{c_{\sigma(l)},p_{\tau(l)}}\delta_{d_{\sigma(l)},q_{\xi(l)}} \; ,
\end{multline}
and summing over $p_{l}$ and $q_{l}$ we obtain:
\begin{equation}
\frac{\partial^{k}}{\partial J_{a_{1}b_{1}}\cdots\partial J_{a_{k}b_{k}}}
\frac{\partial^{k}}{\partial J^{\ast}_{c_{1}d_{1}}\cdots\partial J^{\ast}_{c_{k}d_{k}}}\Tr_{\pi}\big(JJ^{\dagger}\big)=
\sum_{\rho,\sigma\in\mathfrak{S}_{k}}
\prod_{1\leq l\leq k}\delta_{c_{l},a_{\rho\tau\sigma^{-1}(l)}}\delta_{d_{l},b_{\rho\xi\sigma^{-1}(l)}} \; .
\end{equation}

\subsection{Constructive theorems for cumulants (proofs of Theorems \ref{treecumulants:thm} and \ref{perturbativecumulants:thm})}
 
In order to prove the constructive theorem for the cumulants we need to bound the amplitude in eq. \eqref{amplitudeWg}. 
The contribution of the unbroken faces and of the broken faces are made of products of resolvents. The summation 
over the indices reproduces a product of $|C(\zeta\tau^{-1})|$ traces for the broken faces and $F(G)-B(G)$ traces for the unbroken faces.

We bound the Weingarten functions using the following lemma.
\begin{lemma}
\label{Weingartenbound:lem}
For $N$ large enough,
\begin{equation}
 | \text{Wg}(\sigma,N) | <\frac{2^{2k}}{N^{2k-|C(\sigma)|}} \; .
\label{Weingartenbound}
\end{equation}
\end{lemma}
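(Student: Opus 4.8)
The bound to prove is a purely combinatorial estimate on the Weingarten function $\text{Wg}(\sigma,N)$, so I would forget the matrix model entirely and argue from the known asymptotics and structure of $\text{Wg}$. Recall the standard facts (Collins--\'Sniady \cite{Collins,ColSni}): $\text{Wg}(\sigma,N)$ depends only on the cycle type of $\sigma\in\mathfrak S_k$, it has the leading behaviour $\text{Wg}(\sigma,N)=N^{-2k+|C(\sigma)|}\bigl(\prod_{\text{cycles }c}(-1)^{|c|-1}\,\mathrm{Cat}_{|c|-1}\bigr)\bigl(1+O(N^{-2})\bigr)$ where $\mathrm{Cat}_{m}$ is the $m$-th Catalan number, and more precisely one has the Collins--Matsumoto expansion $\text{Wg}(\sigma,N)=N^{-2k+|C(\sigma)|}\sum_{g\ge 0}(-1)^{g}\,\varphi_{g}(\sigma)\,N^{-2g}$ with nonnegative integer coefficients $\varphi_g(\sigma)$ (this is the genus expansion of Weingarten functions). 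So $N^{2k-|C(\sigma)|}\,\text{Wg}(\sigma,N)$ is, for $N$ large, bounded in absolute value by $\prod_{\text{cycles }c}\mathrm{Cat}_{|c|-1}$ up to a factor $(1+o(1))$, and it suffices to bound this Catalan product by $2^{2k}$ (say by $2^{2k}/2$, to absorb the $1+o(1)$ for $N$ large enough).

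First I would reduce to the cycle-type statement: since $\text{Wg}$ is a class function, write $\sigma$ as a disjoint union of cycles of lengths $\ell_1,\dots,\ell_p$ with $\sum \ell_j = k$ and $p=|C(\sigma)|$. Second, I would invoke the factorization-in-the-leading-order / sub-multiplicativity of the rescaled Weingarten function: either cite directly that $\lim_{N\to\infty} N^{2k-p}\text{Wg}(\sigma,N)=\prod_{j}(-1)^{\ell_j-1}\mathrm{Cat}_{\ell_j-1}$, or, if one wants a finite-$N$ statement, use the Collins--Matsumoto positivity to get $|N^{2k-p}\text{Wg}(\sigma,N)|\le \varphi_0(\sigma)=\prod_j \mathrm{Cat}_{\ell_j-1}$ valid for all $N$ large enough that the alternating genus series is controlled (the series converges for $N>k$, and the partial sums alternate, so the absolute value is bounded by the first term for $N$ sufficiently large depending only on $k$). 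Third, I would bound the Catalan product: $\mathrm{Cat}_{m}\le 4^{m}$ for all $m\ge 0$, hence $\prod_{j=1}^{p}\mathrm{Cat}_{\ell_j-1}\le \prod_{j=1}^{p}4^{\ell_j-1}=4^{\,k-p}\le 4^{k}=2^{2k}$. Combining, $|\text{Wg}(\sigma,N)|\le 2^{2k}N^{-2k+|C(\sigma)|}$ for $N$ large enough, with room to spare so the inequality is strict as stated.

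\textbf{The main obstacle.} The only delicate point is making the "$N$ large enough" honest without a full finite-$N$ estimate: one must either quote a clean uniform bound from the literature (e.g. Collins--\'Sniady's bound $|\text{Wg}(\sigma,N)|\le N^{-2k+|C(\sigma)|}\prod_j \mathrm{Cat}_{\ell_j-1}\cdot(1-\text{something})^{-1}$ valid for $N> \text{const}\cdot k$, which already gives a constant close to the Catalan product), or argue via the alternating-sign genus expansion that the remainder after the leading term does not spoil the $2^{2k}$ prefactor once $N$ exceeds a threshold depending only on $k$. Since the paper only ever needs "for $N$ large enough" and the constant $2^{2k}$ is generous (we proved $4^{k-p}\le 4^k$ with slack $4^{p}$ to spare whenever $\sigma$ has more than one cycle, and slack $\mathrm{Cat}_{k-1}$ vs $4^{k-1}$ when $\sigma$ is a single $k$-cycle), this slack comfortably absorbs the $1+O(N^{-2})$ correction, so the proof closes. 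I would present it in exactly this order: class-function reduction, leading-order value with the Catalan product, $\mathrm{Cat}_m\le 4^m$, telescoping $\sum(\ell_j-1)=k-p\le k$, then "choose $N$ large enough to absorb the subleading corrections."
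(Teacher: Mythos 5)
Your argument is correct and follows essentially the same route as the paper, which gives no detailed proof but simply states that the lemma "can be deduced from the asymptotic behavior of the Weingarten functions" of Collins and Collins--\'Sniady. You have merely filled in that deduction explicitly: leading order $N^{-2k+|C(\sigma)|}\prod_{j}(-1)^{\ell_j-1}\mathrm{Cat}_{\ell_j-1}$, the bound $\prod_j\mathrm{Cat}_{\ell_j-1}\leq 4^{k-|C(\sigma)|}\leq 2^{2k}$, and absorption of the subleading corrections for $N$ large (at fixed $k$), which is exactly the intended argument.
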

This lemma can be deduced from the asymptotic behavior of the Weingarten functions \cite{Collins, ColSni}.

We bound the norm of the resolvents using lemma \ref{boundresolvent} (recall that we have a resolvent per corner and there 
are $2|E(G)|+k$ corners on a LVE graph with $|E(G)|$ edges and $k$ cilia). Taking into account that each trace produces a factor of $N$ 
we obtain:
 \begin{multline}
 \bigg|\text{Wg}(\tau\sigma^{-1},N)
\sum_{1\leq p_{1},\dots,p_{k}\leq N}
\prod_{1\leq m\leq f-b}\Tr\Big[Y^{m}\Big]\prod_{1\leq l\leq k}
X^{l}_{p_{l}p_{\zeta\tau^{1}(l)}}
\bigg| \\ 
\leq \frac{2^{2k}N^{|C(\zeta\tau^{-1})|+|C(\tau\sigma^{-1})|-2k+F(G)-B(G)}}{\big(\cos\frac{\arg\lambda}{2}\big)^{2E(G)+k}} \; .
 \end{multline}

In order to bound the scaling in $N$ of the amplitude of a LVE graph we use the following lemma \cite{Razvannonperturbative}.
\begin{lemma}
\label{permutationinequalitylemma}
Let $\sigma$ and $\tau$ two permutations of $k$ elements. Then, the numbers of cycles in the decompositions of $\sigma$, $\tau$ and $\sigma\tau$  obey:
\begin{equation}
|C(\sigma)|+|C(\tau)|\leq k+|C(\sigma\tau)| \; .
\label{permutationinequality}
\end{equation}
\end{lemma}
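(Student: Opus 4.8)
The plan is to reformulate the inequality in terms of the \emph{reflection length} on the symmetric group and then prove the triangle inequality for that length. For $\pi\in\mathfrak{S}_k$ put $\ell(\pi)=k-|C(\pi)|$, where the cycle count $|C(\pi)|$ includes the fixed points of $\pi$ as cycles of length one. A direct rearrangement shows that the claimed bound $|C(\sigma)|+|C(\tau)|\leq k+|C(\sigma\tau)|$ is equivalent to the subadditivity statement $\ell(\sigma\tau)\leq\ell(\sigma)+\ell(\tau)$, so it suffices to establish the latter.

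The key combinatorial input, which I expect to be the only slightly delicate point, is that multiplying a permutation by a single transposition changes its cycle count by exactly one. Indeed, if $t=(a\,b)$ and $a,b$ lie in the same cycle of $\pi$, then in $\pi t$ that cycle is split into two, while if $a,b$ lie in distinct cycles of $\pi$, then in $\pi t$ those two cycles are merged into one; in either case $\big||C(\pi t)|-|C(\pi)|\big|=1$, hence $\big|\ell(\pi t)-\ell(\pi)\big|=1$ (and the same holds for $t\pi$, since it is conjugate to $\pi t$). From this I would first deduce, by induction on $m$ starting from $\ell(\mathrm{id})=0$, that whenever a permutation $\rho$ is a product of $m$ transpositions one has $\ell(\rho)\leq m$.

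Next I would record the complementary fact: any cyclic permutation of a set of $\ell$ elements factors as $(a_1a_2\cdots a_\ell)=(a_1a_2)(a_2a_3)\cdots(a_{\ell-1}a_\ell)$, a product of $\ell-1$ transpositions; applying this cycle by cycle shows that every $\pi\in\mathfrak{S}_k$ is a product of $\sum_{c}(|c|-1)=k-|C(\pi)|=\ell(\pi)$ transpositions. Combining the two facts, I would write $\sigma$ as a product of $\ell(\sigma)$ transpositions and $\tau$ as a product of $\ell(\tau)$ transpositions; concatenating these factorizations exhibits $\sigma\tau$ as a product of $\ell(\sigma)+\ell(\tau)$ transpositions, whence $\ell(\sigma\tau)\leq\ell(\sigma)+\ell(\tau)$. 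Unwinding the definition of $\ell$, this is exactly $|C(\sigma)|+|C(\tau)|\leq k+|C(\sigma\tau)|$, completing the proof.
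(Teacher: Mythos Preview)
Your proof is correct and takes a genuinely different route from the paper. You recast the inequality as the triangle inequality for the reflection length $\ell(\pi)=k-|C(\pi)|$, identify $\ell$ with the minimal number of transpositions in a factorization of $\pi$ (via the two facts that a transposition changes the cycle count by exactly one and that the cycle decomposition yields a factorization of the right length), and conclude by concatenating factorizations. The paper instead proves the equivalent three--permutation form $|C(\sigma\xi^{-1})|+|C(\xi\tau)|\le k+|C(\sigma\tau)|$ by an iterative argument: starting from an arbitrary $\xi$, it repeatedly modifies $\xi$ on two points so that $\xi\tau$ acquires one additional fixed point, checks via a bipartite--graph picture that the left--hand side can only increase under this move, and iterates until $\xi=\tau^{-1}$, where the inequality becomes an equality. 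Your argument is the classical metric interpretation (the Cayley graph of $\mathfrak{S}_k$ with transpositions as generators) and is shorter and more conceptual; the paper's argument is more hands--on and avoids naming the word metric, at the cost of a somewhat ad hoc inductive step. Note that the paper's ``more general'' inequality is in fact exactly your triangle inequality applied to $a=\sigma\xi^{-1}$ and $b=\xi\tau$, so the two results have identical strength.
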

\begin{proof} We will prove the following more general inequality: for any three permutations $\sigma, \tau$ and $\xi$ of $k$ elements we have:
\[
 |C(\sigma \xi^{-1} )| + |C(\xi \tau)| \le k + |C(\sigma\tau)| \; .
\]

Let us represent $k$ white vertices labeled $1$ to $k$ and $k$ black vertices labeled $1$ to $k$.
We connect the white vertex $p$ with the black vertex $\tau(p)$ and the black vertex $q$ with the white vertex 
$\xi(q)$. The cycles of $\xi\tau$ are the cycles made of alternating $\xi$ and $\tau$ edges (the connected components of this graph).

If there exists a $p$ such that $ \xi \tau (p) \neq p $
we compare $|C(\xi\sigma^{-1} )| + |C(\xi \tau)|  $ with $|C(\sigma \tilde \xi^{-1} )| + |C( \tilde \xi \tau)|$ where:
\[
 \tilde \xi (q) = \begin{cases}
                  \xi(q)\; & , \quad \text{for}\; q \neq \xi^{-1}(p), \tau(p) \\ 
                   \xi(\tau(p))=  \xi \tau \xi (q) \; &, \quad \text{for}\; q = \xi^{-1}(p)  \\
                    p = \tau^{-1}(q)  \; &, \quad \text{for}\; q =\tau(p)  
                  \end{cases} \; .
\]
The number of cycles of $\xi\tau$ goes up by one:
\[
 |C(\xi \tau)|  = 1 + |C( \tilde \xi \tau)| \;,
\] 
while the number of cycles of $ \sigma \xi^{-1}$ can not decrease by more than one (looking at the graph corresponding to $\xi$ and $\sigma^{-1}$
we see that the change from $\xi$ to $\tilde \xi$ can at most collapse two connected components into one). As $|C(\xi\sigma^{-1} )|  = |C(\sigma\xi^{-1} )| $
we conclude that:
\[
 |C(\sigma \xi^{-1} )| + |C(\xi \tau)| \le |C(\sigma \tilde \xi^{-1} )| + |C( \tilde \xi \tau)| \;, 
\]
and now $\tilde \xi \tau(p)=p$. Iterating we obtain:
\[
 |C(\sigma \xi^{-1} )| + |C(\xi \tau)| \le \left(  |C(\sigma \xi^{-1} )| + |C(\xi \tau)| \right)_{\xi=\tau^{-1}} = k + |C(\sigma \tau)| \; .
\]
\end{proof}

A double application of lemma \ref{permutationinequalitylemma} leads to 
\begin{equation}
|C(\zeta\tau^{-1})|+|C(\tau\sigma^{-1})|\leq k+|C(\zeta\sigma^{-1})|\leq 2k+|C(\zeta)|-|C(\sigma)| \; ,
\end{equation}
and taking into account that $|C(\sigma)|=|\pi|$ (the number of integers in the partition $\pi$) and $|C(\zeta)|=|B(G)|$ is the number of broken faces, we arrive at: 

\begin{equation}\label{amplitudecumulantsinequality}
 \bigg|{\cal A}_{(G,T)}^{\pi}(\lambda,N)
\bigg|\leq 
\frac{2^{2k}(k!)^{2}|\lambda|^{|E(G)|}N^{|V(G)|-|E(G)|+|F(G)|-|\pi|}}{\big(\cos\frac{\arg\lambda}{2}\big)^{2|E(G)|+k}|V(G)|!(|E(G)|-|V(G)|+1)!}
 \end{equation}
In particular, for a tree we have
\begin{equation}
 \bigg|{\cal A}_{T}^{\pi}(\lambda,N)
\bigg|\leq 
\frac{2^{2k}(k!)^{2}|\lambda|^{|E(T)|}N^{2-|\pi|}}{\big(\cos\frac{\arg\lambda}{2}\big)^{2|E(T)|+k}|V(T)|!} \; .
\end{equation}
This establishes theorem \ref{treecumulants:thm}.

In order to prove theorem \ref{perturbativecumulants:thm} we apply the same algorithm as before and obtain the perturbative
series with remainder:
\begin{equation}
{\cal R}_{\pi,n}(\lambda,N)=
\sum_{(G,T)\text{ LVE graph}\atop |E(G)|= n+1\text{ and } |K(G)|=k}
{\cal A}^{\pi}_{(G,T)}(\lambda,N)
+
\sum_{T\text{ LVE tree}\atop |E(T)|\geq n+2\text{ and } |K(T)|=k}
{\cal A}_{T}^{\pi}(\lambda,N) \; ,
\end{equation}
where $K(G)$ denotes the number of cilia of $G$.
In order to bound this remainder we use eq. \eqref{amplitudecumulantsinequality} and bound separately 
the contributions of the trees and of the LVE graphs with loop edges.

The contribution of trees with  $E(T)\geq n+2$ edges is bounded by:
\begin{equation}
\bigg|\sum_{T\text{ LVE tree}\atop |E(T)|\geq n+2\text{ and } |K(T)|=k}
{\cal A}_{T}^{\pi}(\lambda,N)\bigg|\leq\sum_{n'\geq n+2}
{\cal N}(n',k)\frac{2^{2k}(k!)^{2}|\lambda|^{n'}N^{2-|\pi|}}{\big(\cos\frac{\arg\lambda}{2}\big)^{2n'+k}(n'+1)!}
\end{equation}
The number of LVE trees with $n'$ edges and $k$ cilia ${\cal N}(n',k) $ has been evaluated in lemma
\ref{coutingtrees:lem} and we get:
\begin{multline}
\bigg|\sum_{T\text{ LVE tree}\atop |E(T)|\geq n+2\text{ and } |K(T)|=k}
{\cal A}_{T}^{\pi}(\lambda,N)\bigg|\leq\frac{N^{2-|\pi|}2^{3k-1}  k! }{\big(\cos\frac{\arg\lambda}{2}\big)^{k}}
\sum_{n'\geq n+2}
\frac{(n'-1)!}{(n'+1-k)! }
\frac{2^{2n'}|\lambda|^{n'}}{\big(\cos\frac{\arg\lambda}{2}\big)^{2n'}} \; .
\end{multline}
At fixed $k$ the ratio $\frac{(n'-1)!}{(n'+1-k)!}$ is a polynomial in $n'$ so that the series on the right hand side is
convergent for $ 4 |\lambda| < \cos^{2}\frac{\arg\lambda}{2} $. It can be rewritten as:
\begin{equation}
\sum_{n'\geq n+2}
\frac{(n'-1)!}{(n'+1-k)!}
\frac{2^{2n'}|\lambda|^{n'}}{\big(\cos\frac{\arg\lambda}{2}\big)^{2n'}} =
\sum_{m\geq 0}
\frac{(m+n+1)!}{(m+n+3-k)!} \left(  \frac{ 4 |\lambda|}{  \big(\cos\frac{\arg\lambda}{2}\big)^{2} } \right)^{m+n+2} \; ,
\end{equation}
and, as $n-1\ge k$ (as the LVE trees have at most a cilium per vertex) we have $m! \ge (m+n+3-k)!$, hence:
\begin{align}
& \bigg|\sum_{T\text{ LVE tree}\atop |E(T)|\geq n+2\text{ and } |K(T)|=k}
{\cal A}_{T}^{\pi}(\lambda,N)\bigg| \leq \crcr
& \qquad \leq \frac{N^{2-|\pi|}2^{3k-1}  k! }{\big(\cos\frac{\arg\lambda}{2}\big)^{k}}
 (n+1)! \sum_{m\ge 0} \binom{m+n+1}{m}\left(  \frac{ 4 |\lambda|}{  \big(\cos\frac{\arg\lambda}{2}\big)^{2} } \right)^{m+n+2} = \crcr
& \qquad = \frac{N^{2-|\pi|}2^{3k-1}  k! }{\big(\cos\frac{\arg\lambda}{2}\big)^{k}}
 (n+1)!  \frac{
   \left(  \frac{ 4 |\lambda|}{  \big(\cos\frac{\arg\lambda}{2}\big)^{2} } \right)^{n+2}
 }{ \left( 1 -  \frac{ 4 |\lambda|}{  \big(\cos\frac{\arg\lambda}{2}\big)^{2} } \right)^{n+2} } \; .
\end{align}
 
Denoting by $n'$ the number of edges in a spanning tree (so that the graph has $n'+1$ vertices) and $n''$ the number of loop edges
of a LVE graphs with loop edges, the contribution to the rest term of these graphs is bounded by: 
\begin{multline}
 \bigg|
\sum_{(G,T)\text{ LVE graph}\atop |E(G)|= n+1\text{ and } |K(G)|=k}
{\cal A}^{\pi}_{(G,T)}(\lambda,N)
\bigg|\leq \\
\leq \sum_{n'+n''=n+1}{\cal N}(n',n'',k)
\frac{2^{2k}(k!)^{2}|\lambda|^{n'+n''}N^{2-|\pi|}}{\big(\cos\frac{\arg\lambda}{2}\big)^{2n'+2n''+k}(n'+1)!(n'')!} \; ,
\end{multline}
where ${\cal N}(n',n'',k)$ is the number of LVE graphs with $n''$ loop edges, $n'+1$ vertices and $k$ cilia.  

The following lemma is an immediate consequence of the counting of LVE graph with given number of vertices, 
cilia and loop edges performed in \cite {Razvannonperturbative}.

\begin{lemma}[Counting LVE graphs]
The number of LVE graphs with $n'+1$ vertices, $n''$ loop edges and $k$ cilia reads 
\begin{equation}
{\cal N}(n',n'',k)=\frac{(2 n'+2 n'' +k-1)!(n'+1)!}{(n'+k)!2^{n''}k!(n'+1-k)!} \; .
\end{equation}
\end{lemma}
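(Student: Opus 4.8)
The plan is to reduce to the tree count of Lemma~\ref{coutingtrees:lem} and then build up the loop edges one at a time. First I would peel off the choice of ciliated vertices: let $M(n',n'',k)$ be the number of LVE graphs with $n'+1$ labelled vertices, $n''$ labelled loop edges and cilia sitting on a \emph{prescribed} $k$-element set of vertices. Then ${\cal N}(n',n'',k)=\binom{n'+1}{k}M(n',n'',k)$, and since $\binom{n'+1}{k}=\tfrac{(n'+1)!}{k!\,(n'+1-k)!}$ it suffices to prove $M(n',n'',k)=\tfrac{(2n'+2n''+k-1)!}{(n'+k)!\,2^{n''}}$.

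I would argue by induction on $n''$. The base case $n''=0$ is exactly the count of plane trees on $n'+1$ labelled vertices carrying a cilium in one corner of each of the $k$ prescribed vertices, namely $M(n',0,k)=\tfrac{(2n'+k-1)!}{(n'+k)!}$, which is the quantity established in the proof of Lemma~\ref{coutingtrees:lem} (following \cite{Razvannonperturbative}) and which matches the claimed formula. For the inductive step I would use the bijection that removes the loop edge of highest label: this produces an LVE graph with $n''-1$ loop edges — still connected, with the same spanning tree and the same cilia — together with the record of the corner, or pair of corners, of that smaller graph at which the deleted edge had been attached; re-attaching a ribbon loop edge amounts exactly to this choice of corners.

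Since a connected ribbon graph with $e$ edges and $k$ cilia has precisely $2e+k$ corners, the graph to which we re-attach has $c'=2n'+2(n''-1)+k$ corners, and the number of ways to insert one new ribbon loop edge, whose two ends are indistinguishable, is $\binom{c'}{2}+c'=\binom{c'+1}{2}$: one either picks an unordered pair of distinct corners — the locations of the two ends, possibly at different vertices, possibly at two corners of the same vertex — or a single corner, producing a tadpole with both ends there. Since the ribbon graphs are orientable the rotation system of the enlarged graph is determined in each case, distinct choices give distinct graphs, and a direct check shows that the insertion raises the corner count by exactly $2$, so that the value of $c'$ at step $j$ is $2n'+2j+k-2$ for $j=1,\dots,n''$ and the recursion closes. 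Hence $M(n',n'',k)=M(n',0,k)\prod_{j=1}^{n''}\binom{2n'+2j+k-1}{2}$, and since $\prod_{j=1}^{n''}(2n'+2j+k-1)(2n'+2j+k-2)$ runs over all integers from $2n'+k$ to $2n'+2n''+k-1$ it telescopes to $\tfrac{(2n'+2n''+k-1)!}{(2n'+k-1)!}$; dividing by $2^{n''}$ and inserting the base case gives $M(n',n'',k)=\tfrac{(2n'+2n''+k-1)!}{(n'+k)!\,2^{n''}}$, and multiplying by $\binom{n'+1}{k}$ yields the lemma.

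The step I expect to be the main obstacle is this combinatorial bookkeeping in the inductive step: one must verify carefully that inserting a labelled ribbon loop edge at a prescribed corner or pair of corners — in particular in the tadpole cases — produces a \emph{unique} valid LVE graph, that the deletion/reinsertion map is a genuine bijection so that no graph is missed or counted twice, and that the corner count behaves as claimed so the induction is self-consistent. This is precisely the enumeration performed in \cite{Razvannonperturbative}, of which the present statement is the one-colour specialization.
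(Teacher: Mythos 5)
Your proposal is correct, and it is worth noting how it differs from the paper's own argument. The paper's proof is essentially a citation: it quotes from \cite{Razvannonperturbative} the count $\frac{(2(n'+n'')+k-1)!}{2^{n''}(n'+k)!}$ of LVE graphs with $n'+1$ vertices, $n''$ labelled loop edges and cilia on a \emph{fixed} set of $k$ vertices, and then multiplies by $\binom{n'+1}{k}$ for the choice of ciliated vertices. You use exactly the same decomposition ${\cal N}(n',n'',k)=\binom{n'+1}{k}M(n',n'',k)$, but instead of citing the fixed-set count with loop edges you cite only its tree case (the quantity already used in Lemma \ref{coutingtrees:lem}) and derive the loop-edge factor yourself, by induction on $n''$ via the deletion/insertion of the highest-labelled loop edge. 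Your bookkeeping is right: removing a loop edge never touches the distinguished spanning tree, so the smaller object is again an LVE graph with the same tree and cilia; a connected LVE graph with $e$ edges and $k$ cilia has $2e+k$ corners, so each insertion step contributes $\binom{c'}{2}+c'=\binom{c'+1}{2}$ choices (the tadpole at a single corner is indeed unique because the intermediate-field edge is unoriented, and likewise an unordered pair of corners determines the graph), the corner count increases by $2$ per edge, and the telescoping product $\prod_{j=1}^{n''}\frac{(2n'+2j+k-1)(2n'+2j+k-2)}{2}=\frac{(2n'+2n''+k-1)!}{2^{n''}(2n'+k-1)!}$ reproduces the cited formula; small cases (e.g. $n'=1$, $k=0,1$, $n''=1$) confirm the count. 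What your route buys is a proof that is self-contained modulo the tree count alone, at the price of having to check the insertion/deletion bijection — which, as you anticipate and as the checks above confirm, does go through; this is essentially the enumeration argument of \cite{Razvannonperturbative} specialised to one colour, which the paper simply invokes wholesale.
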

\begin{proof}
First notice that that the number of LVE graphs with $n'+1$ vertices, $k$ cilia on a specific set of vertices and $n''$ loop edges reads \cite{Razvannonperturbative}
\begin{equation}
\frac{\big(2(n'+n'')+k-1)\big)!}{2^{n''}(n'+ k )!} \; .
\end{equation}
Then, we obtain ${\cal N}(n',n'',k)$ by counting the configurations of vertices that can carry cilia, so that
\[
{\cal N}(n',n'',k)=\frac{\big(2(n'+n'')+k-1\big)!}{2^{n''}(n'+ k )!}\times\frac{(n'+1)!}{(n'+1-k)!k!}
=
\frac{(2n'+2n''+k-1)!(n'+1)!}{(n'+k)!2^{n''}k!(n'+1-k)!} \; .
\]
\end{proof}

Using the binomial bound $\frac{1}{(n'+k)!(n'')!}\leq\frac{2^{n'+n''+k}}{(n'+n''+k)!}$ and the trivial inequality $\frac{1}{(n'+k-1)!}\leq 1$, we arrive at
\begin{equation}
 \bigg|
\sum_{(G,T)\text{ LVE graph}\atop |E(G)|= n+1}
{\cal A}^{\pi}_{(G,T)}(\lambda,N)
\bigg|\leq
\frac{2^{3k+n+1}k!|\lambda|^{n+1}N^{2-|\pi|}(2n+k+1)!}{\big(\cos\frac{\arg\lambda}{2}\big)^{2n+2+k}(n+k+1)!}
\sum_{n'+n''=n+1} 1 \; .
\end{equation}
Another use of the binomial formula shows that $\frac{(2n+k+1)!}{(n+k+1)!}\leq 2^{2n+k+1}n!$ so that
\begin{equation}
 \bigg|
\sum_{(G,T)\text{ LVE graph}\atop |E(G)|= n+1\text{ and } |K(T)|=k}
{\cal A}^{\pi}_{(G,T)}(\lambda,N)
\bigg|\leq
\frac{2^{4k+3n+2}k!|\lambda|^{n+1}N^{2-|\pi|}n!(n+2)}{\big(\cos\frac{\arg\lambda}{2}\big)^{2n+2+k}}
\end{equation}
Since $n+2<2(n+1)$,this also implies
\begin{equation}
 \bigg|
\sum_{(G,T)\text{ LVE graph}\atop |E(G)|= n+1\text{ and } |K(T)|=k}
{\cal A}^{\pi}_{(G,T)}(\lambda,N)
\bigg|\leq
\frac{2^{4k+3n+3}k!|\lambda|^{n+1}N^{2-|\pi|} (n+1)!}{\big(\cos\frac{\arg\lambda}{2}\big)^{2n+2+k}}
\end{equation}

Summing up the two bounds we obtain:
\begin{align*}
& \Big|{\cal R}_{\pi,n}(\lambda,N)\Big|
\leq \crcr
& \; \leq \frac{N^{2-|\pi|}2^{3k-1}  k! }{\big(\cos\frac{\arg\lambda}{2}\big)^{k}}
 (n+1)!  \frac{
   \left(  \frac{ 4 |\lambda|}{  \big(\cos\frac{\arg\lambda}{2}\big)^{2} } \right)^{n+2} }
   { \left( 1 -  \frac{ 4 |\lambda|}{  \big(\cos\frac{\arg\lambda}{2}\big)^{2} } \right)^{n+2} }
 +\frac{2^{4k+3n+3}k!|\lambda|^{n+1}N^{2-|\pi|} (n+1)!}{\big(\cos\frac{\arg\lambda}{2}\big)^{2n+2+k}} \crcr
& \; = N^{2-|\pi|} \left( \frac{2^{3k-1}k!}{ \big(\cos\frac{\arg\lambda}{2}\big)^{k} } \right) (n+1)!   
\left(  \frac{ 4 |\lambda|}{  \big(\cos\frac{\arg\lambda}{2}\big)^{2}} \right)^{n+1} 
\left(
 \frac{    \frac{ 4 |\lambda|}{  \big(\cos\frac{\arg\lambda}{2}\big)^{2} }   }
   { \left( 1 -  \frac{ 4 |\lambda|}{  \big(\cos\frac{\arg\lambda}{2}\big)^{2} } \right)^{n+2} }
   + 2^{k+n+2 }
\right)\; .
\end{align*}

This establishes theorem \ref{perturbativecumulants:thm}. 

\subsection{Topological expansion for the cumulants (proof of Theorem \ref{topologicalcumulants:thm})}

The proof of theorem \ref{topologicalcumulants:thm}
proceeds along the same lines as that of theorem \ref{perturbativecumulants:thm}, except that the perturbative expansion involves 
contributions of all graphs up to genus $g$ and the remainder contains graphs of genus $g+1$. As before, the main idea is to express 
the contribution of Feynman graphs and LVE in terms of trace invariants.

Starting with eq. \eqref{topologicalexpansion:eq}, we collect terms homogeneous of degree $k$ in $JJ^{\dagger}$. The first term is obviously 
written as a sum of trace invariants
\begin{multline}
\sum_{G\text{ ribbon graph with $k$ cilia}\atop\text{ broken faces corresponding to $\pi$ and $g(G)\leq g$}}\frac{(-\lambda)^{|E(G)|}N^{\chi(G)}}{|\text{Aut}(G)|}
\prod_{f\text{ broken face}}\big(JJ^{\dagger}\big)^{c(f)}=\\
\sum_{G\text{ ribbon graph with $k$ cilia }\atop\text{ broken faces corresponding to $\pi$ and $g(G)\leq g$}}\frac{(-\lambda)^{|E(G)|}N^{\chi(G)}}{|\text{Aut}(G)|}
\Tr_{\pi}(JJ^{\dagger}) \; .
\end{multline}
After derivation with respect to the sources, it yields a contribution to $K_{\pi,g}(\lambda,n)$ of the form
\begin{equation}
\sum_{G\text{ ribbon graph with $k$ cilia}\atop\text{ broken faces corresponding to $\pi$ and $g(G)\leq g$}}\frac{(-\lambda)^{|E(G)|}N^{\chi(G)}}{|\text{Aut}(G)|} \;, 
\end{equation}
which is just the sum over Feynman graph of genus less than $g$ and broken faces corresponding to $\pi$.
Recall that the number of graphs of genus $g$ with $n$ edges is bounded as in \eqref{ciliabound:eq}:
\begin{equation}
\widetilde{\cal N}(g,n,k)\leq
C''_{g}12^{n} n^{\frac{5}{2}(g-1)}
\frac{(2n)!}{k!(2n-k)!} \;.
\label{ciliacbound2:eq}
\end{equation}
Accordingly, the series $\sum_{n}\widetilde{\cal N}(g,n,k)z^{k}$ converges for $|z|<\frac{1}{12}$.

Consider now the remainder $\widetilde{R}_{\pi,g}(\lambda,N)$, containing only genus $g+1$ LVE graphs and express it in terms 
of trace invariants: 
\begin{equation}
\widetilde{{\cal R}}_{\pi,g}(\lambda,N)=
\sum_{(G,T)\text{ LVE graphs with broken faces corresponding to $\pi$}\atop
g(G)=g+1 \text{ and } g(G-e_{|L(G,T)|})=g}
{\cal A}^{\pi}_{(G,T)}(\lambda, N) \; .
\end{equation}

We bound each LVE graph as in \eqref{amplitudecumulantsinequality}. Since this bound only depends on the
graph $G$ and not on the choice of the spanning tree $T$, we use lemma \ref{Heppsectors} to get
\begin{equation}
\big|\widetilde{R}_{\pi,g}(\lambda,N)\big|
\leq \sum_{n=2(g+1)}^{\infty}
\widetilde{\cal N}(g+1,n,k)\frac{
2^{2k}(k!)^{2}|\lambda|^{|n|}N^{2-2(g+1)-|\pi|}}{\big(\cos\frac{\arg\lambda}{2}\big)^{2n+k}}
\end{equation}
hence
\begin{align*}
& \big|\widetilde{R}_{\pi,g}(\lambda,N)\big| \leq \crcr
& \leq N^{2-2(g+1)-|\pi|} \frac{2^{3k} k! }{ \big(\cos\frac{\arg\lambda}{2}\big)^{k} } C''_{g+1}
\sum_{n=2(g+1)}^{\infty}  n^{\frac{5}{2}(g-1) + k} \left( \frac{12|\lambda|}{\big(\cos\frac{\arg\lambda}{2}\big)^{2} } \right)^n \crcr
& = N^{2-2(g+1)-|\pi|} \frac{2^{3k} k! }{ \big(\cos\frac{\arg\lambda}{2}\big)^{k} } C''_{g+1}
\left( \frac{12|\lambda|}{\big(\cos\frac{\arg\lambda}{2}\big)^{2} } \right)^{2g+2} \crcr
& \qquad \qquad \times \sum_{m\ge 0} (m+g+1)^{ \frac{5}{2}(g-1) + k  } \left( \frac{12|\lambda|}{\big(\cos\frac{\arg\lambda}{2}\big)^{2} } \right)^{m} \; .
\end{align*}
Bounding:
\[
  (m+g+1)^{ \frac{5}{2}(g-1) + k  }  \le (m+g+1)^{3g+k} \le \frac{ (m+4g+k+1)! }{ m!  } \;,
\]
we obtain the rough bound:
\begin{align*}
 & \big|\widetilde{R}_{\pi,g}(\lambda,N)\big| \leq \crcr
 &\leq  N^{2-2(g+1)-|\pi|} \frac{2^{3k} k! }{ \big(\cos\frac{\arg\lambda}{2}\big)^{k} } C''_{g+1}
\left( \frac{12|\lambda|}{\big(\cos\frac{\arg\lambda}{2}\big)^{2} } \right)^{2g+2} 
\frac{(4g+k+1)! }{ \left( 1 -    \frac{12|\lambda|}{\big(\cos\frac{\arg\lambda}{2}\big)^{2} } \right)^{4g+k} } \;. 
\end{align*}

This achieves the proof of theorem \ref{topologicalcumulants:thm}.

\section*{Acknowledgements:}

Both authors thank V. Rivasseau for very fruitful discussions and the Erwin Schr\"odinger Institute
for hospitality during the program "Combinatorics, Geometry and Physics". T.K. also thanks the Centre de Physique 
Th\'eorique at Ecole Polytechnique for hospitality and University Paris-Nord for support.

\appendix

\section{Schwinger-Dyson equations for the intermediate field}

\label{SDappendix}

In this appendix, we derive the explicit formula for the order 2 cumulant in the large $N$ limit using the Schwinger-Dyson equation for the intermediate field.  
This allows us to express it as an explicit power series collecting all the planar graphs, with two external legs in the matrix model formulation or, equivalently,
with one cilium in the intermediate field representation. This result is classical and goes back to Koplik, Neveu and Nussinov \cite{KPN} which treated the $\phi^{4}$ 
interaction without recourse to the intermediate field. 

Let us collect all planar graphs contributing to the order 2 cumulant in the power series
\begin{equation}
G(\lambda,q)=\sum_{m,n}G_{m,n}q^{m}\lambda^{n}
\end{equation}
with $G_{m,n}$ the number of connect planar graphs with $n$ vertices (intermediate field edges or equivalently, matrix vertices $\text{Tr}(MM^{\dagger}MM^{\dagger})$ )
and $m$ boundary matrix lines. The indeterminate $q$ only appears in the formulation of the Schwinger-Dyson equation and the order 2 cumulant is recovered in the limit $q\rightarrow 1$. 

\begin{figure}[htb]

\[
\parbox{3cm}{\includegraphics[width=3cm]{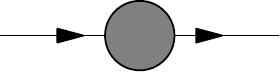}}
\quad
=
\quad
\parbox{1.5cm}{\includegraphics[width=1.5cm]{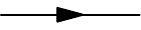}}
\quad+\quad
\begin{minipage}{4cm}{\vskip2cm\includegraphics[width=4cm]{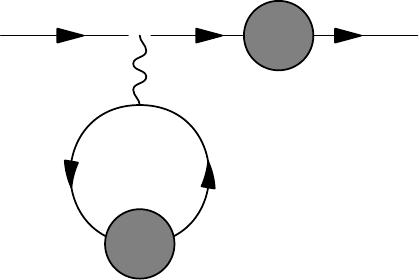}}\end{minipage}
\quad+\quad\begin{minipage}{4cm}{\vskip0.5cm\parbox{4cm}{\includegraphics[width=4cm]{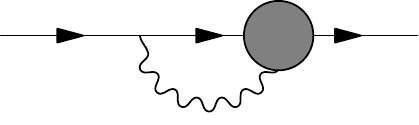}}}\end{minipage}
\]
\caption{Schwinger-Dyson equations in the intermediate field representation}
\label{SDA:fig}
\end{figure}

It obeys the planar Schwinger-Dyson equation
\begin{equation}
G(\lambda,q)=q+\lambda q G^{2}(\lambda,q)+\lambda q^{2}\frac{G(\lambda,q)-G(\lambda,1)}{q-1}\label{planarSDE}
\end{equation}
Graphically, this equation can be derived as follows. Starting from the incoming $M$ line, there are three possibilities, see figure \ref{SDA:fig}.

\begin{itemize}
\item We do not meet an interaction with the intermediate field so that the graph reduces to a single matrix line (first term in figure \ref{SDA:fig} and \eqref{planarSDE}).

\item The removal of the first $AMM^{\dagger}$ vertex we encounter disconnects the graph (second term in figure\ref{SDA:fig} and \eqref{planarSDE}).

\item The graph remains connected after the removal of the first vertex encountered (third term in figure \ref{SDA:fig} and \eqref{planarSDE}). In this case, we have 
to keep track of the various possibilities of attaching the intermediate field line on the external boundary of the graph. This is the origin of the $q$-derivative term 
\begin{equation}
\frac{G(\lambda,q)-G(\lambda,1)}{q-1}=\sum_{m\geq 1,n\geq 0}G_{m,n}\Big(\sum_{k=0}^{m-1}q^{k}\Big)\lambda^{n}
\end{equation}
since the insertion of the $\A$ line encloses $k$ $\M$-lines for $ 0\leq k\leq m-1$.

\end{itemize}

At lowest order in $n$, the explicit expressions of $G_{n}(q)=\sum_{m}G_{m,n}q^{n}$ read
\begin{align}
G_{0}(q)&=q\\
G_{1}(q)&=q^{2}[2]_{q}\\
G_{2}(q)&=2q^{4}[2]_{q}+q^{2}([2]_{q}+[3]_{q})\\
G_{3}(q)&=q^{5}([2]_{q})^{2}+4q^{5}[2]_{q}+2q^{3}([2]_{q} +[3]_{q})\\&+2([4]_{q}+[5]_{q})+(2[2]_{q}+2[3]_{q}+[4]_{q})
\end{align}
with $[n]_{q}=\frac{q^{n}-1}{q-1}$

The planar Schwinger-Dyson equation \eqref{planarSDE}  $G(1,\lambda)$ is quadratic in $G(q)$ and can be used to determine $G(q)$ in terms of the variables $q$ and 
$\lambda$ and of $G(1,\lambda)$ treated as an independent variable
\begin{equation}
G(q,\lambda)=\frac{q\!-\!1\!-\!\lambda q^{2}-\sqrt{(q\!-\!1\!-\!\lambda q^{2})^{2}-4\lambda q(q\!-\!1)[q(q\!-\!1)\!-\!\lambda q^{2}G(1,\lambda)]}}{\lambda q(q\!-\!1)}
\end{equation}
where we have retained the solution with a well defined limit at $\lambda=0$. 

In order for $G(q)$ to be analytic in $q$ and $\lambda$, $G(1,\lambda)$ which is itself a series in $\lambda$, must be such that the polynomial under the square 
root has a double root in $q$. This polynomial reads
\begin{equation}
(\lambda^{2}-4\lambda+4\lambda^{2}G(1,\lambda))q^{4}
+(6\lambda-4\lambda^{2}G(1,\lambda))q^{3}+(1-2\lambda)q^{2}-2q+1
\end{equation}
and its discriminant factorizes as
\begin{equation}
\lambda^{4}(1-\lambda G(1,\lambda))^{2}(1-G(1,\lambda)-16\lambda+18
\lambda G(1,\lambda)-27\lambda^{2}G^{2}(1,\lambda))
\end{equation}
Discarding the solution $G(1,\lambda)=\frac{1}{\lambda}$, the solution of the quadratic part yields the planar contribution to the order 2 cumulant
\begin{equation}
G(1,\lambda)=
\frac{-1+18\lambda-(1-12\lambda)^{3/2}}{54\lambda^{2}}
\end{equation}
Its expansion as a power series in $\lambda$ reads
\begin{equation}
G(1,\lambda)=\sum_{n}\frac{2\cdot3^{n}}{n+2}C_{n}\lambda^{n}\qquad\text{with}\quad C_{n}=\frac{(2n)!}{n!^{2}(n+1)}
\end{equation}
This reproduces the counting of planar ribbon graphs with one cilium, or equivalently, rooted, bipartite quadrangulations.

\section{The BKAR forest formula}\label{BKARsec}

In this appendix, we briefly review the Brydges-Kennedy-Abdesselam-Rivasseau (BKAR) forest formula \cite{BKAR}
which allows us to expand $\log {\cal Z}[J,J^{\dagger},\lambda,N]$ as a sum over trees.
 
Let $\phi$ be a function of ${\mathbb{ R} }^{\frac{n(n-1)}{2}}$ whose arguments $u_{ij}$ are associated to the edges of the
complete graph on $n$ vertices labeled $\left\{1,2,\dots,n\right\}$. The following theorem yields an expansion of $\phi(1,\dots,1)$ as a 
sum over forests with $n$ labeled vertices, Recall that a forest is a subset of edges of the compete graph that does not contain any cycle. 

For every forest $F$, let us denote (if it exists) ${P}_{i\leftrightarrow j}^{F}$ the unique path in the forest $F$ joining the vertices 
$i$ and $j$.

\begin{theorem}[Brydges-Kennedy-Abdesselam-Rivasseau]
Let $\phi: \,{ \mathbb{ R} }^{\frac{n(n-1)}{2}}\rightarrow {\mathbb C}$ be a smooth, sufficiently derivable function. Then:
\begin{equation}
\phi(1,\dots,1)=\sum_{F\text{ forest}}\int_{0}^{1}\prod_{(i,j)\in{F}}du_{ij} \; 
 \left( \frac{\partial^{|E(F)|} \phi}{\prod_{(i,j)\in{F}} \partial x_{ij}} \right) \big(v^{F}_{ij}\big) \; ,
\label{BKARformula}
\end{equation}
where $v^F_{ij}$ is given by: 
\begin{equation}
v^{F}_{ij}=\left\{\begin{array}{ccl}
\inf_{(k,l)\in{ P}_{i\leftrightarrow j}^{{F}}} u_{kl}&\text{if}&  { P}_{i\leftrightarrow j}^{{F}} \; \text{exists} \\
0&\text{if}&  { P}_{i\leftrightarrow j}^{{F}} \; \text{does not exist} \
\end{array}\right. \; ,
\end{equation}
and $|E(F)|$ is the number of edges in the forest $F$. 
\end{theorem}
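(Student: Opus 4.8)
The plan is to prove this forest formula by induction on the number $n$ of vertices, with the inductive engine being nothing more than the elementary identity $g(1)=g(0)+\int_0^1 g'(s)\,ds$, applied repeatedly to interpolated versions of $\phi$. The base case $n=1$ is trivial: there are no edge variables, the only forest is the empty one, and both sides equal $\phi$.

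For the inductive step, I would single out a vertex, say vertex $1$, and couple it to the remaining vertices through one weakening parameter: let $g(s)$ denote $\phi$ evaluated at $x_{1j}=s$ for $j\geq 2$ and at $x_{ij}=1$ for $i,j\geq 2$, so that $g(1)=\phi(1,\dots,1)$. Then $g(1)=g(0)+\int_0^1 ds\,g'(s)$ splits $\phi(1,\dots,1)$ into two pieces. In the first piece $g(0)$ vertex $1$ is completely decoupled, i.e. all of its couplings $x_{1j}$ are set to $0$ — which is precisely the value $v^F_{1j}$ takes in any forest $F$ that isolates vertex $1$; applying the induction hypothesis to the remaining $n-1$ vertices (on which the configuration is still the constant $1$, so the hypothesis applies verbatim) reproduces exactly the contribution of all forests on $\{1,\dots,n\}$ in which $1$ is isolated. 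The second piece $\int_0^1 ds\sum_{j\geq 2}(\partial_{x_{1j}}\phi)(\cdots)$ produces a first forest edge $(1,j_1)$ carrying the weakening parameter $s$ and merges vertex $1$ with the block of $j_1$.

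One then iterates on the growing block of vertex $1$, introducing a fresh parameter at each stage, with the crucial prescription that a new interpolation parameter $s'$ is combined with the value already reached on a pair $(i,j)$ by taking the minimum, $x_{ij}\mapsto\min(s',x_{ij})$. This is the mechanism that manufactures the hierarchical infimum appearing in $v^F$: the value reached on a pair at the end of the recursion is the smallest weakening parameter along its tree-path, when the two endpoints land in the same tree of $F$, and $0$ otherwise, while the decoupling boundary terms simultaneously match $v^F$ on pairs that end up in distinct trees and feed the induction hypothesis on the complementary vertices. Since a forest on $n$ vertices has at most $n-1$ edges, the recursion terminates after finitely many steps, so $\phi$ need only be of class $C^{n-1}$.

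The analytic content of all of this is nil — it is one-variable calculus. The step I expect to be the real obstacle is the combinatorial bookkeeping: one must organize the recursion so that (a) every forest $F$ on $\{1,\dots,n\}$ is produced exactly once, with neither duplication nor omission; (b) the parameters attached to the distinct edges of $F$ come out as mutually independent variables, each ranging over all of $[0,1]$ — a carelessly chosen scheme instead introduces spurious orderings among them and yields ordered simplices rather than the full cube $[0,1]^F$ that the formula demands; and (c) the running configuration at which $\prod_{\ell\in F}\partial_\ell\phi$ is ultimately evaluated is exactly $v^F(w)$. This accounting is the substance of \cite{BKAR}. A tidy alternative that isolates the combinatorics is to establish the formula first for monomials $\phi=\prod_\ell x_\ell^{a_\ell}$, where both sides collapse to an explicit identity involving sums over sub-forests of the monomial's support, and then to pass to arbitrary sufficiently smooth $\phi$ by density of polynomials in $C^{n-1}$ of the cube $[0,1]^{n(n-1)/2}$, using that both sides of \eqref{BKARformula} are continuous functionals of $\phi$ for the $C^{n-1}$-norm on that cube — the left side is $|\phi(1,\dots,1)|$ and the right side, being a finite sum over forests, is bounded by a constant times $\max_{|\alpha|\leq n-1}\|\partial^\alpha\phi\|_\infty$.
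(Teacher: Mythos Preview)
The paper does not actually prove this theorem. Appendix~\ref{BKARsec} merely states the forest formula, remarks that it ``is a broad generalization of the fundamental theorem of calculus, to which it reduces when $n=2$,'' and then illustrates it explicitly for $n=2$ and $n=3$ before moving on; the proof is deferred to the reference \cite{BKAR}. So there is no proof in the paper to compare against --- your proposal already goes further than the paper does.

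Your sketch is essentially the standard inductive argument and is on the right track. One point worth sharpening: your item (b) suggests that a correctly designed recursion should \emph{avoid} producing ordered simplices in the edge parameters. In fact the usual scheme (the one in \cite{BKAR} and in \cite{advanced}) deliberately \emph{does} produce nested parameters $1\ge s_1\ge s_2\ge\cdots$, because each new interpolation runs only up to the current value. The resolution is not to dodge the ordering but to observe that the same unlabeled forest $F$ arises from many construction sequences --- one for each total order on $E(F)$ compatible with how the blocks merged --- and that summing the corresponding ordered simplices reconstitutes the full cube $[0,1]^{E(F)}$. This is precisely the combinatorial bookkeeping you flag as the substance of the argument; your instinct that it is the only nontrivial step is correct.

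Your alternative route via monomials and $C^{n-1}$-density is also sound, though it trades one piece of bookkeeping for another (the explicit identity on monomials still encodes the same forest combinatorics).
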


This theorem is a broad generalization of the fundamental theorem of calculus, to which it reduces when 
$n=2$. Indeed, in this case there are two forests on the complete graph with 2 vertices (thus with a single edge, 
see figure \ref{forest2:fig}) and we get
\begin{equation}
\phi(1)=\phi(0)+\int_{0}^1 dt_{12} \; \left( \frac{\partial \phi}{\partial x_{12}} \right) (t_{12})
\end{equation}
The first term corresponds to the empty forest ($|E(F)|=0$) and the second one to the full forest ($|E(F)|=1$).
\begin{figure}[htb]
\begin{center}
\begin{tabular}{ccc}
\includegraphics[width=1.5cm]{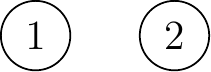}&,&
\includegraphics[width=1.5cm]{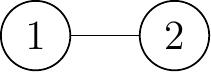}
\end{tabular}
\end{center}
\caption{The two forests built on two vertices}
\label{forest2:fig}
\end{figure}

For $n=3$, we have seven forests (see figure \ref{forest3:fig}) and we get:
\begin{align*}
\phi(1,1,1) & = \phi(0,0,0) + \int_{[0,1]}dt_{12} \left( \frac{\partial \phi}{\partial x_{12}} \right) (t_{12},0,0) \crcr
& +\int_{[0,1]}dt_{23} \left( \frac{\partial \phi}{\partial x_{23}} \right) (0,t_{23},0) 
 +\int_{[0,1]}dt_{13} \left( \frac{\partial \phi}{\partial x_{13}} \right) (0,0,t_{13}) \crcr
& + \int_{[0,1]^{2}} dt_{12}dt_{23}
 \left( \frac{\partial ^{2}\phi}{\partial x_{12}\partial x_{23}} \right) (t_{12},t_{23},\inf(t_{12},t_{23})) 
 \crcr
& + \int_{[0,1]^{2}}  dt_{12}dt_{13}
\left( \frac{\partial ^{2}\phi}{\partial x_{12}\partial x_{13}} \right) (t_{12},\inf(t_{12},t_{13}),t_{13}) \crcr
& + \int_{[0,1]^{2}}  dt_{23}dt_{13}
\left( \frac{\partial ^{2}\phi}{\partial x_{23}\partial x_{13}} \right) (\inf(t_{23},t_{13}),t_{23},t_{13})\; . \nonumber
 \end{align*}

\begin{figure}[htb]
\begin{center}
\begin{tabular}{cccccccccccccc}
\includegraphics[width=1.5cm]{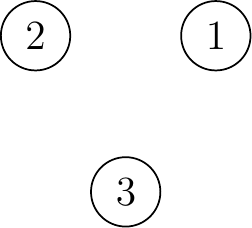}&,&
\includegraphics[width=1.5cm]{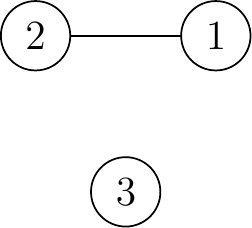}&,&
\includegraphics[width=1.5cm]{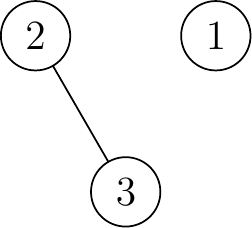}&,&
\includegraphics[width=1.5cm]{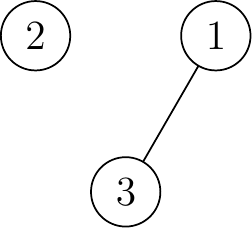}&,&
\includegraphics[width=1.5cm]{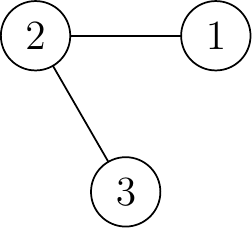}&,&
\includegraphics[width=1.5cm]{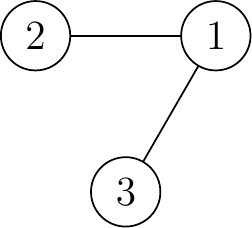}&,&
\includegraphics[width=1.5cm]{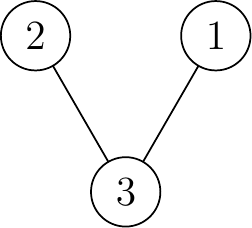}&
\end{tabular}
\end{center}
\caption{The seven forests built on three vertices}
\label{forest3:fig}
\end{figure}

The first term corresponds to the empty forest, the next three to the forests with one edge and the last three to the forests with two edges.

In quantum field theory, the main interest of this formula lies in the fact that it provides an expansion for the partition function 
with sources ${\cal Z}[J]$ as a sum over forests. Its logarithm is then readily computed as a \emph{convergent} 
sum over trees.

\section{Some examples of LVE graphs and their amplitudes}\label{LVEexamples:app}

Here we illustrate how the LVE graph amplitude \eqref{LVEamplitude} is computed on a few examples. 
We use a double line representation for the intermediate field instead of a wavy 
line, $\parbox{2cm}{\includegraphics[width=2cm]{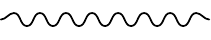}}\rightarrow\parbox{2cm}{\includegraphics[width=2cm]{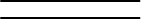}}$

\begin{figure}[htb]
\begin{center}
\includegraphics[width=6cm]{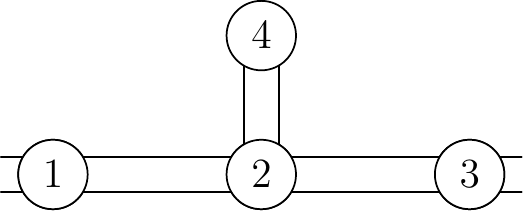}
\end{center}
\caption{A LVE tree}
\label{LVEtree2:fig}
\end{figure}

For the tree in figure \ref{LVEtree2:fig}, the amplitude reads:
\begin{multline}
{\cal A}_{\includegraphics[width=1cm]{LVEtree2.pdf}}=\frac{N(-\lambda)^{4}}{3!}\int_{0}dt_{12}dt_{23}dt_{24}
\int d\mu_{C}(A)\nonumber\\
\Tr\bigg[
\Big(1-\text{i}\sqrt{\frac{\lambda}{N}}A_{3}\Big)^{-1}
\Big(1-\text{i}\sqrt{\frac{\lambda}{N}}A_{2}\Big)^{-1}
\Big(1-\text{i}\sqrt{\frac{\lambda}{N}}A_{4}\Big)^{-1}
\Big(1-\text{i}\sqrt{\frac{\lambda}{N}}A_{2}\Big)^{-1}\nonumber\\
\Big(1-\text{i}\sqrt{\frac{\lambda}{N}}A_{1}\Big)^{-1}
JJ^{\dagger}
\Big(1-\text{i}\sqrt{\frac{\lambda}{N}}A_{1}\Big)^{-1}
\Big(1-\text{i}\sqrt{\frac{\lambda}{N}}A_{2}\Big)^{-1}
\Big(1-\text{i}\sqrt{\frac{\lambda}{N}}A_{3}\Big)^{-1}
JJ^{\dagger}
\bigg] \; ,
\end{multline}
with covariance matrix
\begin{equation}
C=\begin{pmatrix}
1&t_{12}&\inf(t_{12},t_{23})&\inf(t_{12},t_{24})\\
t_{12}&1&t_{23}&t_{24}\\
\inf(t_{12},t_{23})&t_{23}&1&\inf(t_{23},t_{24})\\
\inf(t_{12},t_{24})&t_{24}&\inf(t_{23},t_{24})&1
\end{pmatrix} \;.
\end{equation}

\begin{figure}[htb]
\begin{center}
\includegraphics[width=7cm]{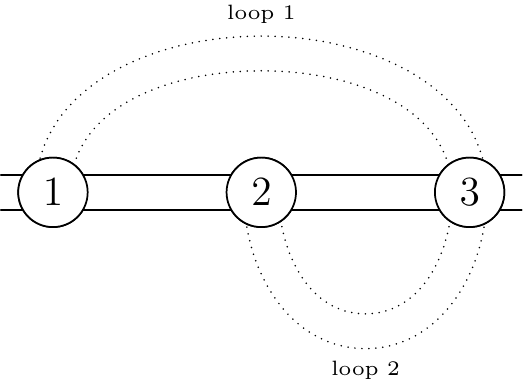}
\end{center}
\caption{A planar LVE graph}
\label{LVEgraph2:fig}
\end{figure}

For the  planar LVE graph in figure \ref{LVEgraph2:fig}, the amplitude reads:
\begin{equation}
\hskip-0.5cm{\cal A}_{\includegraphics[width=1cm]{LVEgraph2.pdf}}=\frac{N^{-1}(-\lambda)^{4}}{3!}
\int_{0}^{1}ds_{1}\int_{0}^{s_{1}}ds_{2}
\int_{0}dt_{12}dt_{23}dt_{24}
\inf(t_{12},t_{23})t_{23}
\int d\mu_{s_2C}(A)
\end{equation}
\vskip-0.7cm
\begin{multline*}
\hskip-0.5cm\textstyle{\Tr\bigg[
\Big(1-\text{i}\sqrt{\frac{ \lambda}{N}}A_{3}\Big)^{-1}
\Big(1-\text{i}\sqrt{\frac{ \lambda}{N}}A_{1}\Big)^{-1}
JJ^{\dagger}
\Big(1-\text{i}\sqrt{\frac{ \lambda}{N}}A_{1}\Big)^{-1}
\Big(1-\text{i}\sqrt{\frac{ \lambda}{N}}A_{2}\Big)^{-1}}\\
\textstyle{\Big(1-\text{i}\sqrt{\frac{ \lambda}{N}}A_{3}\Big)^{-1}
JJ^{\dagger}
\bigg]
\Tr\bigg[
\Big(1-\text{i}\sqrt{\frac{ \lambda}{N}}A_{1}\Big)^{-1}
\Big(1-\text{i}\sqrt{\frac{ \lambda}{N}}A_{2}\Big)^{-1}}\\
\textstyle{\Big(1-\text{i}\sqrt{\frac{ \lambda}{N}}A_{3}\Big)^{-1}
\bigg]
\Tr\bigg[
\Big(1-\text{i}\sqrt{\frac{ \lambda}{N}}A_{2}\Big)^{-1}
\Big(1-\text{i}\sqrt{\frac{ \lambda}{N}}A_{3}\Big)^{-1}
\bigg]} \; ,
\end{multline*}
with covariance matrix:
\begin{equation}
\textstyle{s_2C=s_2\begin{pmatrix}
1&t_{12}&\inf(t_{12},t_{23})\\
t_{12}&1&t_{23}\\
\inf(t_{12},t_{23})&t_{23}&1
\end{pmatrix}} \;. 
\end{equation}

\begin{figure}[htb]
\begin{center}
\includegraphics[width=7cm]{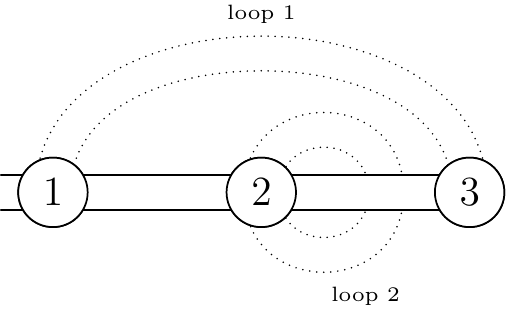}
\end{center}
\caption{A non planar LVE graph}
\label{LVEgraph:fig}
\end{figure}

For the  non planar LVE graph in figure \ref{LVEgraph:fig}, the amplitude reads:
\begin{multline}
{\cal A}_{\includegraphics[width=1cm]{LVEgraph.pdf}}=\frac{N^{-1}(-\lambda)^{4}}{3!}
\int_{0}^{1}ds_{1}\int_{0}^{s_{1}}ds_{2}
\int_{0}dt_{12}dt_{23}dt_{24}
\inf(t_{12},t_{23})
\int d\mu_{s_2C}(A)\nonumber\\
\Tr\bigg[
\Big(1-\text{i}\sqrt{\frac{ \lambda}{N}}A_{1}\Big)^{-1}
\Big(1-\text{i}\sqrt{\frac{ \lambda}{N}}A_{2}\Big)^{-1}
\Big(1-\text{i}\sqrt{\frac{ \lambda}{N}}A_{2}\Big)^{-1}
\Big(1-\text{i}\sqrt{\frac{ \lambda}{N}}A_{1}\Big)^{-1}\nonumber\\
\Big(1-\text{i}\sqrt{\frac{ \lambda}{N}}A_{3}\Big)^{-1}
\Big(1-\text{i}\sqrt{\frac{ \lambda}{N}}A_{2}\Big)^{-1}
\Big(1-\text{i}\sqrt{\frac{ \lambda}{N}}A_{2}\Big)^{-1}\\
\nonumber
\Big(1-\text{i}\sqrt{\frac{ \lambda}{N}}A_{3}\Big)^{-1}
\Big(1-\text{i}\sqrt{\frac{ \lambda}{N}}A_{1}\Big)^{-1}
JJ^{\dagger}
\bigg] \; ,
\end{multline}
with covariance matrix
\begin{equation}
s_2C=s_2\begin{pmatrix}
1&t_{12}&\inf(t_{12},t_{23})\\
t_{12}&1&t_{23}\\
\inf(t_{12},t_{23})&t_{23}&1
\end{pmatrix} \; .
\end{equation}

\section{Analyticity domain for the vector model}

In the case of the vector model, the cardioid can be extended to reach the real negative axis. This result is not new, see \cite{Billionnet}. 
An argument similar to ours, based on the LVE and contour rotation can also be found in \cite{advanced}.

The cumulants of the vector model are defined by the generating ${\cal W}[J,J^{\dagger}]=
\log {\cal Z}[J,J^{\dagger}]$ with
\begin{align}
{\cal Z}[J,J^{\dagger}]=\int d\Phi\exp-\bigg\{
\Phi^{\dagger}\Phi+\frac{\lambda}{2N}\big(\Phi^{\dagger}\Phi\big)^{2}+\sqrt{N}\Phi^{\dagger}J+\sqrt{N} J^{\dagger}\Phi\bigg\}\label{Phiintegral}
\end{align}
where $\Phi\in{ \mathbb{ C} }^{N}$ is a $N$ component vector and $\Phi^{\dagger}\Phi=|\Phi_{1}|^{2}+\cdots+|\Phi_{N}|^{2}$. As for the matrix model, 
the integral is normalized such that ${\cal Z}[J,J^{\dagger}]=1$ at $\lambda=0$. The sources $J$ and $J^{\dagger}$ are also $N$ component complex vectors.

The vector model admits a intermediate field representation based on
\begin{equation}
\exp-\frac{\lambda }{2N}\big(\Phi^{\dagger}\Phi)^{2}
=\int dA \exp-\bigg\{\frac{1}{2}A^{2}-\mathrm{i}\sqrt{\frac{\lambda}{N}}\,A\Phi^{\dagger}\Phi\bigg\}
\end{equation}
where the integral is over a real scalar $A$. Thus, the Gau\ss ian integral over $\Phi$ can be performed
\begin{align}
{\cal Z}[J,J^{\dagger}]&=\int d\Phi dA
\exp-\bigg\{\frac{1}{2}A^{2}+\bigg[ \Phi^{\dagger}\bigg(1-\mathrm{i}\sqrt{\frac{\lambda}{N}}A\bigg)\Phi\bigg]
+\sqrt{N}\Phi^{\dagger}J+\sqrt{N}J^{\dagger}\Phi\bigg\}\\
&=
\int dA
\exp-\bigg\{\frac{1}{2}A^{2}
+N\log\bigg(1-\mathrm{i}\sqrt{\frac{\lambda}{N}}A\bigg)+NJ^{\dagger}J\bigg(1-\mathrm{i}\sqrt{\frac{\lambda}{N}}A
\bigg)^{-1}\bigg\}
\end{align}
Let us notice two differences with respect to the matrix integral eq. \eqref{Aintegral}: There is no power of $N$ in front of the resolvent and $A$ 
and $J^{\dagger}J$ are scalars, so that they can be commuted. However, the perturbative expansion of the vector model in the intermediate field also
involves ribbon graphs, even if all quantities are scalars. This is so because the interaction vertices are based on resolvents, which have a cyclic ordering of their half edges.

Then, we  perform the loop vertex expansion  and expand $\log {\cal Z}[J,J^{\dagger}]$ over ciliated ribbon trees 
\begin{equation}
\log {\cal Z}[J,J^{\dagger}]=
\sum_{T\,\text{tree}}\frac{N(-\lambda)^{(|E(T)|)}(J^{\dagger}J)^{k}}{|V(T)|!}\int dt
\int d\mu_{C_{T}}(A)\prod_{i=1}^{n}\bigg(1-\mathrm{i}\sqrt{\frac{\lambda}{N}}A_{i}\bigg)^{-l_{i}}
\end{equation} 
$n$ is the number of vertices of $T$, $k$ the number of cilia and $l_{i}$ the number of corners attached to vertex $i$. This expression has the same
domain of convergence as the matrix integral, stated in theorem \ref{treeexpansion}.

However, the domain of analyticity can be enlarged. To proceed, let us write the powers of the resolvent as
\begin{equation}
\bigg(1-\mathrm{i}\sqrt{\frac{\lambda}{N}}A\bigg)^{-l}
=\frac{1}{\Gamma(l)(\sqrt{\lambda})^{l}}
\int_{0}^{\infty}d\alpha\,\alpha^{l-1}\exp-\alpha\Big\{{\frac{1}{\sqrt{\lambda}}-\frac{\text{i}}{\sqrt{N}}}A\Big\}
\end{equation}
Inserting this representation in the integral over replicas yields
\begin{align}
\int d\mu_{C_{T}}(A)\prod_{i=1}^{n}\bigg(1-\mathrm{i}\sqrt{\frac{\lambda}{N}}A_{i}\bigg)^{-l_{i}}\!\!\!\!\!\!\!=
\!\!\!\int_{0}^{\infty}\!\!\!{\textstyle  \prod_{i}d\alpha_{i}}\frac{\prod_{i}\alpha_{i} ^{l_{i}-1}}{(\sqrt{\lambda})^{2n-2+k}\prod_{i}\Gamma(l_{i})}
\int d\mu_{C_{T}}(A)
\exp-\sum_{i}\Big\{{\frac{\alpha_{i}}{\sqrt{\lambda}}-\frac{\text{i}\alpha_{i}A_{i}}{\sqrt{N}}}\Big\}
\end{align}
The main simplification in the vector model case is that the integral over the replicas is Gau\ss ian and can be performed explicitly 
\begin{align}
\int d\mu_{C_{T}}(A)
\exp-\sum_{i}\Big\{{\frac{\alpha_{i}}{\sqrt{\lambda}}-\frac{\text{i}\alpha_{i}A_{i}}{\sqrt{N}}}\Big\}
=\exp-\Big\{\sum_{i}{\frac{\alpha_{i}}{\sqrt{\lambda}}+\sum_{i,j}\frac{C_{ij}\alpha_{i}\alpha_{j}}{2N}}\Big\}
\end{align} 
Therefore, the loop vertex expansion of the vector model reads
\begin{equation}
\log {\cal Z}[J,J^{\dagger}]=
\sum_{T}\frac{N(-1)^{(n-1)}(J^{\dagger}J)^{k}}{n!(\sqrt{\lambda})^{k}\prod_{i}\Gamma(l_{i})}\int dt\int d\alpha
\prod_{i}\alpha_{i} ^{l_{i}-1}\exp-\Big\{\sum_{i}{\frac{\alpha_{i}}{\sqrt{\lambda}}+\sum_{i,j}\frac{C_{ij}\alpha_{i}\alpha_{j}}{2N}}\Big\}
\end{equation} 
where we recall that the sum runs over ribbon trees with $n$ labeled vertices and $k$ cilia and that $l_{i}$ is the number of corners attached to vertex $i$.

Next, we write $\lambda=\rho\mathrm{e}^{\mathrm{i}\theta}$and rotate the $\alpha_{i}$ integrations by an angle $\frac{\phi}{2}$ in their  complex 
planes, $\alpha_{i}\rightarrow\mathrm{e}^{\mathrm{i}\frac{\phi}{2}}\alpha_{i}$. This is possible since the integrand is holomorphic in all the $\alpha_{i}$ 
but requires that the integrand goes to 0 on the arcs $\alpha_{i}=R\mathrm{e}^{\mathrm{i}\chi}$ with $\chi\in[0,\frac{\phi}{2}]$ and $R\rightarrow +\infty$. 
This last condition imposes that the real part of the argument of the exponential
be always positive. A sufficient condition is to impose that the linear and quadratic terms are  separately positive, $\cos(\frac{\theta-\phi}{2})>0$ and $\cos\phi>0$. 
After rotation of the contour, the generating function of the cumulants is expanded as
\begin{multline}
\log {\cal Z}[J,J^{\dagger}]=\mathrm{e}^{\mathrm{i}\phi\frac{(2n-2+k)}{2}}
\sum_{T}\frac{N(-1)^{(n-1)}(J^{\dagger}J)^{k}}{n!(\sqrt{\lambda})^{k}\prod_{i}\Gamma(l_{i})} \\
 \times \int dt\int d\alpha
\prod_{i}\alpha_{i} ^{l_{i}-1}\exp-\Big\{\mathrm{e}^{\frac{\mathrm{i}}{2}(\phi-\theta)}\sum_{i}{\frac{\alpha_{i}}{\sqrt{\rho}}+\mathrm{e}^{\mathrm{i}\phi}\sum_{i,j}\frac{C_{ij}\alpha_{i}\alpha_{j}}{2N}}\Big\} \; .
\end{multline} 
Let us emphasize that none of the terms in the sum over trees depend on $\phi$. We may therefore conveniently choose $\phi$ to enlarge the domain of analyticity. 
To this aim, let us bound the exponential as
\begin{align}
\bigg|\exp-\Big\{\mathrm{e}^{\frac{\mathrm{i}}{2}(\phi-\theta)}\sum_{i}{\frac{\alpha_{i}}{\sqrt{\rho}}+\mathrm{e}^{\mathrm{i}\phi}\sum_{i,j}\frac{C_{ij}\alpha_{i}\alpha_{j}}{2N}}\Big\}\bigg|&=
\exp-\Big\{\cos{\textstyle \frac{\phi-\theta}{2}}\sum_{i}{\frac{\alpha_{i}}{\sqrt{\rho}}+\cos\phi\sum_{i,j}\frac{C_{ij}\alpha_{i}\alpha_{j}}{2N}}\Big\}\nonumber\\
&\leq\exp-\Big\{\cos{\textstyle \frac{\phi-\theta}{2}}\sum_{i}\frac{\alpha_{i}}{\sqrt{\rho}}\Big\}
\end{align}
since the covariance is a positive matrix. Now we perform the integral over the Schwinger parameters $\alpha_{i}$
\begin{align}
\big|\log {\cal Z}[J,J^{\dagger}]\big|\leq
\sum_{T}\frac{N|\lambda|^{(n-1)}(J^{\dagger}J)^{k}}{n! \cos^{2n-2+k}\frac{\theta-\phi}{2}}
\end{align} 
Since we have to stay away from the critical half line $\theta=\pm\pi$, the best bound (given the conditions imposed by the positivity of the argument of 
the exponential in the contour rotation) is obtained for 
$\phi=\theta$ for $\theta\in[-\frac{\pi}{2},\frac{\pi}{2}]$
$\phi=\frac{\pi}{2}$ for $\theta\in[0,\frac{\pi}{2}]$ and $\phi=-\frac{\pi}{2}$ for $\theta\in[-\frac{\pi}{2},0]$. Therefore, the generating function is bounded as
\begin{align}
\big|\log {\cal Z}[J,J^{\dagger}]\big|&\leq&
\sum_{T}\frac{N|\lambda|^{(n-1)}(J^{\dagger}J)^{k}}{n!} 
\qquad&\text{for}\quad\text{ $\theta\in[-\frac{\pi}{2},\frac{\pi}{2}]$}\\
\\
\big|\log {\cal Z}[J,J^{\dagger}]\big|&\leq&
\sum_{T}\frac{N|\lambda|^{(n-1)}(J^{\dagger}J)^{k}}{n! \Big[\cos\big(\frac{|\theta|}{2}\!-\!\frac{\pi}{4}\big)\Big]^{2n-2+k}}
\qquad&\text{for}\quad\text{ $\theta\in[-\pi,-\frac{\pi}{2}]\cup[\frac{\pi}{2},\pi]$}\\
\end{align} 
Finally, an argument similar to the one presented in section \ref{LVEsec} and leading to theorem \ref{treeexpansion} 
establishes the following analyticity theorem for the vector model.
 
 \begin{theorem}[Constructive theorem for the vector model]
Cumulants are analytic functions of $\lambda$  inside the curve
\begin{equation}
{\cal C}'=\Big\{\rho\mathrm{e}^{\mathrm{i}\theta}\text{ with $4\rho<1$ for $\theta\in[\frac{-\pi}{2},\frac{\pi}{2}]$ and 
$4\rho< \cos^{2}\!\Big(\frac{|\theta|}{2}\!-\!\frac{\pi}{4}\Big)$ for $\theta\in[-\pi,-\frac{\pi}{2}]\cup[\frac{\pi}{2},\pi]$}\Big\}
\end{equation}
with cut on the negative real axis.
\end{theorem}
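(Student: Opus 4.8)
The plan is to rerun the loop vertex expansion of section~\ref{LVEsec} for the vector model, exploiting the fact that the replica integral is now genuinely Gau\ss ian so that the resolvent moments can be traded for Schwinger parameters and the $A_i$ integrals carried out in closed form; the enlargement of the analyticity domain from ${\cal C}$ to ${\cal C}'$ will then come from a rotation of the Schwinger contours, exactly as sketched in the computation above. Starting from the tree expansion of $\log{\cal Z}[J,J^{\dagger}]$, I would first write, for a ribbon tree $T$ with $n$ labelled vertices, $k$ cilia and $l_i$ corners at vertex $i$ (so that $\sum_i l_i=2(n-1)+k$), each factor $\bigl(1-\mathrm{i}\sqrt{\lambda/N}\,A_i\bigr)^{-l_i}$ as the Schwinger integral recalled above, perform the Gau\ss ian integral over $A=(A_i)_{1\le i\le n}$ with covariance $C_T$ (producing $\exp\{-\sum_{i,j}C_{ij}\alpha_i\alpha_j/2N\}$), and observe that the resulting integrand is holomorphic in each $\alpha_i$ on ${\mathbb C}-{\mathbb R}^-$.

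The key step is the simultaneous rotation $\alpha_i\mapsto e^{\mathrm{i}\phi/2}\alpha_i$. Writing $\lambda=\rho e^{\mathrm{i}\theta}$, this turns $e^{-\alpha_i/\sqrt\lambda}$ into $e^{-e^{\mathrm{i}(\phi-\theta)/2}\alpha_i/\sqrt\rho}$ and the quadratic term into $e^{-e^{\mathrm{i}\phi}\sum_{i,j}C_{ij}\alpha_i\alpha_j/2N}$. I would justify it by a Jordan-type estimate on the quarter discs $\{\alpha_i=Re^{\mathrm{i}\chi}:0\le\chi\le\phi/2\}$: because $C_T$ is a positive matrix, both the linear and the quadratic part of the exponent keep strictly positive real part along the arcs as soon as $\cos\frac{\theta-\phi}{2}>0$ and $\cos\phi>0$, so the arc contributions vanish as $R\to\infty$ and Cauchy's theorem applies term by term. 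After the rotation none of the tree contributions depends on $\phi$; bounding the modulus of the integrand by $\exp\{-\cos\frac{\phi-\theta}{2}\sum_i\alpha_i/\sqrt\rho\}$ (again using $C_T\ge 0$ to drop the quadratic piece) and integrating over the $\alpha_i$ makes the $\Gamma(l_i)$ factors cancel against the prefactor, leaving
\[
\bigl|\log{\cal Z}[J,J^{\dagger}]\bigr|\ \le\ \sum_{T\ \mathrm{tree}}\frac{N\,|\lambda|^{\,n-1}\,(J^{\dagger}J)^{k}}{n!\,\bigl(\cos\tfrac{\phi-\theta}{2}\bigr)^{2n-2+k}}\,.
\]

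To conclude I would bound the number of ribbon trees on $n$ labelled vertices carrying $k$ cilia by the count of Lemma~\ref{coutingtrees:lem}, so that for $(J^{\dagger}J)$ small enough (as in Theorem~\ref{treeexpansion}) the series converges as soon as $4\rho<\bigl(\cos\frac{\phi-\theta}{2}\bigr)^{2}$. Since no term of the sum depends on the free parameter $\phi$ --- constrained only by $|\phi|<\frac\pi2$ and $|\theta-\phi|<\pi$ --- one optimizes: for $|\theta|\le\frac\pi2$ take $\phi=\theta$, giving $4\rho<1$; for $\frac\pi2<|\theta|<\pi$ take $\phi=\operatorname{sign}(\theta)\,\frac\pi2$, giving $4\rho<\cos^{2}\!\bigl(\frac{|\theta|}{2}-\frac\pi4\bigr)$. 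These two conditions describe exactly ${\cal C}'$, and the exclusion of $\theta=\pm\pi$ is the cut on ${\mathbb R}^-$. Finally, each tree amplitude is analytic in $\lambda$ on the cut plane and the bound above is locally uniform on ${\cal C}'$, so $\log{\cal Z}[J,J^{\dagger}]$ is analytic there and, differentiating the convergent series term by term at $J=J^{\dagger}=0$ as in the proof of Proposition~\ref{structure:prop}, so is every cumulant. I expect the main obstacle to be the contour rotation: one must check that the arcs at infinity drop with enough uniformity in $T$ that the rotated series stays absolutely convergent, and that the admissible window for $\phi$ is precisely the one imposed by $\cos\phi>0$ and $\cos\frac{\theta-\phi}{2}>0$, since it is this constraint alone that yields the sharp boundary of ${\cal C}'$.
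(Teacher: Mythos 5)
Your proposal is correct and follows essentially the same route as the paper's appendix: the Schwinger-parameter representation of the resolvent powers, the explicit Gau\ss ian integral over the replicas with covariance $C_T$, the contour rotation $\alpha_i\mapsto e^{\mathrm{i}\phi/2}\alpha_i$ under the constraints $\cos\phi>0$ and $\cos\frac{\theta-\phi}{2}>0$, the bound obtained by dropping the positive quadratic term, and the optimization $\phi=\theta$ for $|\theta|\le\frac{\pi}{2}$, $\phi=\pm\frac{\pi}{2}$ otherwise, combined with the tree-counting estimate to get convergence and analyticity in ${\cal C}'$. No gaps to report.
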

${\cal C}'$ is limited by a portion of a circle and two portions of cardioids, rotated by angles $\pm\frac{\pi}{2}$ as
illustrated on figure \ref{cardioidmodified:fig}.

\begin{figure}[htb]
\begin{center}
\includegraphics[width=7cm]{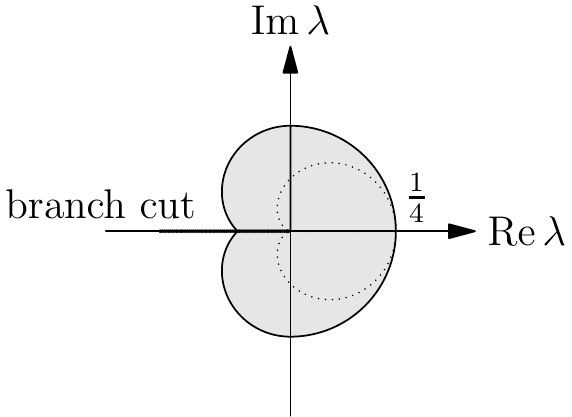}
\caption{analyticity domain for the vector model}
\end{center}
\label{cardioidmodified:fig}
\end{figure}

 Note that the analyticity domain intersects the negative real axis, on which the function has a cut because $\theta=\pm\pi$ has to be excluded.

\end{document}